\documentclass[a4paper]{scrartcl}
\usepackage[utf8]{inputenc}
\usepackage{amsmath,amsthm,amsfonts,amssymb}
\usepackage[bookmarksnumbered=true]{hyperref}
\usepackage{dsfont}
\usepackage{graphicx}
\usepackage{tikz-cd}
\usepackage{mathrsfs}
\usepackage[nottoc,numbib]{tocbibind}


\let\oldmarginpar\marginpar
\renewcommand\marginpar[1]{\-\oldmarginpar[\raggedleft\footnotesize #1]%
  {\raggedright\footnotesize #1}}


\newtheorem{thm}{Theorem}[section]  
\newtheorem*{thm*}{Theorem}

\newtheorem{lem}[thm]{Lemma}

\newtheorem*{rem}{Remark}

\newtheorem*{df}{Dirac-Frenkel Principle for Reduced Density Matrices}


\newcommand{\sone}{{\Sfrak_1}}
\newcommand{\stwo}{{\Sfrak_2}}
\newcommand{\sinf}{{\Bcal}}

\newcommand{\Sfrak}{\mathfrak{S}}
\newcommand{\hds}{\mathds{h}}
\newcommand{\ull}[1]{\underline{\underline{#1}}}

\newcommand{\Jcal}{\mathcal{J}}
\newcommand{\Wbb}{\mathbb{W}}
\newcommand{\gvac}{\Gamma_\text{vac}}
\newcommand{\Mcalqf}{\mathcal{M}}

\newcommand{\Ubb}{\mathbb{U}}
\newcommand{\Acal}{\mathcal{A}}
\newcommand{\Mcal}{\mathcal{M}}

\newcommand{\Scal}{\mathcal{S}}

\newcommand{\Ecal}{\mathcal{E}}
\newcommand{\projqf}{\operatorname{proj}}
\newcommand{\Gcal}{\mathcal{G}}
\newcommand{\gtqf}{\tilde\Gamma^\text{qf}_t}

\newcommand{\proj}{\operatorname{proj}}

\newcommand{\fock}{\mathcal{F}}		
\newcommand{\di}{{\textrm{d}}}		
\newcommand{\Ncal}{\mathcal{N}}		
\newcommand{\Vcal}{\mathcal{V}}		
\newcommand{\Hcal}{\mathcal{H}}		

\newcommand{\Zcal}{\mathcal{Z}}
\newcommand{\cc}[1]{\overline{#1}}	
\newcommand{\Rbb}{\mathbb{R}}		
\newcommand{\Cbb}{\mathbb{C}}		
\newcommand{\Nbb}{\mathbb{N}}		

\renewcommand{\Im}{\operatorname{Im}\,} 	
\newcommand{\id}{\mathds{1}}
\newcommand{\norm}[1]{\lVert#1\rVert}	
\newcommand{\tr}{\operatorname{tr}}


 \newcounter{notodos}
 

\newcommand{\wtK}{K}

\newcommand{\Bcal}{\mathcal{B}}

\newcommand{\Ycal}{\mathcal{Y}}

\newcommand{\oml}{\omega^{(\Lambda)}}
\newcommand{\gaml}{\gamma^{(\Lambda)}}
\newcommand{\Gaml}{\Gamma^{(\Lambda)}}
\newcommand{\alphl}{\alpha^{(\Lambda)}}

\newcommand{\Pl}{P_{\Lambda}}

\newcommand{\eps}{\varepsilon}

\newcommand{\dom}{\operatorname{dom}}

\begin{document}
\title{The Dirac--Frenkel Principle for Reduced Density Matrices, and the Bogoliubov--\\de\,Gennes Equations}

\author{Niels Benedikter, J\'er\'emy Sok, and Jan Philip Solovej}
\maketitle

\begin{abstract}
The derivation of effective evolution equations is central to the study of non-stationary quantum many-body systems, and widely used in contexts such as superconductivity, nuclear physics, Bose--Einstein condensation and quantum chemistry.

We reformulate the Dirac--Frenkel approximation principle in terms of reduced density matrices and apply it to fermionic and bosonic many-body systems. We obtain the Bogoliubov--de\,Gennes and Hartree--Fock--Bogoliubov equations, respectively. While we do not prove quantitative error estimates, our formulation does show that the approximation is optimal within the class of quasifree states. Furthermore, we prove well-posedness of the Bogoliubov--de\,Gennes equations in energy space and discuss conserved quantities.
\end{abstract}

\tableofcontents

\section{Introduction: Effective Evolution Equations}\label{sec:introduction}
The time evolution of the state $\psi_t$ of a system of $N$ quantum particles is described by the time-dependent Schr\"odinger equation
\begin{equation}\label{eq:SE}i \partial_t \psi_t = H_N \psi_t,\quad H_N = \sum_{i=1}^N h_i + \sum_{i<j} V(x_i-x_j),\end{equation}
where $\psi_t \in L^2_\text{a}(\Rbb^{3N})$ for spinless fermions (where the wave function is totally antisymmetric under any exchange of particles) and $\psi_t \in L^2_\text{s}(\Rbb^{3N})$ for bosons (where the wave function is totally symmetric under any exchange of particles).
In this generality, the Schr\"odinger equation models a vast range of physical systems, starting from nucleons in the atomic nucleus over electrons in semiconductors to stars for the fermionic theory, or Bose--Einstein condensates for the bosonic theory, depending on the choice of the one-particle Hamiltonian $h$ (we think of $h=-\Delta + V_\text{ext}(x)$ with some external potential $V_\text{ext}: \Rbb^3 \to \Rbb$; $h_i$ denotes this operator as acting in the variable $x_i \in \Rbb^3$) and the pair interaction $V: \Rbb^3 \to \Rbb$. Unfortunately, these systems also have an enormous number of degrees of freedom, making analytical and numerical solutions generally impossible. For this reason, there is a lot of interest in approximate theories (also  called \emph{effective evolution equations}), which contain fewer degrees of freedom and make analytical and numerical treatments possible. Of course, such theories do not achieve the broad validity of the Schr\"odinger equation and provide a good approximation only in specific physical regimes, which are mathematically modeled as scaling limits. In this paper, we discuss a geometric method for the derivation of effective evolution equations. This method, even though not the most convenient for proving quantitative error estimates for the obtained approximation, directly shows that the obtained equations are optimal as far as the available degrees of freedom permit. The method is also independent of any choice of scaling limit.

For this introduction we focus on (for simplicity of notation spinless) fermionic systems. The corresponding bosonic notions will be introduced in Sect.\ \ref{sec:bosons}. Here we deal only with \emph{pure} quasifree states. Only in Sect.\ \ref{sec:wellposedness} for the topic of well-posedness we also consider mixed states; we will always highlight explicitly when we talk about mixed states.

The most basic approximate theory of fermionic systems is obtained by restricting the Schr\"odinger equation to wave functions that are \emph{Slater determinants},
\[\psi(x_1,\ldots x_N) = (N!)^{-1/2} \det \big( f_{i}(x_j) \big)_{i,j=1}^N,\]
also denoted as the antisymmetrized tensor product $\psi = (N!)^{-1/2} f_{1}\wedge\cdots \wedge f_{N}$, where the one-particle wave functions $f_j$ constitute an orthonormal set in $L^2(\Rbb^3)$. The corresponding effective evolution equation for the Slater determinant  is given by the \emph{Hartree-Fock system} of $N$ nonlinear coupled PDEs for the one-particle wave functions:
\begin{equation}
 \label{eq:TDHFwave function}
 i \partial_t f_{i,t} = h f_{i,t} + \sum_{j=1}^N \left (V \ast \lvert f_{j,t}\rvert^2 \right) f_{i,t} - \sum_{j=1}^N \left (V \ast f_{i,t}\cc{f_{j,t}} \right) f_{j,t}.
\end{equation}
More conveniently, the Hartree-Fock equations can be formulated in terms of the one-particle reduced density matrix.

The \emph{one-particle reduced density matrix} of a state $\psi \in L^2(\Rbb^{3N})$ is defined as the non-negative trace-class operator $\gamma$ on $L^2(\Rbb^3)$ obtained by taking the partial trace over $N-1$ particles of the many-body density matrix $\lvert \psi \rangle \langle \psi \rvert$,
\[\gamma = N \tr_{2,\ldots N} \lvert \psi \rangle \langle \psi \rvert,\]
where we have chosen to normalize the one-particle reduced density matrix such that $\tr \gamma = N$.
If $\psi = (N!)^{-1/2} f_1 \wedge \ldots \wedge f_N$, we find the rank-$N$ projection $\gamma = \sum_{j=1}^N \lvert f_j\rangle\langle f_j\rvert$. Conversely, every rank-$N$ orthogonal projection specifies (uniquely up to a phase) a Slater determinant (just take the spectral decomposition of the projection to find the one-particle wave functions $f_j$).

If the one-particle wave functions have time dependence given by the Hartree-Fock equations \eqref{eq:TDHFwave function}, then $\gamma_t = \sum_{j=1}^N \lvert f_{j,t}\rangle\langle f_{j,t}\rvert$ satisfies the equivalent equation
\begin{equation}
 \label{eq:TDHFonepdm}
 i\partial_t \gamma_t = [h_\text{HF}(\gamma_t),\gamma_t], \quad h_\text{HF}(\gamma_t) = h + V\ast \rho_{\gamma_t} - X_V(\gamma_t),
\end{equation}
where $V\ast \rho_{\gamma_t} (x) = \int \di y\, V(x-y) \gamma_t(y,y)$ is a multiplication operator called the direct term, and $X_V(\gamma_t)(x,y) = V(x-y)\gamma_t(x,y)$ is the integral kernel of an operator called the exchange term.

To generalize Hartree-Fock theory to fermionic systems with pairing (as in superconductivity), we need to introduce Fock space. \emph{Fermionic Fock space} is defined as (the completion of) the direct sum
\[\fock_\text{a} := \Cbb \oplus \bigoplus_{n=1}^\infty L^2_\text{a}(\Rbb^{3n}),\]
i.\,e., elements of Fock space are sequences $\psi = \left(\psi^{(0)},\psi^{(1)}, \psi^{(2)},\ldots \right)$ with $\psi^{(0)}\in \Cbb$ and $\psi^{(n)} \in L^2_\text{a}(\Rbb^{3n})$, having finite norm $\norm{\psi}^2 = \sum_{j=0}^\infty \norm{\psi^{(n)}}^2$. Fock space is a Hilbert space with the scalar product $\langle \psi,\varphi\rangle = \sum_{n=0}^\infty \langle \psi^{(n)},\varphi^{(n)}\rangle$. Obviously, the $N$-particle space $L^2_\text{a}(\Rbb^{3N})$ can be considered as a subspace of Fock space $\fock_\text{a}$, and we frequently use this identification without distinguishing the vectors by notation. On Fock space, we introduce \emph{creation operators} $a^*(f)$ and \emph{annihilation operators} $a(f)$ (where $f \in L^2(\Rbb^3)$, a one-particle wave function) by (the hat indicates omission of the variable)
\[\begin{split}
  \left( a^*(f) \psi \right)^{(n)}(x_1,\ldots ,x_n) & = \frac{1}{\sqrt{n}}\sum_{j=1}^n (-1)^j f(x_j) \psi^{(n-1)}(x_1,\ldots ,\widehat{x_{j}},\ldots,x_n),\\
  \left( a(f) \psi \right)^{(n)}(x_1,\ldots ,x_n) & = \sqrt{n+1}\int \di x\, \cc{f(x)} \psi^{(n+1)}(x,x_1,\ldots,x_n).
  \end{split}
\]
They satisfy the \emph{canonical anti-commutation relations} (CAR), i.\,e.\ \[\{a(f),a(g)\} = 0,\ \{a^*(f),a^*(g)\}=0, \text{ and }\{a(f),a^*(g)\} = \langle f,g\rangle\] for all $f,g \in L^2(\Rbb^3)$. (The definition of the anti-commutator is $\{A,B\} = AB+BA$.) The vector $\Omega = \left(1,0,0, \ldots \right) \in \fock_\text{a}$ is called the \emph{vacuum state} and is in the kernel of all annihilation operators, $a(f)\Omega = 0$ for all $f \in L^2(\Rbb^3)$; it describes a system not containing any particles. It is convenient to  introduce the \emph{operator-valued distributions} $a^*_x$ and $a_x$ with the defining property that (in a weak sense, within expectation values)
\[a(f) = \int \di x\, a_x \cc{f(x)}, \quad a^*(f) = \int \di x\, a^*_x f(x).\]
They satisfy the formal canonical anti-commutation relations $\{a_x,a_y\}=0$, $\{a^*_x,a^*_y\}=0$, and $\{a_x,a^*_y\} = \delta(x-y)$.

The Hamiltonian is generalized to Fock space as
\begin{equation}\label{eq:numberconservinghamiltonian}H \psi = \left( H_n \psi^{(n)}\right)_{n=0}^\infty,\end{equation}
where $H_n$ denotes the first quantized Hamiltonian as given in \eqref{eq:SE}. The Hamiltonian $H$ can also be represented in terms of creation and annihilation operators by\footnote{The second quantization of $h$ is the operator $\di\Gamma(h)$, acting on the $n$-particle component $\psi^{(n)}$ of the Fock space vector $\psi$ as $\di\Gamma(h)\psi^{(n)} = \sum_{i=1}^n h_i \psi^{(n)}$. If $h$ has an integral kernel $h(x,y)$, then it can be written as $\di\Gamma(h) = \int \di x\di y\, h(x,y) a^*_x a_y$.}
\begin{equation}\label{eq:hamiltonian}
 H = \di\Gamma(h) + \frac{1}{2} \int \di x\di y\, V(x-y) a^*_x a^*_y a_y a_x.
\end{equation}
Restricted to $N$-particle states, this Hamiltonian agrees with $H_N$. For the geometric considerations in this paper, the explicit form of $H$ does not need to be specified, as long as it is a self-adjoint operator and conserves the number of particles. (A Hamiltonian conserves the number of particles if it commutes with the particle number operator $\Ncal$ defined by $\Ncal \psi = \left( n \psi^{(n)} \right)_{n=0}^\infty$. In particular all operators of the form \eqref{eq:numberconservinghamiltonian} conserve the particle number.) Only Sect.\ \ref{sec:wellposedness} refers to the particular Hamiltonian \eqref{eq:hamiltonian}.

Using the creation and annihilation operators, the definition of the \emph{one-particle reduced density matrix $\gamma$ is extended} to states $\psi \in \fock_\text{a}$ by defining it to have the integral kernel
\begin{equation}\label{eq:gamma}\gamma(x,y) := \langle \psi, a^*_y a_x \psi\rangle.\end{equation}
We can now give a simple proof that $\gamma \leq \id$: For all $f \in L^2(\Rbb^3)$, using the CAR,
\begin{align*}\langle f,\gamma f\rangle = \langle \psi, a^*(f) a(f) \psi\rangle & \leq \langle \psi, a^*(f)a(f) + a(f)a^*(f)\psi\rangle \\ & = \langle \psi, \{ a^*(f),a(f)\} \psi\rangle = \langle f,f\rangle.\end{align*}

Slater determinants are a special case of a class of more general states in Fock space called quasifree states (see e.\,g.\ \cite{Solovej} for a very readable introduction). The defining property of \emph{quasifree states} is that they are exactly those states $\psi \in \fock_\text{a}$ for which the \emph{Wick theorem} holds, i.\,e.\ expectation values of creation and annihilation operators can be reduced to the sum of the expectation values of all possible pairings of just two operators (with the sign being the sign of the corresponding pairing); for example
\begin{align*}\langle \psi, a^\natural_1 a^\natural_2 a^\natural_3 a^\natural_4 \psi\rangle & = \langle \psi,a^\natural_1 a^\natural_2 \psi \rangle \langle \psi, a^\natural_3 a^\natural_4 \psi \rangle - \langle \psi, a^\natural_1 a^\natural_3 \psi \rangle \langle \psi, a^\natural_2 a^\natural_4 \psi \rangle \\& \quad + \langle \psi,a^\natural_1 a^\natural_4 \psi \rangle \langle \psi, a^\natural_2 a^\natural_3\psi \rangle.\end{align*}
(Here we used the notation $a^\natural_j$ to denote an operator without specifying whether it is a creation or annihilation operator.)
Notice that the Wick theorem allows us to express any expectation value of creation and annihilation operators in a quasifree state purely in terms of the one-particle reduced density matrix $\gamma$ and the \emph{pairing density}
\begin{equation}\label{eq:alpha}\alpha(x,y) := \langle \psi, a_y a_x \psi\rangle.\end{equation}
Slater determinants are exactly those quasifree states for which the pairing density identically vanishes, $\alpha = 0$. Notice that for a general quasifree state $\gamma$ is not a rank-$N$ projection; instead a quasifree state always satisfies
\begin{equation}\label{eq:quasifree}
\gamma^2 - \gamma = \alpha \cc{\alpha} \quad \text{and}\quad \cc{\alpha}\gamma = \cc{\gamma} \cc{\alpha}.
\end{equation}
Given $\gamma$ and $\alpha$ satisfying \eqref{eq:quasifree}, there is a (up to a phase unique) quasifree state $\psi \in \fock_\text{a}$ such that \eqref{eq:gamma} and \eqref{eq:alpha} hold. 

There is a more compact way of writing the equations \eqref{eq:quasifree} by introducing the generalized one-particle reduced density matrix $\Gamma$. The \emph{generalized one-particle reduced density matrix} $\Gamma$ is an operator on $L^2(\Rbb^3) \oplus L^2(\Rbb^3)$ given by
\begin{equation}\label{eq:generalized1pdm} \Gamma = \left(\begin{array}{cc}
             \gamma & \alpha\\ -\cc{\alpha} & 1-\cc{\gamma}
            \end{array}\right).
\end{equation}
The characterization \eqref{eq:quasifree} of quasifree states is equivalent to $\Gamma$ being an orthogonal projection, $\Gamma^2 = \Gamma = \Gamma^*$. Any (not necessarily quasifree) generalized one-particle reduced density matrix has the property
\begin{equation}\label{eq:gen1pdm}
 0 \leq \Gamma \leq \id, \quad \text{and thus } \Gamma^2 \leq \Gamma.
\end{equation}

In the theory of superconductivity, the pairing density is interpreted as the wave function of electrons that have formed Cooper pairs, which in many ways behave like bosons. These Cooper pairs are seen as the carriers of the superconducting current that has attracted so much attention for its technological applicability in the dissipationless transport of electricity.

In this paper, our focus lies on the effective evolution equation obtained by restriction of the many-body evolution to quasifree states with pairing. This system of effective evolution equations is known as
the Bogoliubov--de\,Gennes equations
\begin{equation}
 \label{eq:TDBCS}
 \begin{split}
 i \partial_t \gamma_t & = [h_\text{HF}(\gamma_t),\gamma_t] - \Pi_V(\alpha_t) \cc{\alpha_t} - \alpha_t \Pi_V(\alpha_t)^*, \\
 i\partial_t \alpha_t & = h_\text{HF}(\gamma_t) \alpha + \alpha \cc{h_\text{HF}(\gamma_t)} + \Pi_V(\alpha_t)(1-\cc{\gamma_t}) - \gamma_t \Pi_V(\alpha_t),
 \end{split}
\end{equation}
with $h_\text{HF}(\gamma_t)$ as defined in \eqref{eq:TDHFonepdm} and the operator $\Pi_V(\alpha_t)$ defined through its integral kernel $\Pi_V(\alpha_t)(x,y) := V(x-y) \alpha_t(x,y)$ (notice that $\Pi_V(\alpha_t)^* = - \Pi_V(\cc{\alpha_t})$).
More compactly, $\gamma_t$ and $\alpha_t$ satisfy \eqref{eq:TDBCS} if and only if the generalized one-particle density matrix $\Gamma_t$ satisfies
\begin{equation}\label{eq:TDBCSgeneralized}i \partial_t \Gamma_t = [F_{\Gamma_t},\Gamma_t];\end{equation}
as in \cite{HS-BCS,Hainzletal} we use the generalized Hartree-Fock operator
\begin{equation}\label{eq:genhfop}F_{\Gamma_t} = \left( \begin{array}{cc}
                          h_\text{HF}(\gamma_t) & \Pi_V(\alpha_t)\\ \Pi_V(\alpha_t)^* & - \cc{h_\text{HF}(\gamma_t)}
                         \end{array}
 \right)\end{equation}
 on $L^2(\Rbb^3)\oplus L^2(\Rbb^3)$.
 The Bogoliubov--de\,Gennes equations for fermionic systems are sometimes also called the generalized Hartree-Fock equations or fermionic Hartree--Fock--Bogoliubov equations; the usual Hartree-Fock equations correspond to the Bogoliubov--de\,Gennes equations with $\alpha =0$. By restricting to physical regimes where direct and exchange term are negligible, and by including electron spin, one obtains the time-dependent BCS equations (named after Bardeen, Cooper, and Schrieffer), which describe the dynamics of electrons and Cooper pairs in superconductors.
 
 In the present paper, our goal is to
formulate a systematic approximation principle by which we can obtain the Bogoliubov--de\,Gennes equations from many-body quantum theory. The approximation principle we establish is a reformulation of the Dirac--Frenkel principle in the space of reduced density matrices, and it yields the equations sometimes called the quasifree reduction principle. Applying the quasifree reduction principle to the Hamiltonian \eqref{eq:hamiltonian}, one obtains the Bogoliubov--de\,Gennes equations. Afterward we study well-posedness and conserved quantities for the Bogoliubov--de\,Gennes equations.

While it is general knowledge that the quasifree reduction principle should be a consequence of the Dirac--Frenkel principle, we are not aware of a direct proof having appeared before; in particular the formulation of the Dirac--Frenkel principle in terms of reduced density matrices has not been given before. Among the advantages of our approach is that it shows that the obtained effective equations describe the optimal evolution possible within the approximation manifold.

\medskip

\noindent \emph{Earlier results.} The derivation of effective evolution equations for many-body systems has attracted a lot of attention in the community of mathematical physics and can be seen as a cornerstone of non-equilibrium statistical mechanics. Consequently, the literature is vast and we cannot claim to provide a complete overview. Let us say so much, that the geometric approximation principle on which we build in this paper goes back to the founding fathers of quantum mechanics \cite{Dirac, Frenkel}. A rigorous mathematical discussion and highly valuable presentation has been given in \cite{Lubich}.

The next step after the geometric derivation of the correct effective equation lies in the proof of convergence toward the effective equation and then the derivation of quantitative error bounds in given physical regimes modeled as scaling limits. This topic has attracted a lot of attention in recent years. For bosonic systems, many such results on the approximation of reduced density matrices have been proven for the mean-field model \cite{Mauser,Yau,RodSchlein,CLS,Pickl,Pickl2,Sohinger,Adami,Kirk,FKS,Anapolitanos} and the Gross--Pitaevskii model \cite{Bdos,EESY,ESY1,ESY2,ESY3,ESY4,PicklGP1,PicklGP2}. For fermionic systems, most results have only appeared in the last few years. The main regimes treated here are the mean-field regime on short time scales \cite{Golse,Frohlich2011}, the mean-field regime with slow variation of the effective interaction potential \cite{BBPPT,PP,Petrat}, and the combined mean-field/semi-classical limit for high-density systems \cite{NarnhoferSewell,SpohnVlasov,Elgart,BPS1,BPS2,BPSSJ3,RSSP}. The papers cited in this paragraph do not use the Dirac--Frenkel principle but other methods that have been specifically developed for many-body systems, like the BBGKY hierarchy, coherent states, a Schwinger-Dyson expansion or counting the number of particles well-described by the effective evolution equation. Some applications of the Dirac--Frenkel principle with explicit error estimates can be found in \cite{Lubich}; another example is \cite{Griesemer}.

In the context of proving convergence toward effective equations for bosonic systems in mean-field and Gross--Pitaevskii scaling limits, equations of the Hartree--Fock--Bogoliubov-type appear when considering second-order corrections beyond the above results on approximation of reduced density matrices, that is approximations in the norm of the many-body Hilbert space \cite{marcin1,marcin2,marcin3,marcin4,marcin5,marcin6}. The difference to our discussion here is that we are just interested in the approximation of reduced density matrices, not in the norm of the many-body Hilbert space, but instead we focus on ensuring optimality of the derived effective equations. Moreover, while the scaling regimes are crucial for obtaining convergence and quantitative error estimates, in the qualitative geometric approach that we use, it is not necessary to specify a particular scaling limit.

For fermionic systems, the derivation of quantitative error bounds in appropriate scaling limits for the Bogoliubov--de\,Gennes equations or even the BCS equations remains an open problem.

\medskip

\noindent \emph{Organization of the paper.} In Sect.\ \ref{sec:diracfrenkel} we shall recall the variational principle of Dirac and Frenkel, which is at the base of our paper. In Sect.\ \ref{sec:quasifreereduction} we recall the principle of quasifree reduction, which is less fundamental and less general than the Dirac--Frenkel principle, but more convenient for explicit calculations. In Sect.\ \ref{sec:hartreefock} we give a re-derivation of the Hartree-Fock equation as the simplest example to introduce our formulation of the Dirac--Frenkel principle in the space of one-particle reduced density matrices. In Sect.\ \ref{sec:BCS} we use our formulation of the Dirac--Frenkel principle for systems with pairing, yielding the time-dependent Bogoliubov--de\,Gennes equations. In Sect.\ \ref{sec:bosons}, we present the analogous formulation of the Dirac--Frenkel principle for bosonic systems. Finally, in Sect.\ \ref{sec:wellposedness} we discuss well-posedness of the time-dependent fermionic Bogoliubov--de\,Gennes equations.

\section{Approximation Principles}
In this section we first introduce the Dirac--Frenkel principle, which can be seen as the fundamental principle for deriving optimal effective evolution equations. Afterwards we introduce the principle of quasifree reduction which does not exhibit the optimality but leads to the same results as the Dirac--Frenkel principle and is calculationally simpler.
\subsection{The Dirac--Frenkel Variational Principle} \label{sec:diracfrenkel}
In this section we discuss the abstract formulation of the Dirac--Frenkel variational principle for the approximation of the time-dependent Schr\"odinger equation by projection onto a submanifold. The Dirac--Frenkel variational principle is particularly interesting for its clear geometrical content, which shows that the obtained equation on the submanifold is the optimal choice. We follow \cite[Chapter II]{Lubich}.

We consider the Schr\"odinger equation as an evolution equation in a complex Hilbert space $\mathcal{H}$. We use the convention that the scalar product is anti-linear in the first and linear in the second argument. The Hamiltonian $H$ is a self-adjoint operator on $\mathcal{H}$. The Schr\"odinger equation reads
\begin{equation}\label{eq:schrodinger}
 \partial_t \psi_t = \frac{1}{i}H \psi_t.
\end{equation}

Now consider a smooth (typically infinite dimensional) submanifold $\Mcal$ of $\mathcal{H}$. The tangent space of $\Mcal$ in the point $u \in \Mcal$ is denoted by $T_u\Mcal$; it consists of the derivatives $u'_0$ of all differentiable paths $t \mapsto u_t$ passing through $u_0=u$.

We are interested in approximating the solution $\psi_t$ of the Schr\"odinger equation with a path $u_t$ on the manifold $\Mcal$, assuming that initially $\psi_0 = u_0 \in \Mcal$. As pointed out already by Dirac, the path $t \mapsto u_t$ is to be chosen such that at every time $t$ the derivative $u'_t \in T_{u_t}\Mcal$ is as close as possible to $\frac{1}{i}H u_t$; in other words, the path is determined by choosing its derivative as the orthogonal projection of $\frac{1}{i}H u_t$ onto the tangent space:
\begin{equation}\label{eq:DFprinciple}
 \partial_t u_t = P(u_t) \frac{1}{i} Hu_t,
\end{equation}
$P(u_t)$ being the orthogonal projection from $\mathcal{H}$ onto $T_{u_t}\Mcal$. From this formulation it is clear that the Dirac--Frenkel principle yields the effective evolution equation which at every infinitesimal time step is optimal.

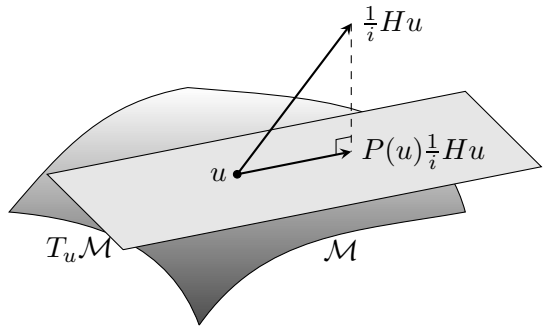
\begin{figure}\centering
\begin{tikzpicture}
 \draw [bottom color=white!30!black, top color=white] (3,1) to [out=55,in=190] (6.5,2.5) to [out=105,in=355] (2.85,4.15) to [out=190,in=50] (0.5,2.5) to [out=355,in=110] (3,1);
 \draw [fill=black!10!white](2,2) -- (1,3)--(6.5,4.1) --( 7.5,3.1) -- (2,2);
 \draw [thick,->,>=stealth] (3.5,3)--(5,5);
 \draw [thick,->,>=stealth] (3.5,3)--(5,3.3);
 \draw [fill]	(3.5,3) circle [radius=0.05];
 \draw [dashed] (5,5)--(5,3.3);
 \draw (5,3.5) -- (4.8,3.46) -- (4.8,3.26);
 \node [right] at (5,5) {$\frac{1}{i}Hu$};
 \node [left] at (2,2) {$T_u \Mcal$};
  \node [right] at (5,3.3) {$P(u)\frac{1}{i}Hu$};
  \node [left] at (3.5,3) {$u$};
  \node [right] at (4.5,2) {$\Mcal$};
\end{tikzpicture}
 \caption{The Dirac--Frenkel principle: Consider $u \in \Mcal$. At every ``time step'', the tangent $\frac{1}{i}Hu$ of the exact evolution  is orthogonally projected into the tangent space $T_u\Mcal$, yielding the optimal effective evolution in the approximation manifold $\Mcal$. {\footnotesize Figure following \cite{Lubich}.}}
\end{figure}

In the case of fermionic many-body systems, one typically chooses $\mathcal{H}= L^2_\text{a}(\Rbb^{3N})$ (for the case of no pairing) and $\Mcal$ as the set of $N$-particle Slater determinants.  While this approach does yield the time-dependent Hartree-Fock equations \eqref{eq:TDHFwave function}, it is not expected to ever do so with controllable errors: it is a general fact that in many-body systems, the norm of many-body wave functions as a measure of distance has unfortunate behavior as the number of particles grows. In fact, the quantitative derivation of effective equations for many-body systems in appropriate scaling limits is typically proven in terms of the trace norm or Hilbert--Schmidt norm of reduced density matrices (see the overview of results in Sect.\ \ref{sec:introduction}). For this reason, and to make the connection to the principle of quasifree reduction, in this paper we formulate the Dirac--Frenkel principle in the space of one-particle density matrices. This formulation will be calculationally very convenient when we consider quasifree states with pairing.

\subsection{The Principle of Quasifree Reduction}\label{sec:quasifreereduction}
The principle of quasifree reduction appears to be the computationally most accessible principle for deriving effective equations for many-body quantum systems. It applies to the particular case were the approximation manifold is given by a class of quasifree states. Typically it is formulated directly in the language of reduced density matrices.

The principle of quasifree reduction asserts that for fermionic systems the effective evolution equations are 
\begin{equation}\label{eq:quasifreereduction}\begin{split}
   \partial_t \gamma_t(x,y) & = \langle \psi_t^{\text{qf}}, [a^*_y a_x, \frac{1}{i}H] \psi_t^{\text{qf}} \rangle,\\
   \partial_t \alpha_t(x,y) & = \langle \psi_t^{\text{qf}}, [a_y a_x,\frac{1}{i}H]\psi_t^{\text{qf}} \rangle,
  \end{split}
\end{equation}
where $\psi_t^{\text{qf}} \in \fock_\text{a}$ is the quasifree state uniquely (up to a phase) assigned to $\gamma_t$ and $\alpha_t$.

For bosonic systems the equations proposed as the quasifree reduction principle, including a condensate $\varphi_t \in L^2(\Rbb^3)$, are the following \cite{Bachetal}
\begin{equation}\label{eq:bosonicqfreduction}
 \begin{split}
  \partial_t \varphi_t(x) & = \langle \psi^\text{bog}_t, [a_x,\frac{1}{i}H] \psi^\text{bog}_t \rangle,\\
  \partial_t \gamma_t(x,y) & = \langle \psi^\text{bog}_t, [a^*_y a_x,\frac{1}{i}H] \psi^\text{bog}_t \rangle, \\
  \partial_t \alpha_t(x,y) & = \langle \psi^\text{bog}_t, [a_y a_x,\frac{1}{i}H] \psi^\text{bog}_t \rangle, \\
 \end{split}
\end{equation}
where $\psi^\text{bog}_t$ in the bosonic Fock space is the Bogoliubov state associated with $(\varphi_t,\gamma_t,\alpha_t)$ (see \eqref{eq:bogoliubovstate} for the definition).

While this principle is easy to formulate, calculationally accessible and has been frequently used in many contexts (e.\,g.\ very recently to derive the Hartree--Fock--Bogoliubov equations for bosonic systems \cite{Bachetal} and the Bogoliubov--de\,Gennes equations for fermionic systems \cite{ChenSigal}), it is not completely obvious that it is a consequence of the more fundamental Dirac--Frenkel principle. Maybe more severely, it is not at all clear whether the quasifree reduction principle yields the optimal approximation possible within the manifold of quasifree states. 

In the present paper, we prove that the principle of quasifree reduction does follow directly from the Dirac--Frenkel principle, in particular showing that it yields the optimal effective evolution equations.

\section{Derivation of the Quasifree Reduction Principle for Fermions}
In this section we derive the principle of quasifree reduction from a re-formulation of the Dirac--Frenkel principle. We first sketch the instructive case of systems without pairing before generalizing to fermionic systems that also exhibit pairing.
\subsection{Fermionic Systems without Pairing}\label{sec:hartreefock}
Here we shall warm up with the case of no pairing ($\alpha = 0$), i.\,e.\ giving a derivation of the standard Hartree-Fock equation. To this end we shall formulate the Dirac--Frenkel principle in terms of the one-particle reduced density matrix and from there, derive the principle of quasifree reduction. To our knowledge, this is the first formulation of the Dirac--Frenkel principle in terms of reduced density matrices. Notice that the derivation of the quasifree reduction principle does not make use of any particular form of the Hamiltonian; we assume only that it commutes with the particle number operator (i.\,e.\ the number of particles is conserved along the many-body evolution).

\medskip

\noindent We consider the Schr\"odinger equation in the space $L^2_\text{a}(\Rbb^{3N})$, i.\,e.\ describing a fermionic system of $N$-particles. The many-body evolution $t \mapsto \psi_t$ induces an evolution of the associated one-particle reduced density matrix $\gamma_{\psi_t}$, which satisfies
\begin{equation}
\label{eq:manybodyevolution}
\partial_t \gamma_{\psi_t} = N \tr_{2,\ldots N} \left[\frac{1}{i}H,\lvert\psi_t\rangle\langle \psi_t\rvert\right].\end{equation}
The one-particle reduced density matrix is a non-negative operator on $L^2(\Rbb^3)$. Since our system has a finite number of particles, $\tr \gamma_{\psi_t} = N$, the one-particle reduced density matrix is a Hilbert--Schmidt operator. We thus choose the ambient Hilbert space in which $\gamma_{\psi_t}$ lives to be
\[\Hcal := \{\gamma \in \mathfrak{S}_2(L^2(\Rbb^3)): \gamma = \gamma^*\}.\]
Due to the condition of self-adjointness this is only a \emph{real-linear} (instead of complex-linear) space; in the following all spaces are real-linear only.
Corresponding to Slater determinants in the wave function picture, we choose our approximation manifold to be given by orthogonal projections,
\[\Mcal := \{\gamma \in \Hcal: \gamma^2 = \gamma\}.\]
This is an infinite dimensional Hilbert submanifold of the self-adjoint Hilbert--Schmidt operators. The effective evolution equation is to be found in $\Mcal$. This is achieved in the optimal way by applying the Dirac--Frenkel principle reformulated in terms of the one-particle reduced density matrix.

\begin{df} The effective evolution equation within the submanifold $\Mcal$ is given by
\begin{equation}\label{eq:dfprinc}\partial_t \gamma_t  = \proj(\gamma_t) N \tr_{2,\ldots N} \big[\frac{1}{i}H, \lvert \psi^{\text{qf}}_t \rangle \langle \psi^{\text{qf}}_t\rvert\big],\end{equation}
where $\psi^\text{qf}_t$ is the state uniquely (up to the phase) associated with $\gamma_t$, and $\proj(\gamma_t): \Hcal \to T_{\gamma_t}\Mcal$ is the projection onto the tangent space in the point $\gamma_t$.
\end{df}

We start by determining the tangent space and the projection onto the tangent space.
\begin{lem}[Tangent Space, No Pairing]\label{lem:tangentspacenopairing}
The tangent space in a point $\gamma \in \Mcal$ is
\[T_\gamma \Mcal = \{A \in \Hcal: \gamma A\gamma =0 = (1-\gamma)A(1-\gamma)\}.\]
The orthogonal projection from $\Hcal$ onto $T_\gamma\Mcal$ is given by
\[\proj(\gamma): A \mapsto \gamma A(1-\gamma) + (1-\gamma)A\gamma = [[A,\gamma],\gamma].\] 
\end{lem}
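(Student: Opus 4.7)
The plan is to establish the two assertions separately: first the characterization of $T_\gamma\Mcal$, then the formula for $\proj(\gamma)$. For the tangent space, I would show both inclusions; the ``$\subseteq$'' direction is a routine differentiation of the defining constraint $\gamma^2 = \gamma$, while the ``$\supseteq$'' direction requires constructing an explicit smooth path in $\Mcal$ through $\gamma$ whose initial velocity is a prescribed self-adjoint operator $A$ satisfying the stated sandwich conditions. For the projection formula, I would verify that the map $P(A) := \gamma A(1-\gamma) + (1-\gamma)A\gamma$ is self-adjoint and idempotent on $\Hcal$ with image exactly $T_\gamma\Mcal$, is symmetric with respect to the real Hilbert--Schmidt inner product, and coincides with the double commutator.

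First I would differentiate $\gamma_t^2 = \gamma_t$ along any smooth path in $\Mcal$ with $\gamma_0 = \gamma$ and $\dot\gamma_0 = A$, obtaining $A\gamma + \gamma A = A$. Sandwiching with $\gamma$ on both sides yields $\gamma A\gamma = 0$, and sandwiching with $1-\gamma$ on both sides (using $\gamma(1-\gamma) = 0$) yields $(1-\gamma)A(1-\gamma) = 0$. For the converse, given such an $A$, I would set $X := [\gamma, A]$, which is skew-adjoint because $\gamma$ and $A$ are self-adjoint, and consider the path $\gamma_t := e^{-tX}\gamma e^{tX}$. Unitary conjugation preserves the projection property and the Hilbert--Schmidt class, so $\gamma_t \in \Mcal$; its derivative at $t=0$ equals $[\gamma, X] = [\gamma, [\gamma, A]]$, which I expect to expand to $\gamma A + A\gamma - 2\gamma A\gamma$. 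Under the two sandwich conditions, this simplifies to $A$, since together they are equivalent to $A = \gamma A + A\gamma$ with $\gamma A\gamma = 0$.

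For the projection, I would check four properties of $P$. Self-adjointness of $P(A)$ in $\Hcal$ is immediate by inspection; the sandwich identities $\gamma P(A)\gamma = 0 = (1-\gamma)P(A)(1-\gamma)$ follow from $\gamma(1-\gamma) = 0$, placing $P(A)$ in $T_\gamma\Mcal$. For $A \in T_\gamma\Mcal$ already, the identity $A = \gamma A + A\gamma$ derived above gives $P(A) = \gamma A + A\gamma - 2\gamma A\gamma = A$, so $P$ restricts to the identity on $T_\gamma\Mcal$ and in particular $P^2 = P$. Symmetry with respect to the real Hilbert--Schmidt inner product follows from cyclicity of the trace applied to
\[
\tr(P(A) B) = \tr(A \gamma B(1-\gamma)) + \tr(A(1-\gamma)B\gamma) = \tr(A\, P(B)).
\]
Finally, the identification with the double commutator is a direct expansion: $[[A,\gamma],\gamma] = A\gamma^2 - 2\gamma A\gamma + \gamma^2 A = A\gamma + \gamma A - 2\gamma A\gamma = P(A)$.

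The main obstacle is the reverse inclusion for $T_\gamma\Mcal$, namely realizing every algebraically admissible $A$ as the velocity of an actual path in $\Mcal$. The skew-adjoint generator $X = [\gamma, A]$ is the natural candidate because it automatically makes $e^{-tX}\gamma e^{tX}$ a path of unitarily equivalent orthogonal projections, so the path stays in $\Mcal$ without further work; once this is in place, matching the derivative reduces to the purely algebraic identity $[\gamma,[\gamma,A]] = A$ on tangent vectors. All remaining steps are routine bookkeeping with the idempotents $\gamma$ and $1-\gamma$.
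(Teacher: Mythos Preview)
Your proposal is correct and follows essentially the same approach as the paper: the forward inclusion via differentiating $\gamma_t^2=\gamma_t$, the reverse inclusion via the unitary path generated by the skew-adjoint commutator $[\gamma,A]$ (the paper uses $B=[A,\gamma]=-X$, which is the same construction), and the projection verified by direct algebra. The only difference is that you spell out the projection properties (idempotence, symmetry in the Hilbert--Schmidt inner product, double-commutator identity) in more detail than the paper, which simply asserts them as easy to check.
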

\begin{proof}
 Let $A \in T_\gamma\Mcal$. By definition there exists a differentiable curve $t \mapsto \gamma_t$ in $\Mcal$ such that $\gamma_0 = \gamma$ and $\gamma'_0 = A$. (By definition of differentiability in the norm of the ambient Hilbert space $\Hcal$, $A$ is a Hilbert--Schmidt operator.) Taking the derivative of the projection condition $\gamma_t^2 = \gamma_t$, we find $A \gamma + \gamma A = A$. Multiplying from the left and right by $\gamma$, we get $2 \gamma A \gamma = \gamma A \gamma$, so $\gamma A\gamma = 0$. Furthermore, by multiplying it from the left and the right by $(1-\gamma)$, we get $0 = (1-\gamma)A(1-\gamma)$. Trivially $A^* = A$ since $\gamma_t^* = \gamma_t$.
 
 Conversely, let $(1-\gamma)A(1-\gamma) = 0$ and $\gamma A\gamma = 0$. Then $B := [A,\gamma]$ is anti-self-adjoint and Hilbert--Schmidt, so $e^{tB}$ is a unitary. Now let $\gamma_t := e^{tB} \gamma e^{-tB}$. This is a curve of orthogonal projections with $\gamma_0 = \gamma$. Its derivative at zero is
 \[\gamma'_0 = [B,\gamma] = A\gamma - 2 \gamma A \gamma + \gamma A = (1-\gamma)A\gamma + \gamma A(1-\gamma) = A,\]
 showing that $A \in T_\gamma\Mcal$.
 
 It is easy to see that $\proj(\gamma)$ is a projection and has the claimed image.
\end{proof}
The manifold of orthogonal projections $\Mcal$ has several connected components, corresponding to the value $\tr \gamma \in \Nbb$. Clearly, any differentiable curve always stays within the same connected component, so we do not have to worry about this.

\begin{thm}[Derivation of the Quasifree Reduction Principle, No Pairing]\label{thm:nopairing}
The Dirac--Frenkel principle \eqref{eq:dfprinc}, with $\Hcal$ and $\Mcal$ as chosen above, is equivalent to the quasifree reduction principle without pairing ($\alpha = 0$):
 \begin{equation}
 \label{eq:thm1}
 \partial_t \gamma_t  = N \tr_{2,\ldots N} \big[\frac{1}{i}H, \lvert \psi^{\text{qf}}_t \rangle \langle \psi^{\text{qf}}_t\rvert\big],\end{equation}
 where $\psi^\text{qf}_t$ is the Slater determinant uniquely (up to the phase) associated with $\gamma_t$. (Equation \eqref{eq:thm1} is the same as equation \eqref{eq:quasifreereduction} but written without the use of operator-valued distributions.)
\end{thm}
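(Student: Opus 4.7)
The plan is to show that the operator
\[
A \;:=\; N\tr_{2,\ldots,N}\bigl[\tfrac{1}{i}H,\lvert\psi_t^{\text{qf}}\rangle\langle\psi_t^{\text{qf}}\rvert\bigr]
\]
already lies in the tangent space $T_{\gamma_t}\Mcal$; then the projection $\proj(\gamma_t)$ appearing in \eqref{eq:dfprinc} acts as the identity on $A$, so the two evolution equations coincide. Differentiating the Fock space expression \eqref{eq:gamma} along the many-body flow shows that $A$ has integral kernel $A(x,y) = \langle\psi_t^{\text{qf}},[a_y^* a_x,\tfrac{1}{i}H]\psi_t^{\text{qf}}\rangle$, and the smearing identity $\int\!dx\,dy\,\overline{g(x)}f(y)\,a_y^*a_x = a^*(f)a(g)$ gives
\[
\langle g, A f\rangle \;=\; \tfrac{1}{i}\big\langle\psi_t^{\text{qf}},\,[a^*(f)a(g),H]\,\psi_t^{\text{qf}}\big\rangle
\qquad\text{for all }f,g\in L^2(\Rbb^3).
\]
Self-adjointness $A=A^*$ is immediate.

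By Lemma \ref{lem:tangentspacenopairing}, the condition $A\in T_{\gamma_t}\Mcal$ is equivalent to the vanishing of $\langle g,Af\rangle$ whenever $f,g\in\range\gamma_t$ (giving $\gamma_t A\gamma_t=0$) and whenever $f,g\in\ker\gamma_t$ (giving $(1-\gamma_t)A(1-\gamma_t)=0$). The defining property of the Slater determinant $\psi_t^{\text{qf}}$ associated with $\gamma_t$ is that $a^*(f)\psi_t^{\text{qf}}=0$ for $f\in\range\gamma_t$ and $a(f)\psi_t^{\text{qf}}=0$ for $f\in\ker\gamma_t$. In the first case, the CAR $\{a(g),a^*(f)\}=\langle g,f\rangle$ together with $a^*(f)\psi_t^{\text{qf}}=0$ gives $a^*(f)a(g)\psi_t^{\text{qf}} = \langle g,f\rangle\,\psi_t^{\text{qf}}$, and the adjoint of the analogous identity with $f,g$ swapped (using $a^*(g)\psi_t^{\text{qf}}=0$ as well and $\overline{\langle f,g\rangle}=\langle g,f\rangle$) shows that $\langle\psi_t^{\text{qf}},a^*(f)a(g)\phi\rangle = \langle g,f\rangle\langle\psi_t^{\text{qf}},\phi\rangle$ for every $\phi$. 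Hence on $\psi_t^{\text{qf}}$ the operator $a^*(f)a(g)$ acts as a scalar from both sides, and the scalar contributions cancel in the commutator with $H$, making the expectation value vanish. In the second case, $a(g)\psi_t^{\text{qf}}=0$ kills $a^*(f)a(g)H\psi_t^{\text{qf}}$ in expectation, and (taking adjoints) $a(f)\psi_t^{\text{qf}}=0$ kills $Ha^*(f)a(g)\psi_t^{\text{qf}}$ in expectation; both summands vanish directly.

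The only delicate point is the first case, where one must check that $a^*(f)a(g)$ acts as the \emph{same} scalar on both the bra and the ket of $\psi_t^{\text{qf}}$, so that the scalar drops out of the commutator with $H$; this is a short CAR manipulation and nothing more. No further structural hypothesis on $H$ beyond the standing assumption of particle-number conservation (already needed to formulate the many-body flow in the $N$-particle sector) is required, in agreement with the remark preceding the theorem.
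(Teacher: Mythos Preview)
Your argument is correct: you show that the right-hand side already lies in $T_{\gamma_t}\Mcal$, so the projection in \eqref{eq:dfprinc} is the identity. This is exactly the strategy of the paper, which however does not give a separate proof of Theorem~\ref{thm:nopairing} but defers to the proof of Theorem~\ref{thm:withpairing}. There the quasifree state is written as $\Ubb_{\Vcal_t}\Omega$ for a Bogoliubov map $\Vcal_t$, and the vanishing of the two ``diagonal'' blocks is obtained by conjugating back to the vacuum and using $a(\cdot)\Omega=0$. In the no-pairing case this Bogoliubov map is just a particle--hole transformation, and your direct use of the Slater-determinant identities $a^*(f)\psi_t^{\text{qf}}=0$ for $f\in\range\gamma_t$ and $a(f)\psi_t^{\text{qf}}=0$ for $f\in\ker\gamma_t$ is precisely what that transformation encodes. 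So the two proofs are the same in spirit; yours is more elementary because it avoids the Bogoliubov-map formalism, at the price of not immediately generalizing to $\alpha\neq 0$.

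One small slip: in the case $f,g\in\ker\gamma_t$ you have the two commutator terms interchanged. It is $a(g)\psi_t^{\text{qf}}=0$ that kills $\langle\psi_t^{\text{qf}},H\,a^*(f)a(g)\psi_t^{\text{qf}}\rangle$ directly, while $a(f)\psi_t^{\text{qf}}=0$ (after taking the adjoint $(a^*(f)a(g))^*=a^*(g)a(f)$) kills $\langle\psi_t^{\text{qf}},a^*(f)a(g)\,H\psi_t^{\text{qf}}\rangle$. The conclusion is unaffected.
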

We refer the reader to the more general proof of Theorem \ref{thm:withpairing}.

\medskip

\noindent One may convince oneself that \eqref{eq:thm1} indeed yields the Hartree-Fock equation \eqref{eq:TDHFonepdm} when evaluating the expectation value on the r.\,h.\,s.\ with the many-body Hamiltonian from \eqref{eq:hamiltonian}, using the canonical anti-commutation relations and the Wick theorem.

\medskip

 \noindent Having \eqref{eq:TDHFonepdm} at hand, we can also obtain the quasifree reduction principle from the Dirac--Frenkel principle for reduced densities as follows. Clearly \eqref{eq:TDHFonepdm} implies that $\gamma_t^2$ also satisfies the Hartree-Fock equation, 
 \begin{equation}\label{eq:TDBCSgammasquare}i \partial_t\, \gamma_t^2 = [h_\text{HF}(\gamma_t), \gamma_t^2].\end{equation}
 So if we have a projection as initial data, $\gamma_0^2 = \gamma_0$, assuming uniqueness, we conclude that $\gamma_t^2 = \gamma_t$ for all times $t$.
 
 Alternatively, we could argue that the Hartree-Fock equation preserves the spectrum of $\gamma_t$, which also implies $\gamma_t^2 = \gamma_t$ for all times.
 
 Either way, we conclude that the derivative is in the tangent space of $\Mcal$, which  makes the projection in the Dirac--Frenkel principle trivial and yields the quasifree reduction principle. However:
 \begin{itemize}
  \item This argument uses \eqref{eq:TDHFonepdm} which is obtained by explicitly evaluating the quasifree reduction principle. Using only the equations of the quasifree reduction principle \eqref{eq:quasifreereduction}, there is no easy way to formulate \eqref{eq:TDBCSgammasquare}; in fact, a direct verification that \eqref{eq:quasifreereduction} stays within $\Mcal$ seems complicated to us.
  \item The argument depends on the choice of the one-particle Hamiltonian $h$ and regularity and decay of the interaction potential $V$ and the initial data. For the initial value problem with pairing ($\alpha_0 \neq 0$), uniqueness or conservation of the spectrum are by themselves non-trivial problems, see Sect.\ \ref{sec:wellposedness}. 
  \end{itemize} 
  Our derivation does not require any specification of the Hamiltonian beyond its existence as a self-adjoint, particle number conserving, operator. Furthermore, our geometric approach makes it clear that the quasifree reduction principle is the \emph{optimal} approximation within the set of quasifree states.
  (Also $\partial_t \gamma_t(x,y) = 2\,\langle \psi_t^{\text{qf}}, [a^*_y a_x, \frac{1}{i}H] \psi_t^{\text{qf}} \rangle$  or  $\partial_t \gamma_t=0$ would be an evolution in the manifold of quasifree states---but far from being the optimal approximation to the many-body problem.)

\subsection{Fermionic Systems with Pairing}\label{sec:BCS}
We now extend our formulation of the Dirac--Frenkel principle to derive the approximation of fermionic many-body systems by quasifree states with pairing, namely the Bogoliubov--de\,Gennes equations. As before, the derivation of the quasifree reduction principle does not require any particular form of the Hamiltonian; our only assumption is that it conserves the number of particles.

\medskip

\noindent The geometry becomes very similar to the case of no pairing by using the generalized one-particle reduced density matrix $\Gamma$. Thus the manifold of quasifree states with pairing can be described by $\Gamma^2 = \Gamma$ and the block structure \eqref{eq:generalized1pdm}, which however comes `for free' since it is present in all generalized one-particle reduced density matrices (in particular also in the one derived from the many-body Schr\"odinger equation).

Let us be a bit more precise and define the involved spaces. First of all notice that, due to the form \eqref{eq:generalized1pdm}, we have
\[\tr \Gamma^*\Gamma = \tr \left(\begin{array}{cc}
             \gamma & \alpha\\ -\cc{\alpha} & \id-\cc{\gamma}
            \end{array}\right)^* \left(\begin{array}{cc}
             \gamma & \alpha\\ -\cc{\alpha} & \id-\cc{\gamma}
            \end{array}\right) = \tr \id = \infty;\]
the generalized one-particle density matrix is not a Hilbert--Schmidt operator. We remedy this problem by considering the generalized one-particle reduced density matrix as a point in an affine space, and the approximation manifold as a submanifold of this affine space. Let us denote by
\[\gvac := \left(\begin{array}{cc}
             0 & 0\\ 0 & \id
            \end{array}\right)\]
the generalized one-particle reduced density matrix of the vacuum $\Omega \in \fock$. Then any generalized one-particle reduced density matrix can be written as
\[\Gamma = \left(\begin{array}{cc}
             0 & 0\\ 0 & \id
            \end{array}\right) + \left(\begin{array}{cc}
             \gamma & \alpha\\ -\cc{\alpha} & -\cc{\gamma}
            \end{array}\right) =: \gvac + \vec{\Gamma}.\]
Every generalized one-particle reduced density matrix satisfies $\Gamma^2 \leq \Gamma$, which implies $\gamma^2 - \alpha \cc{\alpha} \leq \gamma$ (only for quasifree states we had equality here); thus
\[\tr \vec\Gamma^{*} \vec\Gamma = \tr (\gamma^2 - \alpha\cc{\alpha}) + \tr (\cc{\gamma}^2 - \cc{\alpha}\alpha) = 2 \tr (\gamma^2 - \alpha\cc{\alpha}) \leq 2 \tr \gamma,\]
which is twice the expected number of particles and as such assumed to be finite. The expected number of particles is trivially conserved along the many-body evolution since we assume the Hamiltonian to commute with the particle number operator; it is typically also conserved along the effective evolution, c.\,f.\ Lemma \ref{lem:conserved_spectrum_and_quantities}, so it is justified to take $\vec\Gamma$ as a Hilbert--Schmidt operator. Let us therefore introduce the affine space
\[\Acal := \gvac + \vec\Acal, \quad \vec\Acal := \{\vec\Gamma \in \mathfrak{S}_2(L^2(\Rbb^3)\oplus L^2(\Rbb^3)): \vec\Gamma = \vec\Gamma^{*} \}.\]
Similar to the no-pairing case, $\vec\Acal$ is a real-linear space.

Now notice that the requirement of having the block structure of $\Gamma$ in terms of $\gamma$ and $\alpha$ as in \eqref{eq:generalized1pdm} can be rewritten\footnote{There is a subtlety here: Not only $\Gamma = \left( \begin{array}{cc}\gamma&\alpha\\-\cc{\alpha}&1-\cc{\gamma} \end{array}\right)$ satisfies the equation $\Gamma +\Jcal\Gamma\Jcal = \id$, but so does also $\left( \begin{array}{cc}1-\gamma&\alpha\\-\cc{\alpha}&\cc{\gamma} \end{array}\right)$. The latter one however is not a Hilbert--Schmidt perturbation of $\gvac$ and thus not a solution within $\Acal$; in fact it corresponds to a state formally obtained from $\Gamma$ by a particle-hole transformation replacing the vacuum by an infinite number of fermions filling up all the Hilbert space.} as the condition
\begin{equation}\label{eq:blockcond}\Gamma + \Jcal \Gamma \Jcal = \mathds{1}, \text{ where } \Jcal = \left( \begin{array}{cc}0 & J\\J & 0\end{array}\right): L^2(\Rbb^3)\oplus L^2(\Rbb^3) \to L^2(\Rbb^3)\oplus L^2(\Rbb^3)\end{equation}
with $J: L^2(\Rbb^3) \to L^2(\Rbb^3)$ being the anti-linear operator of complex conjugation. So we can think of the evolution of the many-body generalized one-particle reduced density matrix as living in the affine subspace of $\Acal$ given by
\[\Acal_{-} := \left\{\Gamma \in \Acal: \Gamma + \Jcal \Gamma \Jcal = \mathds{1}\right\}.\]
(But not every $\Gamma \in \Acal_{-}$ is the generalized one-particle reduced density matrix of a Fock space vector.)

The approximation manifold is again given by the generalized one-particle density matrices corresponding to quasifree states:
\begin{equation}\label{eq:approximationmanifold}
\Mcalqf := \left\{\Gamma \in \Acal: \Gamma + \Jcal \Gamma \Jcal = \mathds{1},\ \Gamma^2 = \Gamma\right\}.\end{equation}
So compared to the no-pairing case not much has changed---the only additional complication is that we have to impose the block structure of $\Gamma$ in terms of $\gamma$ and $\alpha$. Luckily, this block structure is present in any generalized one-particle reduced density matrix including the one of the many-body evolution. So the many-body evolution describes a curve in the affine subspace $\Acal_{-}$, of which $\Mcal$ is a submanifold.

To provide a characterization of the tangent space, we also introduce as an auxiliary space the manifold of projections which do not necessarily have the block structure
 \begin{equation}
 \Mcal^\text{aux} := \left\{\Gamma \in\Acal: \Gamma^2 = \Gamma\right\}.\end{equation}
Notice that, since $\Acal$ is an affine space, $T_\Gamma\Acal = \vec\Acal$ for any $\Gamma \in \Acal$.
\begin{lem}[Tangent Space, With Pairing]\label{lem:pairingtangentspace}
For $\Gamma$ a point in the manifolds $\Mcal^\text{aux}$, $\Acal_{-}$ or $\Mcalqf$, respectively, let us introduce the following projections:
\begin{enumerate}
 \item onto the tangent space of projection operators \begin{equation}\label{eq:lem1}\proj^\text{aux}(\Gamma): \vec\Acal \to T_\Gamma\Mcal^\text{aux}, \quad \Xi \mapsto \Gamma \Xi (1-\Gamma) + (1-\Gamma)\Xi \Gamma = [[\Xi,\Gamma],\Gamma],\end{equation}
 \item onto the tangent space of the affine subspace with the block structure
 \begin{equation}\label{eq:lem2}\proj_{-}(\Gamma): \vec\Acal \to T_\Gamma\Acal_{-}, \quad \Xi \mapsto \frac{1}{2}\left( \Xi - \Jcal \Xi \Jcal \right),\end{equation}
 \item and onto the tangent space of quasifree states $\projqf(\Gamma): \vec\Acal \to T_\Gamma\Mcalqf$.
\end{enumerate}

\medskip

\noindent Then, for $\Gamma \in \Mcalqf$, we have
\[\projqf(\Gamma) = \proj_{-}(\Gamma) \proj^\text{aux}(\Gamma) = \proj^\text{aux}(\Gamma) \proj_{-}(\Gamma),\] and
\begin{equation}\label{eq:projectiononqf}
 \proj(\Gamma)\restriction_{T_\Gamma\Acal_{-}} = \proj^\text{aux}(\Gamma)\restriction_{T_\Gamma\Acal_{-}}.
\end{equation}
\end{lem}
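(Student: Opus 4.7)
The plan is to handle the three items of the lemma in sequence and then deduce the identity \eqref{eq:projectiononqf} from them. The key structural observation throughout is that on $\Mcalqf$ the two constraints $\Gamma^2=\Gamma$ and $\Gamma+\Jcal\Gamma\Jcal=\id$ are compatible: conjugation by $\Jcal$ interchanges $\Gamma$ with $\id-\Gamma$, which is exactly what makes $\proj^\text{aux}(\Gamma)$ commute with the involution $T:\Xi\mapsto\Jcal\Xi\Jcal$ whose eigenspaces carry $\proj_{-}(\Gamma)$.

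For part (i), I would copy the argument of Lemma \ref{lem:tangentspacenopairing} verbatim: differentiating $\Gamma_t^2=\Gamma_t$ at $t=0$ forces $\Gamma A\Gamma=0=(1-\Gamma)A(1-\Gamma)$ for any $A\in T_\Gamma\Mcal^\text{aux}$; conversely the curve $\Gamma_t=e^{tB}\Gamma e^{-tB}$ with $B=[A,\Gamma]$ realizes every such $A$ as a tangent vector, since $B$ is anti-self-adjoint Hilbert--Schmidt and hence $e^{tB}$ is unitary. Idempotence of $\proj^\text{aux}(\Gamma)$, its range, and its HS-self-adjointness follow by direct computation using $\Gamma^2=\Gamma$ and cyclicity of the trace. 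For part (ii), differentiating the linear constraint gives $T_\Gamma\Acal_{-}=\{\Xi\in\vec\Acal:\Xi+\Jcal\Xi\Jcal=0\}$, the $-1$ eigenspace of $T$. The crucial point is that $T$ is a real-linear orthogonal involution of $\vec\Acal$: it squares to the identity since $\Jcal^2=\id$, and is an HS-isometry because for self-adjoint HS operators $\tr((\Jcal\Xi\Jcal)(\Jcal\Xi'\Jcal))=\overline{\tr(\Xi\Xi')}=\tr(\Xi\Xi')$, the last equality because $\tr(\Xi\Xi')$ is real. An isometric involution is self-adjoint, whence $\frac{1}{2}(\id-T)$ is the orthogonal projection onto the $-1$ eigenspace.

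For part (iii), I would first identify $T_\Gamma\Mcalqf=T_\Gamma\Mcal^\text{aux}\cap T_\Gamma\Acal_{-}$. The inclusion $\subseteq$ is immediate from $\Mcalqf\subseteq\Mcal^\text{aux}\cap\Acal_{-}$. For $\supseteq$, given $A$ in the intersection, set $B=[A,\Gamma]$ and consider $\Gamma_t=e^{tB}\Gamma e^{-tB}$; by part (i) this stays in $\Mcal^\text{aux}$, and
\[\Jcal B\Jcal=[\Jcal A\Jcal,\Jcal\Gamma\Jcal]=[-A,\id-\Gamma]=[A,\Gamma]=B\]
shows that $e^{tB}$ commutes with $\Jcal$, so $\Jcal\Gamma_t\Jcal=e^{tB}(\id-\Gamma)e^{-tB}=\id-\Gamma_t$ and the curve remains in $\Acal_{-}$. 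Next, the two projections commute: using $\Jcal\Gamma\Jcal=\id-\Gamma$, a direct computation yields
\[\Jcal\bigl(\Gamma\Xi(1-\Gamma)+(1-\Gamma)\Xi\Gamma\bigr)\Jcal=(1-\Gamma)(\Jcal\Xi\Jcal)\Gamma+\Gamma(\Jcal\Xi\Jcal)(1-\Gamma),\]
i.e.\ $T\,\proj^\text{aux}(\Gamma)=\proj^\text{aux}(\Gamma)\,T$, which is equivalent to $[\proj_{-}(\Gamma),\proj^\text{aux}(\Gamma)]=0$. The elementary fact that the product of two commuting orthogonal projections equals the orthogonal projection onto the intersection of their ranges then gives $\projqf(\Gamma)=\proj^\text{aux}(\Gamma)\proj_{-}(\Gamma)=\proj_{-}(\Gamma)\proj^\text{aux}(\Gamma)$. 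Finally, for $\Xi\in T_\Gamma\Acal_{-}$ one has $\proj_{-}(\Gamma)\Xi=\Xi$, hence $\projqf(\Gamma)\Xi=\proj^\text{aux}(\Gamma)\Xi$, which is \eqref{eq:projectiononqf}.

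I expect the main obstacle to be verifying the intertwining $T\,\proj^\text{aux}(\Gamma)=\proj^\text{aux}(\Gamma)\,T$ and, relatedly, that the curve realizing a vector in the intersection of tangent spaces actually stays in $\Acal_{-}$. Both rest on the single structural fact $\Jcal\Gamma\Jcal=\id-\Gamma$ that defines $\Mcalqf$; once that is exploited, everything else is the standard machinery of orthogonal projections on a real Hilbert space.
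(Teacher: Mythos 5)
Your proof is correct and follows essentially the same route as the paper's: reuse Lemma~\ref{lem:tangentspacenopairing} for $\proj^\text{aux}$, differentiate the affine constraint to get $T_\Gamma\Acal_-$, check that the two projections commute via $\Jcal\Gamma\Jcal = \id - \Gamma$, and identify $\projqf$ as the product. The one place where you are more explicit than the paper is in establishing the inclusion $T_\Gamma\Mcal^\text{aux}\cap T_\Gamma\Acal_-\subseteq T_\Gamma\Mcalqf$: you observe that the curve $e^{tB}\Gamma e^{-tB}$ with $B=[A,\Gamma]$ stays in $\Acal_-$ because $\Jcal B\Jcal=B$, whereas the paper leaves this surjectivity check implicit ("It is easy to verify that it maps into $T_\Gamma\Mcal$ and is surjective"); your invocation of the standard fact that the product of commuting orthogonal projections projects onto the intersection of ranges, and your isometric-involution argument for why $\tfrac{1}{2}(\id - T)$ is the orthogonal projection, are clean substitutes for the paper's "it is easy to check." No gaps.
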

\begin{proof}The projection onto the tangent space of projection operators \eqref{eq:lem1} is known from Lemma \ref{lem:tangentspacenopairing}.

Since $\Acal_{-}$ is an affine subspace, we can simply take the derivative of the defining equation to find
\[T_\Gamma \Acal_{-} = \left\{ \Xi \in \vec\Acal: \Xi + \Jcal \Xi \Jcal = 0 \right\}. \]
It is easy to check that the formula \eqref{eq:lem2} defines an orthogonal projection, maps into $T_\Gamma \Acal_{-}$ and is surjective onto $T_\Gamma \Acal_{-}$; therefore it is actually \emph{the} orthogonal projection onto $T_\Gamma \Acal_{-}$.

It is simple to check that $\proj_{-}(\Gamma) \proj^\text{aux}(\Gamma) = \proj^\text{aux}(\Gamma) \proj_{-}(\Gamma)$. Notice that $\Mcalqf \subset \Acal_{-}$, so $T_\Gamma \Mcalqf$ is a linear subspace of $T_\Gamma \Acal_{-}$ (for $\Gamma \in \Mcalqf$).  Thus, using Lemma \ref{lem:tangentspacenopairing},
 \[\begin{split}T_\Gamma \Mcalqf & = \left\{ \Xi \in T_\Gamma \Acal_{-}: \Gamma \Xi \Gamma =0 = (1-\Gamma) \Xi (1-\Gamma) \right\}\\
    &= \left\{ \Xi \in \vec\Acal: \Gamma \Xi \Gamma =0 = (1-\Gamma) \Xi (1-\Gamma) \text{ and } \Xi + \Jcal \Xi \Jcal =0 \right\}.
\end{split}\]
 Let $P := \proj_{-}(\Gamma) \proj^\text{aux}(\Gamma)$. Obviously $P$ is an orthogonal projection. It is easy to verify that it maps into $T_\Gamma\Mcal$ and is surjective onto $T_\Gamma\Mcal$; therefore $\proj(\Gamma) = P$.

Now let $A \in T_\Gamma\Acal_{-}$. Then $\proj^\text{aux}(\Gamma) A = \proj^\text{aux}(\Gamma) \proj_{-}(\Gamma) A = \proj(\Gamma) A$, so \eqref{eq:projectiononqf} holds. 
\end{proof}

We can now derive the quasifree reduction principle from the Dirac--Frenkel principle.

\begin{thm}[Derivation of the Quasifree Reduction Principle, With Pairing]\label{thm:withpairing}
 The effective equation for the generalized one-particle reduced density matrix $\Gamma_t$ obtained by applying the Dirac--Frenkel principle to the many-body evolution with $\Mcal$ and $\Acal$ as chosen above yields the principle of quasifree reduction
 \begin{equation}\label{eq:quasifreethm} \langle F_1, \left(\partial_t \Gamma_t\right) F_2\rangle = \langle\psi^{\text{qf}}_t, [A^*(F_2) A(F_1),\frac{1}{i}H] \psi^{\text{qf}}_t\rangle \quad \forall F_1, F_2 \in L^2(\Rbb^3) \oplus L^2(\Rbb^3),\end{equation}
 where $\psi^{\text{qf}}_t$ is the quasifree state uniquely (up to its phase) assigned to $\Gamma_t$. (Equation \eqref{eq:quasifreethm} is a compact way of writing \eqref{eq:quasifreereduction}, avoiding the use of operator-valued distributions by testing against $F_1$ and $F_2$.)
\end{thm}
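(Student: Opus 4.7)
I would mirror the strategy implicit in the no-pairing case but carried out on the generalized one-particle density matrix. Let $\Xi_t \in \vec\Acal$ denote the operator whose matrix elements are the right-hand side of \eqref{eq:quasifreethm}; a short computation from the Schr\"odinger equation for $\psi^{\text{qf}}_t$, together with the definition of $\Gamma_t$ via $\langle F_1,\Gamma_t F_2\rangle = \langle\psi^{\text{qf}}_t, A^*(F_2) A(F_1)\psi^{\text{qf}}_t\rangle$, identifies $\Xi_t$ as the unprojected many-body-induced derivative of $\Gamma_t$ at time $t$. The Dirac--Frenkel equation then reads $\partial_t\Gamma_t = \projqf(\Gamma_t)\Xi_t$, so the content of the theorem reduces to the assertion $\Xi_t \in T_{\Gamma_t}\Mcalqf$, which makes the projection act as the identity.

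First, I would observe that the identity $\Gamma + \Jcal\Gamma\Jcal = \id$ is a purely algebraic consequence of the CAR and hence is preserved along the full many-body evolution. Differentiating in $t$ gives $\Xi_t + \Jcal\Xi_t\Jcal = 0$, so $\Xi_t \in T_{\Gamma_t}\Acal_{-}$. By \eqref{eq:projectiononqf} in Lemma \ref{lem:pairingtangentspace} the task then reduces to verifying $\Xi_t \in T_{\Gamma_t}\Mcal^\text{aux}$, i.\,e.
\[\Gamma_t \Xi_t \Gamma_t = 0 = (1-\Gamma_t)\Xi_t(1-\Gamma_t).\]

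For the two vanishing conditions I would invoke the characterization of kernel and range of $\Gamma_t$ in terms of the quasifree state. From $\langle F,\Gamma_t F\rangle = \|A(F)\psi^{\text{qf}}_t\|^2$, the block-structure identity, and the CAR relation $\{A(F),A^*(F)\} = \langle F,F\rangle$, one obtains
\[\Gamma_t F = 0 \iff A(F)\psi^{\text{qf}}_t = 0, \qquad (1-\Gamma_t) F = 0 \iff A^*(F)\psi^{\text{qf}}_t = 0.\]
Testing $(1-\Gamma_t)\Xi_t(1-\Gamma_t)$ against $F_1,F_2 \in \ker\Gamma_t$ is then immediate: both summands of the commutator $[A^*(F_2)A(F_1), \frac{1}{i}H]$ vanish in expectation using $A(F_i)\psi^{\text{qf}}_t = 0$. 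The dual check on $F_1,F_2 \in \range\Gamma_t$ uses the CAR rearrangement $A^*(F_2)A(F_1) = \langle F_1,F_2\rangle - A(F_1)A^*(F_2)$ together with $A^*(F_i)\psi^{\text{qf}}_t = 0$: this collapses both summands of the commutator to $\langle F_1,F_2\rangle\,\langle\psi^{\text{qf}}_t, \tfrac{1}{i}H\psi^{\text{qf}}_t\rangle$, which then cancel.

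The main source of friction I anticipate is not the commutator arithmetic, which is genuinely short, but keeping the conventions for the doubled Fock operators $A^*(F), A(F)$ and the anti-linear map $\Jcal$ consistent with the block structure of $\Gamma_t$; a sign or complex-conjugation slip here could silently break either the block-structure reduction or one of the vanishing conditions. Once both vanishings are in place, Lemma \ref{lem:pairingtangentspace} yields $\projqf(\Gamma_t)\Xi_t = \proj^\text{aux}(\Gamma_t)\Xi_t = \Xi_t$, which is precisely \eqref{eq:quasifreethm}.
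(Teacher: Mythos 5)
Your proof is correct and reaches the same conclusion, but the key step is verified by a genuinely different route. Both you and the paper reduce the Dirac--Frenkel projection to showing that the ``diagonal'' expectations
\[
\langle \psi^{\text{qf}}_t, [A^*(\Gamma_t F_2)A(\Gamma_t F_1), \tfrac{1}{i}H]\psi^{\text{qf}}_t\rangle \quad\text{and}\quad \langle \psi^{\text{qf}}_t, [A^*((1-\Gamma_t)F_2)A((1-\Gamma_t)F_1), \tfrac{1}{i}H]\psi^{\text{qf}}_t\rangle
\]
vanish, but the mechanism differs. The paper writes $\psi^{\text{qf}}_t = \Ubb_{\Vcal_t}\Omega$, conjugates by the Bogoliubov transformation so that $\Gamma_t$ is replaced by $\gvac$, and then observes that $A(\gvac G)$ and $A^*(\gvac G)$ are, respectively, a pure creation and a pure annihilation operator with respect to the vacuum; the CAR rearrangement and $a(\cdot)\Omega = 0$ then do the work. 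You bypass the conjugation machinery entirely: since $\Gamma_t$ is an orthogonal projection, $\langle F,\Gamma_t F\rangle = \|A(F)\psi^{\text{qf}}_t\|^2$ gives $\ker\Gamma_t = \{F : A(F)\psi^{\text{qf}}_t = 0\}$, and the CAR identity $\|A(F)\psi\|^2 + \|A^*(F)\psi\|^2 = \|F\|^2$ gives $\range\Gamma_t = \{F : A^*(F)\psi^{\text{qf}}_t = 0\}$. Then each summand of the commutator is killed directly by moving $A$ or $A^*$ onto the state. Your route is shorter and makes the geometric content transparent (the two blocks of $\proj^{\text{aux}}(\Gamma_t)$ are exactly the two annihilation conditions), at the cost of relying on $\Gamma_t$ being a projection, which is automatic here but is what the Bogoliubov formulation encodes more structurally; the paper's route also parallels the bosonic argument in Sect.\ \ref{sec:bosons}, where the same conjugation step is reused.

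One small precision worth stating explicitly: your reduction requires that $\proj_{-}(\Gamma_t)\Xi_t = \Xi_t$, i.e.\ $\Xi_t \in T_{\Gamma_t}\Acal_-$, and you correctly obtain this from the block-structure identity $\Gamma + \Jcal\Gamma\Jcal = \id$ being preserved by the many-body flow. After that, by \eqref{eq:projectiononqf} the projection $\projqf(\Gamma_t)$ restricted to $T_{\Gamma_t}\Acal_-$ equals $\proj^{\text{aux}}(\Gamma_t)$, so it suffices that $\Gamma_t\Xi_t\Gamma_t = 0 = (1-\Gamma_t)\Xi_t(1-\Gamma_t)$ — which is precisely what you check. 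The argument is sound.
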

\begin{proof}
The proof uses some theory of Bogoliubov transformations, for which we recommend \cite[Chapters 9 and 10]{Solovej} as a reference. (For the no-pairing case, the Bogoliubov transformation is a simple particle-hole transformations, see, e.\,g., \cite{BPS1,BPS2}.)

Recall that the generalized one-particle reduced density matrix $\Gamma_\psi$ of a Fock space vector $\psi$ is, avoiding the use of operator-valued distributions by testing against $F_1, F_2 \in L^2(\Rbb^3) \oplus L^2(\Rbb^3)$, given by
\begin{equation}\label{eq:defgen1pdm}\langle F_1, \Gamma_\psi F_2\rangle_{L^2(\Rbb^3) \oplus L^2(\Rbb^3)} = \langle \psi, A^*(F_2) A(F_1) \psi \rangle_{\fock},\end{equation}
where the generalized creation and annihilation operators are
\[A(\left(\!\! \begin{array}{c}f\\ g\end{array}\!\! \right)) := a(f) + a^*(\cc{g})\]
and
\[A^*(\left( \!\!\begin{array}{c}f\\ g\end{array}\!\! \right)) := a^*(f) + a(\cc{g}), \quad \text{for } \left(\!\! \begin{array}{c}f\\ g\end{array} \!\!\right) \in L^2(\Rbb^3)\oplus L^2(\Rbb^3).\]
So for $\psi_t$ being the solution of the many-body Schr\"odinger equation, the associated generalized one-particle reduced density matrix satisfies
 \[\langle F_1, \left(\partial_t \Gamma_t^{\text{MB}}\right) F_2\rangle_{L^2(\Rbb^3)\oplus L^2(\Rbb^3)} = \langle \psi_t, [A^*(F_2) A(F_1),\frac{1}{i}H] \psi_t \rangle_{\fock}.\]
 Notice that the r.\,h.\,s., like the derivative of any differentiable curve of generalized one-particle reduced density matrices, lies in $T_\Gamma\Acal_{-}$. According to the Dirac--Frenkel principle, we have to project it onto the tangent space of quasifree states.
 We apply the projection as given by \eqref{eq:projectiononqf} and \eqref{eq:lem1} to get
\[\begin{split}\langle F_1, \left( \partial_t \Gamma_t\right) F_2\rangle 
& = \langle \psi^{\text{qf}}_t , [A^*((1-\Gamma_t)F_2) A(\Gamma_t F_1),\frac{1}{i}H] \psi^{\text{qf}}_t \rangle_{\fock} \\
& \quad + \langle \psi^{\text{qf}}_t , [A^*(\Gamma_t F_2) A((1-\Gamma_t)F_1),\frac{1}{i}H] \psi^{\text{qf}}_t \rangle_{\fock},\end{split}\]
where $\psi^{\text{qf}}_t$ is the quasifree state uniquely assigned to $\Gamma_t$.
Comparing to the quasifree reduction principle \eqref{eq:quasifreethm}, we see that we simply have to show that $\langle \psi^{\text{qf}}_t , [A^*((1-\Gamma_t)F_2) A((1-\Gamma_t)F_1),\frac{1}{i}H] \psi^{\text{qf}}_t \rangle = 0$ and then also that $\langle \psi^{\text{qf}}_t , [A^*(\Gamma_t F_2) A(\Gamma_t F_1),\frac{1}{i}H] \psi^{\text{qf}}_t \rangle = 0$.

Since $\psi^{\text{qf}}_t$ is a quasifree state, it can be written in terms of an implementable Bogoliubov map ${\Vcal_t}: L^2(\Rbb^3) \oplus L^2(\Rbb^3) \to L^2(\Rbb^3) \oplus L^2(\Rbb^3)$ as $\psi^{\text{qf}}_t = \Ubb_{{\Vcal_t}} \Omega$ ($\Ubb_{\Vcal_t}$ being the unitary implementation in Fock space). Take any $F_1, F_2 \in L^2(\Rbb^3)\oplus L^2(\Rbb^3)$. Using the property $\Ubb_{\Vcal_t}^* A(F) \Ubb_{\Vcal_t} = A({\Vcal_t}^{-1}F)$ of the Bogoliubov map and recalling \eqref{eq:defgen1pdm}, we calculate
\[\begin{split}
   \langle F_1, \Gamma_t F_2\rangle_{L^2\oplus L^2} & = \langle \Ubb_{\Vcal_t} \Omega, A^*(F_2) A(F_1) \Ubb_{\Vcal_t} \Omega \rangle_\fock  \\ &= \langle \Omega, A^*({\Vcal_t}^{-1} F_2) A({\Vcal_t}^{-1} F_1) \Omega \rangle_\fock \\
   & = \langle {\Vcal_t}^{-1} F_1, \gvac {\Vcal_t}^{-1} F_2\rangle_{L^2\oplus L^2} = \langle F_1, {\Vcal_t} \gvac {\Vcal_t}^{-1} F_2\rangle_{L^2\oplus L^2},
  \end{split}
\]
so we obtain
\[{\Vcal_t}^* \Gamma_t {\Vcal_t} = \gvac = \left( \begin{array}{cc}0 & 0\\ 0& \id\end{array} \right).\]
Using this last identity we calculate that 
\[
\begin{split}
& \langle \psi^{\text{qf}}_t, \left[ A^*(\Gamma_t F_2) A(\Gamma_t F_1), \frac{1}{i}H \right] \psi^{\text{qf}}_t \rangle\\
& = \langle \Omega, \left[ A^*({\Vcal_t}^{-1}\Gamma_t F_2) A( {\Vcal_t}^{-1}\Gamma_t F_1) , \frac{1}{i}\Ubb_{\Vcal_t}^* H  \Ubb_{\Vcal_t}\right]  \Omega \rangle \\
& = \langle \Omega, \left[ A^*(\gvac{\Vcal_t}^{-1} F_2) A( \gvac {\Vcal_t}^{-1} F_1) , \frac{1}{i}\Ubb_{\Vcal_t}^* H  \Ubb_{\Vcal_t}\right]  \Omega \rangle \\
& = \langle \Omega, \left[ A^*(\left(\!\! \begin{array}{c}0 \\ \tilde g_2\end{array} \!\!\right)) A( \left(\!\! \begin{array}{c}0\\ \tilde g_1\end{array}\!\! \right)) , \frac{1}{i}\Ubb_{\Vcal_t}^* H  \Ubb_{\Vcal_t}\right]  \Omega \rangle
\end{split}
\]
where we have introduced the notation ${\Vcal_t}^{-1} F_i =: \tilde F_i =: \left(\!\! \begin{array}{c}\tilde f_i \\ \tilde g_i\end{array}\!\! \right)$ for $i \in \{1,2\}$.
 Now
\[
\begin{split}
& \langle \psi^{\text{qf}}_t, \left[ A^*(\Gamma_t F_2) A(\Gamma_t F_1), \frac{1}{i}H \right] \psi^{\text{qf}}_t \rangle\\
& = \langle \Omega, \left[ a( \cc{\tilde g_2}) a^*(\cc{\tilde g_1} ) , \frac{1}{i}\Ubb_{\Vcal_t}^* H  \Ubb_{\Vcal_t}\right]  \Omega \rangle\\
& = \langle \Omega, \big[ - a^*(\cc{\tilde g_1} ) a( \cc{\tilde g_2}) + \langle \cc{\tilde g_2}, \cc{\tilde g_1} \rangle, \frac{1}{i}\Ubb_{\Vcal_t}^* H  \Ubb_{\Vcal_t}\big]  \Omega \rangle\\
& = - \langle  \Omega, a^*(\cc{\tilde g_1} ) a( \cc{\tilde g_2}) \frac{1}{i}\Ubb_{\Vcal_t}^* H  \Ubb_{\Vcal_t} \Omega \rangle + \langle \Omega, \frac{1}{i}\Ubb_{\Vcal_t}^* H  \Ubb_{\Vcal_t} a^*(\cc{\tilde g_1} ) a( \cc{\tilde g_2})  \Omega \rangle = 0.
\end{split}
\]
(Here we made use of the fact that $\langle \cc{\tilde g_2}, \cc{\tilde g_1} \rangle$ as a complex number commutes with everything, and of the fact that any annihilation operator applied to the vacuum gives zero.)
Similarly, we find for the other diagonal block as well that it vanishes,
\[
\begin{split}
& \langle \psi^{\text{qf}}_t, \left[ A^*((\id-\Gamma_t) F_2) A((\id-\Gamma_t) F_1), \frac{1}{i}H \right] \psi^{\text{qf}}_t \rangle
= 0.	\qedhere
\end{split}
\]
\end{proof}
Using the Wick theorem and the CAR, it is a simple calculation that the quasifree reduction principle \eqref{eq:quasifreethm}, applied to the Hamiltonian \eqref{eq:hamiltonian}, yields the time-dependent Bogoliubov--de\,Gennes equations \eqref{eq:TDBCS}.
\begin{rem}
 The reader may wonder how it is possible that the many-body evolution gives rise to the equation
 \begin{equation}\label{eq:1}\langle F_1, \left(\partial_t \Gamma^{\text{MB}}_t\right) F_2\rangle = \langle \psi_t, [A^*(F_2) A(F_1),\frac{1}{i}H] \psi_t \rangle\end{equation}                                                                                                                                                                                                                  
 and the effective evolution solves the seemingly identical equation
  \begin{equation}\label{eq:2} \langle F_1, \left(\partial_t \Gamma_t\right) F_2\rangle = \langle \psi^{\text{qf}}_t ,[A^*(F_2) A(F_1),\frac{1}{i}H] \psi^{\text{qf}}_t\rangle,\end{equation}
  yet the two evolutions in general differ even if they both start from quasifree initial data.
  The answer is that \eqref{eq:1} is \emph{not} a well-posed initial value problem, simply because a general Fock space state has many more degrees of freedom than just the generalized one-particle reduced density matrix; the r.\,h.\,s.\ is not a function of only $\Gamma_t$. The equation \eqref{eq:1} only makes sense if the r.\,h.\,s.\ is already prescribed by the Schr\"odinger equation \eqref{eq:SE}.
  
    On the other hand, \eqref{eq:2} is a well-defined initial value problem because  quasifree states in Fock space are (up to a phase) one-to-one with their generalized one-particle reduced density matrix. So the r.\,h.\,s.\ is a function only of $\Gamma_t$ here (alternatively think of the Wick rule which also shows that the r.\,h.\,s.\ can be expressed in terms of only $\Gamma_t$).

    We provide the rigorous proof of well-posedness for a main class of physically relevant Hamiltonians and initial data in Sect.\ \ref{sec:wellposedness}.
\end{rem}

\section{Derivation of the Quasifree Reduction Principle for Bosons}\label{sec:bosons}
In this section we present the formulation of the Dirac--Frenkel principle for one-particle reduced density matrices of bosonic systems. This is slightly more complicated than for fermionic systems because the simple projection condition has to be replaced, and because we include a condensate, but can be treated by modifications of the previously developed geometric notions.

We start by reviewing some definitions for bosonic systems where they differ from the corresponding fermionic formulas. For a comprehensive introduction we refer to \cite{Solovej}. \emph{Bosonic Fock space} is defined in the same way as for fermionic systems, simply replacing antisymmetric by symmetric wave functions:
\[\fock_\text{s} := \Cbb \oplus \bigoplus_{n=1}^\infty L^2_\text{s}(\Rbb^{3n}).\]
\emph{Creation and annihilation operators} $a^*(f)$ and $a(f)$ (where $f \in L^2(\Rbb^3)$, a one-particle wave function) are defined as
\[\begin{split}
  \left( a^*(f) \psi \right)^{(n)}(x_1,\ldots x_n) & = \frac{1}{\sqrt{n}} \sum_{j=1}^n f(x_j) \psi^{(n-1)}(x_1,\ldots ,\widehat{x_{j}},\ldots,x_n),\\
  \left( a(f) \psi \right)^{(n)}(x_1,\ldots x_n) & = \sqrt{n+1}\int \di x\, \cc{f(x)} \psi^{(n+1)}(x,x_1,\ldots,x_n).
  \end{split}
\]
The bosonic creation and annihilation operators satisfy the \emph{canonical commutation relations} (CCR), i.\,e.\
\[[a(f),a(g)] = 0,\ [a^*(f),a^*(g)]=0, \text{ and } [a(f),a^*(g)] = \langle f,g\rangle\]
for all $f,g \in L^2(\Rbb^3)$. (The definition of the commutator is $[A,B] = AB-BA$.) The corresponding operator-valued distributions satisfy the formal canonical commutation relations $[a_x,a_y]=0$, $[a^*_x,a^*_y]=0$, and $[a_x,a^*_y] = \delta(x-y)$.
\emph{Quasifree states} are defined as those states for which the \emph{Wick theorem} holds, which only differs from the fermionic case by having all positive signs, e.\,g.,
\begin{align*}\langle \psi, a^\natural_1 a^\natural_2 a^\natural_3 a^\natural_4 \psi\rangle & = \langle \psi,a^\natural_1 a^\natural_2 \psi \rangle \langle \psi, a^\natural_3 a^\natural_4 \psi \rangle + \langle \psi, a^\natural_1 a^\natural_3 \psi \rangle \langle \psi, a^\natural_2 a^\natural_4 \psi \rangle \\& \quad+ \langle \psi,a^\natural_1 a^\natural_4 \psi \rangle \langle \psi, a^\natural_2 a^\natural_3\psi \rangle.\end{align*}
The r.\,h.\,s.\ can be expressed in terms of
\[\gamma(x,y) = \langle \psi, a^*_y a_x \psi\rangle \quad \text{and}\quad \alpha(x,y) = \langle \psi, a_y a_x\psi\rangle.\]
For any bosonic quasifree state, $\gamma$ and $\alpha$ are related by
\begin{equation}\label{eq:bosonicgammaalpha}\gamma^2 + \gamma = \alpha\cc{\alpha}\,, \quad \cc{\alpha}\gamma = \cc{\gamma} \cc{\alpha}\,;\end{equation}
conversely all $\gamma$ and $\alpha$ satisfying these two equations define a (up to a phase) unique quasifree state in bosonic Fock space.

The \emph{generalized one-particle reduced density matrix} is defined as
\begin{equation}\label{eq:bosonicgeneralized1pdm} \Gamma = \left(\begin{array}{cc}
             \gamma & \alpha\\ \cc{\alpha} & 1+\cc{\gamma}
            \end{array}\right).
\end{equation}
The relations \eqref{eq:bosonicgammaalpha} characterizing it as belonging to a quasifree state can be rewritten
\begin{equation}\label{eq:bosqfprop}\Gamma \Scal \Gamma = -\Gamma, \quad \text{where }\Scal = \left(\begin{array}{cc}
             \id & 0\\ 0 & -\id
            \end{array}\right).\end{equation}
The generalized one-particle reduced density matrix $\Gamma$ is a non-negative operator on $L^2(\Rbb^3)\oplus L^2(\Rbb^3)$.

As for fermionic systems, also bosonic quasifree pure states can be written in terms of a \emph{Bogoliubov transformation} \cite{Solovej}: If $\psi \in \fock_\text{s}$ is quasifree, then there exists an implementable Bogoliubov map $\Vcal: L^2(\Rbb^3) \oplus L^2(\Rbb^3) \to L^2(\Rbb^3) \oplus L^2(\Rbb^3)$ such that $\psi = \Ubb_\Vcal\Omega$, $\Ubb_\Vcal$ being the unitary implementation of $\Vcal$. Recall that $\Ubb_\Vcal^* A(F) \Ubb_\Vcal = A(\Vcal^{-1}F)$, where the generalized creation/annihilation operators $A(F)$, $A^*(F)$, $F \in L^2(\Rbb^3) \oplus L^2(\Rbb^3)$ are defined exactly the same way as for fermions.

With $\gvac = \left( \begin{array}{cc}0 & 0\\0 & \id \end{array}\right)$ the generalized one-particle reduced density matrix of the vacuum (identical to the fermionic case), we define the spaces
\begin{align}
 \Acal & = \gvac + \vec\Acal,\quad \vec\Acal = \{ \vec\Gamma \in \Sfrak_2(L^2(\Rbb^3)\oplus L^2(\Rbb^3)): \vec\Gamma = \vec\Gamma^*\},\\
 \Acal_{+} & = \{\Gamma \in \Acal: \Gamma - \Jcal \Gamma\Jcal = -\Scal\},\\
 \Mcal & = \{\Gamma\in\Acal: \Gamma - \Jcal\Gamma\Jcal = - \Scal,\ \Gamma\Scal\Gamma = -\Gamma\}.
\end{align}
These take the role of: $\Acal$ the ambient affine space defining the scalar product, $\Acal_{+}$ the affine subspace in which the many-body evolution can be found, and $\Mcal$ the approximation manifold of generalized one-particle reduced density matrices of quasifree states. 
To see that $\Mcal$ is indeed a submanifold of $\Acal_{+}$, notice that by \eqref{eq:bosonicgammaalpha} (or by $(2\Gamma+\Scal)\Scal(2\Gamma+\Scal)=\Scal$, which is equivalent to \eqref{eq:bosqfprop}), we can write every $\Gamma \in \Mcal$ as
\[\Gamma = \Gamma(\alpha) := \begin{pmatrix}
                   \tfrac{1}{2}(\sqrt{1+4\alpha\overline{\alpha}}-1) & \alpha \\ \overline{\alpha} & \tfrac{1}{2}(\sqrt{1+4\overline{\alpha}\alpha}+1)
                  \end{pmatrix},
\]
and thus $\Mcal$ as a graph.
Alas! The computation of the tangent spaces of $\Mcal$ from its graph representation involves the derivative of the
operator square root around $\id$, which leads to Lyapunov equations of type $\{X,A\}=B$ with $A=\sqrt{1+\alpha\overline{\alpha}}$ and $B$ given. There is no simple closed formula for the solution to this equation in operator form (one can only express $X$ as a function of the eigenvalues and the eigenfunctions of $A$). We overcome this problem by noticing that it is sufficient to have a parametrization of the orthogonal complement of the tangent space.

\begin{lem}[Tangent Space, Bosonic Quasifree States]\label{lem:bosonicqftangentspace}
Let $\Gamma \in \Mcal$ and $P := -\Gamma \Scal$, and $\vec\Gcal := \{\vec{P} \in \Sfrak_2(L^2(\Rbb^3)\oplus L^2(\Rbb^3)): \Scal\vec{P}^*\Scal = \vec{P} \}$. Consider the decomposition $T_\Gamma \Acal_{+} = T_\Gamma\Mcal \oplus (T_\Gamma\Mcal)^\perp$. Then
\[(T_\Gamma\Mcal)^\perp = \left\{ -\left( P^* B P^* + (1-P^*)B(1-P^*) \right)\Scal : B \in \vec\Gcal,\ B +\Jcal B\Jcal =0\right\}.\]
\end{lem}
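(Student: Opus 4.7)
My plan is to reduce the problem to the standard tangent/complement structure of the manifold of idempotents via a change of variables. First observe that the quasifree condition $\Gamma\Scal\Gamma = -\Gamma$ is equivalent to $P^2 = P$ for $P := -\Gamma\Scal$, so $P$ is a bounded (generally non-self-adjoint) idempotent. Linearizing $\Gamma\Scal\Gamma+\Gamma = 0$ in direction $\Xi$ gives $\Xi\Scal\Gamma+\Gamma\Scal\Xi+\Xi = 0$, which using $-\Gamma\Scal=P$ and $-\Scal\Gamma=P^*$ becomes $\Xi = P\Xi+\Xi P^*$. I then set $Y := -\Xi\Scal$, equivalently $\Xi = -Y\Scal$; this turns the tangent equation into the familiar $Y = PY+YP$, i.e.\ $\Pi(Y) = 0$ for the idempotent $\Pi(Y) := PYP + (\id-P)Y(\id-P)$. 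Under the same substitution, $\Xi=\Xi^*$ translates to $Y \in \vec\Gcal$ and the affine constraint $\Xi = \Jcal\Xi\Jcal$ to $Y+\Jcal Y\Jcal = 0$. So, setting $W := \{Y \in \vec\Gcal : Y+\Jcal Y\Jcal = 0\}$, the space $T_\Gamma\Acal_{+}$ is parametrized by $W$ and $T_\Gamma\Mcal$ by $V := W \cap \ker\Pi$, and a short cyclicity computation shows $\langle -B\Scal,-Y\Scal\rangle = \langle B,Y\rangle$, so orthogonality is preserved under the change of variables.

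The HS adjoint of $\Pi$ is $\Pi^*(B) = P^*BP^* + (\id-P^*)B(\id-P^*)$ (again by cyclicity). I aim to show $V^\perp \cap W = \{\Pi^*(B) : B \in W\}$, which after undoing $\Xi=-Y\Scal$ is exactly the claim of the lemma. The reduction is as follows: if I can show that both $\Pi$ and $\Pi^*$ map $W$ into itself, then $\Pi|_W$ is a bounded idempotent on the Hilbert space $W$, its HS adjoint on $W$ is $\Pi^*|_W$, and standard Hilbert-space theory yields $V^\perp\cap W = \ker(\Pi|_W)^\perp = \ran((\Pi|_W)^*) = \{\Pi^*(B) : B \in W\}$, finishing the proof.

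The main obstacle is thus the invariance of $W$ under $\Pi$, which rests on two conjugation identities for $P$. First, $\Scal P\Scal = P^*$ is immediate from $P = -\Gamma\Scal$ and $\Scal^2 = \id$; combined with $\Scal Y^*\Scal = Y$ this yields $\Scal(PYP)^*\Scal = PYP$, and analogously for the $(\id-P)$ block, so $\Pi$ preserves $\vec\Gcal$. Second, $\Jcal P\Jcal = \id - P$, which I obtain by combining the rewritten defining equation $\Jcal\Gamma\Jcal = \Gamma+\Scal$ of $\Acal_{+}$ with the anti-commutation $\Jcal\Scal\Jcal = -\Scal$ (itself immediate from $\Jcal\Scal = -\Scal\Jcal$); this gives $\Jcal(PYP)\Jcal = (\id-P)(\Jcal Y\Jcal)(\id-P) = -(\id-P)Y(\id-P)$, and summing with the analogous identity for the other block produces $\Jcal\Pi(Y)\Jcal = -\Pi(Y)$. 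Hence $\Pi(W) \subseteq W$; the same two identities establish $\Pi^*(W) \subseteq W$, completing the invariance step and thereby the proof.
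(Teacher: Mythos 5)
Your proof is correct and takes essentially the same route as the paper: both pass to the idempotent $P=-\Gamma\Scal$ via the isometric change of variables, linearize the constraints, and identify the orthogonal complement as the $P^*$-diagonal block inside the constraint space $W$. The only difference is presentational — you make explicit (via the idempotent $\Pi$, its HS adjoint $\Pi^*$, and their invariance of $W$) the invariance facts that the paper's phrase "we easily find" leaves implicit.
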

\begin{proof}
Let us define the map $\psi: \Gamma \mapsto P = -\Gamma \Scal$. We can explicitly write down its inverse: $\Gamma = -P\Scal$ since $\Scal^2 = \id$. Let us specify domains and codomains. In parallel to the spaces $\Acal$, $\Acal_{+}$ and $\Mcal$ we introduce (notice that $P_\text{vac}:=\psi(\gvac) = \gvac$)
\begin{align*}
 \Gcal & := P_\text{vac} + \vec\Gcal,\quad \vec\Gcal := \{\vec{P} \in \Sfrak_2(L^2(\Rbb^3)\oplus L^2(\Rbb^3)): \Scal\vec{P}^*\Scal = \vec{P} \},\\
 \Gcal_{+} & := \left\{ P \in \Gcal: P + \Jcal P\Jcal = \id \right\},\\
 \Mcal_{\Gcal} & := \left\{ P\in\Gcal: P^2 = P,\ P+\Jcal P \Jcal = \id \right\}.
\end{align*}
It is easy to check that $\psi$ is an isometric isomorphism $\Acal \to \Gcal$ (both sides with the Hilbert--Schmidt scalar product) and is also an isometric isomorphism $\Acal_{+} \to \Gcal_{+}$. Furthermore, it is a diffeomorphism $\Mcal \to \Mcal_\Gcal$.

Following the strategy of Lemma \ref{lem:tangentspacenopairing} and Lemma \ref{lem:pairingtangentspace}, with the condition $\Scal P^* \Scal = P$ taking the place of self-adjointness everywhere, we obtain
\[T_P \Mcal_\Gcal = \left\{ B \in \vec\Gcal: PBP=0=(1-P)B(1-P) \text{ and }B+\Jcal B\Jcal = 0 \right\}.\]

The differential of $\psi$ is given by $D_\Gamma\psi B = -B\Scal$, which is also an isometric isomorphism $T_\Gamma \Acal_{+} \to T_P \Gcal_{+}$. In particular it conserves orthogonality, so it is also an isomorphism
\[D_\Gamma\psi: (T_\Gamma \Mcal)^\perp \to (T_P \Mcal_\Gcal)^\perp,\]
where the orthogonal complement is defined by the decomposition $T_P\Gcal_{+} = T_P\Mcal_\Gcal \oplus (T_P\Mcal_\Gcal)^\perp$.
Rewriting \[T_P \Mcal_\Gcal = \left\{PB(1-P) + (1-P)BP: B \in T_P\Gcal_{+}\right\}\] we easily find $(T_P \Mcal_\Gcal)^\perp = \left\{ P^*B P^* + (1-P^*)B(1-P^*): B \in T_P\Gcal_{+}\right\}$. Consequently we get
\begin{align*}(T_\Gamma\Mcal)^\perp & = (D_\Gamma\psi)^{-1} (T_P\Mcal_\Gcal)^\perp \\ & = \left\{ -\left( P^*B P^* + (1-P^*)B(1-P^*)\right)\Scal: B \in T_P\Gcal_{+}\right\}.\end{align*}
Noticing that $T_P\Gcal_{+} = \{B \in \vec\Gcal: B+\Jcal B\Jcal=0\}$, the proof is complete.
\end{proof}

Unlike fermionic states, bosonic states can exhibit \emph{condensation}, so that for $\psi \in \fock_s$ it is possible that for some $f \in L^2(\Rbb^3)$ we have the additional degree of freedom
\[\langle \psi, a(f) \psi\rangle \neq 0.\]
(For any quasifree state this is vanishing.) Let us define the \emph{Weyl operator}
\[\Wbb(\varphi) := \exp(a(\varphi)-a^*(\varphi)), \quad f \in L^2(\Rbb^3).\]
The Weyl operator is unitary and $\Wbb(\varphi)^* = \Wbb(-\varphi)$; furthermore they also  satisfy $\Wbb(\varphi_1) \Wbb(\varphi_2) = \Wbb(\varphi_1+\varphi_2) e^{-i \Im \langle \varphi_1,\varphi_2\rangle}$ for all $\varphi_1, \varphi_2 \in L^2(\Rbb^3)$. An ideal condensate is described by a \emph{coherent state} $\Psi = \Wbb(\varphi)\Omega$; we have 
\[\Wbb(\varphi)^* a(f) \Wbb(\varphi) = a(f) + \langle g,f\rangle, \quad \Wbb(\varphi)^* a^*(f) \Wbb(\varphi) = a^*(f) + \langle f,g\rangle\]
and consequently $\langle \Psi, a(f) \Psi\rangle = \langle f,\varphi\rangle$. The expected number of particles in the coherent state is $\langle \Psi,\Ncal \Psi\rangle = \norm{\varphi}_{L^2}^2$. Using the BCH formula\footnote{The BCH formula states that for any two operators $A$, $B$ which both commute with $[A,B]$, we have $e^{A+B} = e^{-\frac{1}{2}[A,B]} e^A e^B$.} together with the CCR we find
\begin{align*}\Psi & = e^{-\frac{1}{2}\norm{\varphi}_{L^2}^2} e^{a^*(\varphi)} e^{a(\varphi)} \Omega = e^{-\frac{1}{2}\norm{\varphi}_{L^2}^2} \sum_{n=0}^\infty \frac{a^*(\varphi)^n}{n!}\Omega \\ &= e^{-\frac{1}{2}\norm{\varphi}_{L^2}^2} \sum_{n=0}^\infty \frac{1}{\sqrt{n!}}f\otimes \cdots \otimes f.\end{align*}
From the last formula we see that a coherent state is a linear combination of different particle numbers, where the probability to measure $n$ particles is given by a Poisson distribution peaked at the value $\norm{\varphi}_{L^2}^2$.

We now enlarge the class of quasifree states to the class of Bogoliubov states by including a condensate; more precisely, a \emph{Bogoliubov state}\footnote{A remark on the nomenclature: In the literature often also states of the form $\Wbb(\varphi) \Ubb_\Vcal \Omega$ are called quasifree states. We prefer to call them Bogoliubov states, to distinguish them from quasifree states $\Ubb_\Vcal \Omega$ which satisfy the Wick rule as given before and \eqref{eq:bosqfprop}.} is any state of the form
\begin{equation}\label{eq:bogoliubovstate}\Wbb(\varphi) \Ubb_\Vcal \Omega\end{equation}
where $\varphi \in L^2(\Rbb^3)$ (typically not normalized) and $\Vcal$ is any implementable Bogoliubov map.
 Using the fact that any expectation value of an odd number of creation and annihilation operators in a quasifree state vanishes, \[\langle \Ubb_\Vcal \Omega, a^\natural_1 \cdots a^\natural_{2n+1} \Ubb_\Vcal \Omega\rangle = 0,\]
 we find
 \begin{equation}\label{eq:condensate}\langle \Wbb(\varphi) \Ubb_\Vcal \Omega, a_x \Wbb(\varphi) \Ubb_\Vcal \Omega\rangle = \langle \Ubb_\Vcal \Omega, \big(a_x + \varphi(x)\big) \Ubb_\Vcal \Omega\rangle = \varphi(x).\end{equation}
  Furthermore we find that the one-particle reduced density matrix is given by
\begin{equation}\label{eq:gammatilde}\begin{split}\gamma(x,y) &= \langle \Wbb(\varphi) \Ubb_\Vcal \Omega, a^*_y a_x \Wbb(\varphi) \Ubb_\Vcal \Omega\rangle\\
   & = \langle \Ubb_\Vcal \Omega, \big(a^*_y + \cc{\varphi(y)}\big) \big(a_x + \varphi(x)\big) \Ubb_\Vcal \Omega\rangle\\
   & = \langle \Ubb_\Vcal \Omega, a^*_y a_x \Ubb_\Vcal \Omega\rangle + \cc{\varphi(y)}\varphi(x) =: \tilde\gamma(x,y) + \cc{\varphi(y)}\varphi(x).
\end{split}\end{equation}
Similarly, we find the pairing density to be
\begin{equation}\label{eq:alphatilde}\alpha(x,y) = \langle \Ubb_\Vcal \Omega, a_y a_x \Ubb_\Vcal \Omega\rangle + \varphi(x)\varphi(y) =: \tilde\alpha(x,y) + \varphi(x)\varphi(y).\end{equation}
In other words, $\gamma = \tilde\gamma + \lvert \varphi\rangle\langle \varphi \rvert$ and $\alpha = \tilde\alpha + \varphi \otimes\varphi$. The $\tilde\gamma$ and $\tilde\alpha$ so introduced are called the \emph{truncated expectations}. They clearly satisfy the quasifree-property
\begin{equation}\label{eq:tildequasifree}\tilde\Gamma \Scal \tilde\Gamma = - \tilde\Gamma, \quad \text{where } \tilde\Gamma= \left(\begin{array}{cc}
             \tilde\gamma & \tilde\alpha\\ \cc{\tilde\alpha} & 1+\cc{\tilde\gamma}
            \end{array}\right).\end{equation}
So by first obtaining $\varphi$ through \eqref{eq:condensate} and then solving \eqref{eq:gammatilde} and \eqref{eq:alphatilde} for $\tilde\alpha$, $\tilde\gamma$, we have a natural way of assigning a unique $(\varphi,\tilde\gamma,\tilde\alpha)$ to every quasifree state; conversely every triple $(\varphi,\tilde\gamma,\tilde\alpha)$ satisfying \eqref{eq:tildequasifree} defines a (up to a phase) unique Bogoliubov state in Fock space through \eqref{eq:bogoliubovstate}.

So as we just argued, Bogoliubov states are characterized by independently the condensate wave function $\varphi \in L^2(\Rbb^3)$ and the truncated expectations, i.\,e.\ $\tilde\Gamma$. We therefore introduce the manifold
\[\Mcal^\text{bog} = L^2(\Rbb^3) \times \Mcal\quad \subset\quad L^2(\Rbb^3) \times \Acal,\]
where $\Mcal$ is the manifold of quasifree generalized one-particle reduced density matrices as determined before. Of course, the tangent space is given by
\begin{equation}\label{eq:bogtangentspace}T_{(\varphi,\tilde\Gamma)} \Mcal^\text{bog} = L^2(\Rbb^3) \oplus T_{\tilde\Gamma} \Mcal.\end{equation}

So we can now formulate the \emph{Dirac--Frenkel principle for the condensate wave function and the generalized reduced density matrix of bosonic Bogoliubov states}:
Calculate the derivative of the condensate wave function evolving by the many-body Hamiltonian $H$ in the Bogoliubov state associated with $\varphi_t$ and $\gtqf$, 
\[\langle f,\partial_t \varphi_t\rangle = \langle \psi^\text{bog}_t, [a(f),\frac{1}{i}H] \psi^\text{bog}_t \rangle,\]
  then apply the projection onto the tangent space to $\partial_t \varphi_t$. Calculate the derivative of the generalized one-particle reduced density matrix evolving by the many-body Hamiltonian $H$ in the quasifree state associated with $\gtqf$,
\[\langle F_1, \partial_t \gtqf F_2\rangle
  = \langle \Ubb_{\Vcal_t}\Omega, [A^*(F_2) A(F_1),\frac{1}{i}H] \Ubb_{\Vcal_t}\Omega\rangle,\]
  then apply the projection onto the tangent space to $\partial_t \gtqf$. The projected derivatives describe the effective evolution.

\begin{thm}[The Quasifree Reduction Principle for Bogoliubov States]
 The Dirac--Frenkel principle, applied by projecting the curve of the many-body evolution from the space $L^2(\Rbb^3) \times \Acal$ to the approximation manifold of Bogoliubov states  $\Mcal^\text{bog} = L^2(\Rbb^3) \times \Mcal$ yields the equations of the quasifree reduction principle \eqref{eq:bosonicqfreduction}. 
\end{thm}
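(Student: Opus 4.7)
My plan is to split the proof along the two factors of the tangent space $T_{(\varphi_t,\tilde\Gamma_t)}\Mcal^{\text{bog}} = L^2(\Rbb^3) \oplus T_{\tilde\Gamma_t}\Mcal$ given by \eqref{eq:bogtangentspace}. For the condensate factor, the tangent space is the full linear space $L^2(\Rbb^3)$, so the orthogonal projection onto it is the identity; the Dirac--Frenkel principle therefore directly yields $\partial_t \varphi_t(x) = \langle \psi^{\text{bog}}_t, [a_x, \tfrac{1}{i}H]\psi^{\text{bog}}_t\rangle$ without further work, which is the first equation of \eqref{eq:bosonicqfreduction}.

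For the quasifree factor I would adapt the strategy used in the proof of Theorem \ref{thm:withpairing}. Writing $\psi^{\text{qf}}_t = \Ubb_{\Vcal_t}\Omega$ for an implementable Bogoliubov map $\Vcal_t$ and setting $P := -\tilde\Gamma_t \Scal$, which by \eqref{eq:bosqfprop} is a (generally non-orthogonal) projection, I would use the intertwining $\Ubb_{\Vcal_t}^*A(F)\Ubb_{\Vcal_t}=A(\Vcal_t^{-1}F)$ together with the bosonic analogue of the diagonalization $\Vcal_t^{-1}\tilde\Gamma_t\Vcal_t = \gvac$ (paralleling the identity established in the proof of Theorem \ref{thm:withpairing}, but now respecting the $\Scal$-metric) to reduce every quasifree-state expectation value to a vacuum expectation value involving the conjugated Hamiltonian $\Ubb_{\Vcal_t}^*H\Ubb_{\Vcal_t}$.

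Next I would invoke Lemma \ref{lem:bosonicqftangentspace} to show that the unprojected ambient velocity $\langle F_1, \partial_t\gtqf F_2\rangle = \langle\psi^{\text{qf}}_t,[A^*(F_2)A(F_1),\tfrac{1}{i}H]\psi^{\text{qf}}_t\rangle$ differs from the quasifree reduction expression for $\tilde\gamma$ and $\tilde\alpha$ by an operator in $(T_{\tilde\Gamma_t}\Mcal)^\perp$, which is parametrized by elements of the form $-\bigl(P^*BP^* + (1-P^*)B(1-P^*)\bigr)\Scal$. After Bogoliubov conjugation, the two diagonal-block contributions reduce to vacuum expectation values of commutators of the form $[a(g_1)a(g_2),\,\cdot\,]$ and $[a^*(g_1)a^*(g_2),\,\cdot\,]$ which vanish by commuting the lone annihilation operators against $\Omega$ via the CCR, mirroring the CAR computation at the end of the proof of Theorem \ref{thm:withpairing}.

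Finally, the full equations of \eqref{eq:bosonicqfreduction} for $\gamma_t$ and $\alpha_t$ would be recovered by combining the projected equations for $\partial_t\tilde\gamma_t$ and $\partial_t\tilde\alpha_t$ with the condensate equation via the decompositions $\gamma_t = \tilde\gamma_t + \lvert\varphi_t\rangle\langle\varphi_t\rvert$ and $\alpha_t = \tilde\alpha_t + \varphi_t\otimes\varphi_t$ from \eqref{eq:gammatilde}--\eqref{eq:alphatilde}; the Weyl shift $\Wbb(\varphi_t)^*a_x\Wbb(\varphi_t) = a_x + \varphi_t(x)$ then converts the $\psi^{\text{qf}}_t$-expectations plus condensate cross-terms into the $\psi^{\text{bog}}_t$-expectations appearing in \eqref{eq:bosonicqfreduction}. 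The main obstacle I anticipate is the bookkeeping forced by the indefinite $\Scal$-metric and the non-orthogonal projection $P$, which take the places of the self-adjoint orthogonal projection $\Gamma$ of the fermionic case; in particular one must carefully verify that Bogoliubov conjugation maps the block structure of Lemma \ref{lem:bosonicqftangentspace} exactly into the purely-creation and purely-annihilation operator combinations that vanish against the vacuum.
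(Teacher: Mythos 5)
Your overall strategy matches the paper's: split along the factorization $T_{(\varphi_t,\tilde\Gamma_t)}\Mcal^{\text{bog}}=L^2(\Rbb^3)\oplus T_{\tilde\Gamma_t}\Mcal$, observe that the condensate factor requires no projection, and for the quasifree factor show that the ambient velocity $\partial_t\gtqf$ is already tangent to $\Mcal$ by pairing it against the parametrization of $(T_{\tilde\Gamma_t}\Mcal)^\perp$ from Lemma~\ref{lem:bosonicqftangentspace} and reducing to vacuum expectation values via the Bogoliubov intertwining. That is exactly what the paper does.

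However, there is a genuine error in the central computation. You claim the two diagonal-block contributions reduce after conjugation to commutators of the form $[a(g_1)a(g_2),\cdot]$ and $[a^*(g_1)a^*(g_2),\cdot]$. In fact they reduce to $[a^*(f_2)a(f_1),\cdot]$ and $[a(\cc{g_2})a^*(\cc{g_1}),\cdot]$, i.e.\ always one creation and one annihilation operator. To see why this must be so: after conjugating by $\Vcal_t$ the projections $P$ and $1-P$ become $\gvac=\operatorname{diag}(0,\id)$ and $1-\gvac=\operatorname{diag}(\id,0)$, and $A^*(\cdot)$ applied to an upper-component vector is $a^*$ while applied to a lower-component vector is $a$; the diagonal blocks $P(\cdot)\Scal P$ and $(1-P)(\cdot)\Scal(1-P)$ therefore produce pairs with the same component restriction on both $F_1$ and $F_2$, hence one $a^*$ and one $a$. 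Your stated types $aa$ and $a^*a^*$ are what the \emph{off-diagonal} (tangent-space) blocks produce. This matters: $\langle\Omega,[a(g_1)a(g_2),X]\Omega\rangle$ and $\langle\Omega,[a^*(g_1)a^*(g_2),X]\Omega\rangle$ do \emph{not} vanish in general, since in each case only one of the two terms in the commutator has an annihilation operator hitting the vacuum (the other survives as $\langle a^*(g_2)a^*(g_1)\Omega,X\Omega\rangle$ or its adjoint). If the diagonal blocks gave such terms, the projection would not be trivial and the theorem would fail. The vanishing that the proof actually needs relies on the mixed $a^*a$ structure, where both sides of the commutator annihilate against the vacuum (the right-hand term because $a\Omega=0$, the left-hand term because $\langle\Omega,a^*(\cdot)=0$). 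Your description ``commuting the lone annihilation operators against $\Omega$'' does not apply to $aa$ (no creation to commute through) or $a^*a^*$ (no annihilation at all); it is the correct mechanism only for the $aa^*$ term, which rearranges to $a^*a$ by CCR.

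As a smaller point, the bosonic diagonalization is $\Vcal_t^*\gtqf\Vcal_t=\gvac$, not $\Vcal_t^{-1}\gtqf\Vcal_t=\gvac$; since $\Vcal_t$ is a symplectomorphism rather than a unitary, $\Vcal_t^{-1}=\Scal\Vcal_t^*\Scal\neq\Vcal_t^*$, and the distinction propagates into the intermediate identities $\Vcal^{-1}\Scal(1+\gtqf\Scal)=\left(\begin{smallmatrix}\id&0\\0&0\end{smallmatrix}\right)\Vcal^*$ which you will need to carry out the reduction.
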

As it was already the case for fermionic systems, the only difference between the Dirac--Frenkel principle and the principle of quasifree reduction is the projection onto the tangent space. So instead of really doing the projection onto the tangent space, we simply check that the right-hand sides of \eqref{eq:bosonicqfreduction} are orthogonal to the orthogonal complement of the tangent space.
\begin{proof}
Recall \eqref{eq:bogtangentspace}: as far as $\varphi$ is concerned, the tangent space is given by all of $L^2(\Rbb^3)$; i.\,e.\ the projection onto the tangent space is just the identity. Therefore we only have to take care of projecting the evolution of $\gamma$ and $\alpha$; more precisely we will check that the derivative of $\tilde \Gamma^\text{qf}_t$ already lives in the tangent space. For this, it is sufficient to show that $\langle A, \partial_t \gtqf\rangle_{\Sfrak_2} = 0$ for all operators $A \in \left( T_\Gamma \Mcal\right)^\perp$. Notice that a priori $\partial_t \gtqf \in T_\Gamma \Acal_{+}$, so it is sufficient to consider the orthogonal complement as a subspace of $T_\Gamma \Acal_{+}$ instead of all of $\vec\Acal$. So by Lemma \ref{lem:bosonicqftangentspace}, we can write $A = -\left( P^*BP^* + (1-P^*)B(1-P^*) \right)\Scal$ for some operator $B$.

Since $B$ is Hilbert--Schmidt, it has a singular value decomposition $B = \sum_{j} \cc{\lambda_j} \lvert \xi_j \rangle\langle \varphi_j\rvert$, $\lambda_j \in \Cbb$, $\varphi_j,\xi_j \in L^2(\Rbb^3)\oplus L^2(\Rbb^3)$. Thus we find
\[\begin{split}\langle A, \partial_t \gtqf\rangle_{\Sfrak_2} & = -\!\sum_j\!\lambda_j \!\left(\! \langle \xi_j, P(\partial_t \gtqf) \Scal P \varphi_j \rangle\!+\!\langle \xi_j, (1\!-\!P)(\partial_t \gtqf) \Scal (1\!-\!P) \varphi_j \rangle\!\right)\!.\end{split}\]
So it suffices that every such expectation value vanishes individually. Recall that we have $\langle F_1, \partial_t \gtqf F_2\rangle = \langle \psi_t^\text{qf}, [A^*(F_2) A(F_1),\frac{1}{i}H] \psi^\text{qf}_t\rangle$ for all $F_1, F_2 \in L^2(\Rbb^3)\oplus L^2(\Rbb^3)$. Recall also that we can write $\psi^\text{qf}_t = \Ubb_\Vcal\Omega$, where the Bogoliubov map $\Vcal$ satisfies
\[\Vcal^{-1} = \Scal\Vcal^*\Scal,\quad \Vcal^* \Scal \Vcal = \Scal\quad \text{and}\quad \Vcal^* \gtqf \Vcal = \gvac.\] Using these facts we find
\begin{equation}\label{eq:qfids}\Vcal^{-1} \Scal (1+\gtqf \Scal) = \left(\begin{array}{cc}\id&0\\0&0\end{array}\right)\Vcal^*\quad \text{and}\quad \Vcal^{-1} (1+\Scal\gtqf) = \left(\begin{array}{cc}\id&0\\0&0\end{array}\right) \Vcal^* \Scal.\end{equation}
We now check that the second kind of expectation value vanishes. For all $F_1, F_2 \in L^2(\Rbb^3)\oplus L^2(\Rbb^3)$ we have
\[\begin{split}
  & \langle F_1, (1-P)(\partial_t \gtqf) \Scal(1-P)F_2\rangle\\
  &= \langle \psi^\text{qf}_t, [A^*(\Scal(1-P)F_2) A((1-P^*)F_1),\frac{1}{i}H] \psi^\text{qf}_t\rangle\\
  & = \langle \Omega, [ A^*\left(\Vcal^{-1}\Scal(1-P)F_2\right) A\left(\Vcal^{-1}(1-P^*)F_1\right) ,\frac{1}{i}\Ubb^*_\Vcal H \Ubb_\Vcal] \Omega\rangle\\
    & = \langle \Omega, [ A^*\left(\left(\begin{array}{cc}\id&0\\0&0\end{array}\right)\Vcal^* F_2\right) A\left(\left(\begin{array}{cc}\id&0\\0&0\end{array}\right) \Vcal^* \Scal F_1\right) ,\frac{1}{i}\Ubb^*_\Vcal H \Ubb_\Vcal] \Omega\rangle
\end{split}\]
Writing $\Vcal^* F_2 = \left(\!\! \begin{array}{c} f_2 \\  g_2\end{array}\!\! \right)$ and $\Vcal^* \Scal F_1 = \left(\!\! \begin{array}{c} f_1 \\ g_1\end{array}\!\! \right)$, we find that this is
\[\langle \Omega, [A^*(\left(\!\! \begin{array}{c} f_2 \\  0\end{array}\!\! \right)) A(\left(\!\! \begin{array}{c} f_1 \\  0\end{array}\!\! \right)),\frac{1}{i}\Ubb^*_\Vcal H \Ubb_\Vcal] \Omega\rangle = \langle \Omega, [a^*(f_2) a(f_1),\frac{1}{i}\Ubb^*_\Vcal H \Ubb_\Vcal] \Omega\rangle =0.\]
A similar calculation shows that also $\langle F_1, P (\partial_t \gtqf) \Scal P F_2\rangle =0$.

So we have shown that $\partial_t \gtqf \in T_{\gtqf} \Mcal$, thus the projection on the tangent space is trivial, and the Dirac--Frenkel principle becomes the quasifree reduction principle.
\end{proof}

The calculation to obtain the explicit evolution equations from the bosonic principle of quasifree reduction \eqref{eq:bosonicqfreduction} was sketched in \cite{Bachetal}, based on the canonical commutation relations and the Wick theorem. The explicit equations are known as the Hartree--Fock--Bogoliubov equations, here written in terms of the truncated expectations $\tilde\alpha_t$ and $\tilde\gamma_t$ and the condensate wave function $\varphi_t$,
\begin{equation}
 \label{eq:TDHFB}
 \begin{split}
 i \partial_t \varphi_t & = h_\text{HFB}(\tilde\gamma_t) \varphi_t + \Pi_V(\tilde\alpha_t + \varphi_t\otimes\varphi_t) \cc{\varphi_t}\\
 i \partial_t \tilde\gamma_t & = [h_\text{HFB}\left(\tilde\gamma_t + \lvert \varphi_t\rangle\langle \varphi_t\rvert\right),\tilde\gamma_t] + \Pi_V\left(\tilde\alpha_t+ \varphi_t \otimes\varphi_t\right) \cc{\tilde\alpha_t}\\
 &\quad - \tilde\alpha_t \Pi_V\left(\tilde\alpha_t+\varphi_t \otimes\varphi_t\right)^*, \\
 i\partial_t \tilde\alpha_t & = h_\text{HFB}\left(\tilde\gamma_t+\lvert \varphi_t\rangle\langle \varphi_t\rvert\right) \tilde\alpha + \tilde\alpha \cc{h_\text{HFB}\left(\tilde\gamma_t +\lvert \varphi_t\rangle\langle\varphi_t\rvert\right)}\\
 & \quad + \Pi_V(\tilde\alpha_t+\varphi_t\otimes\varphi_t)(1+\cc{\tilde\gamma_t}) + \tilde\gamma_t \Pi_V(\tilde\alpha_t +\varphi_t\otimes\varphi_t),
 \end{split}
\end{equation}
where $h_\text{HFB}$ differs only by the sign of the exchange term from the fermionic $h_\text{HF}$:
\[h_\text{HFB}(\tilde\gamma_t) = h + V\ast \rho_{\tilde\gamma_t} + X_V(\tilde\gamma_t).\]
More compactly, \eqref{eq:TDHFB} can be written symplectically \cite[Eq.\ (41)]{Bachetal} 
\[\begin{split}i \partial_t \varphi_t & = h_\text{HFB}(\tilde\gamma_t) \varphi_t + \Pi_V(\tilde\alpha_t + \varphi_t\otimes\varphi_t) \cc{\varphi_t},\\
i\partial_t \tilde\Gamma_t &= \Scal G_{\Gamma_t} \tilde\Gamma_t - \tilde\Gamma_t G_{\Gamma_t} \Scal,\end{split}\]
where $\tilde\Gamma_t$ is the truncated generalized one-particle density matrix \eqref{eq:tildequasifree}, and $\Gamma_t$ the (non-truncated) generalized one-particle density matrix. The generalized Hartree--Fock--Bogoliubov operator is
\begin{equation}\label{eq:genhfbop}G_{\Gamma_t} = \left( \begin{array}{cc}
                          h_\text{HFB}(\gamma_t) & \Pi_V(\alpha_t)\\ \Pi_V(\alpha_t)^* &  \cc{h_\text{HFB}(\gamma_t)}
                         \end{array}
 \right).\end{equation}


\section{Well-Posedness of the Fermionic Bogoliubov--de\,Gennes Equations}\label{sec:wellposedness}
The well-posedness of the effective equation obtained from the Dirac--Frenkel principle is not automatic; however under reasonable assumptions on the interaction potential and the initial data it can be established by standard methods. Since to our knowledge there is no proof completely spelled out in the literature, in this section we give a detailed proof that the time-dependent fermionic Bogoliubov--de\,Gennes equations are well-posed. We consider only the case $h = -\Delta$ (in particular no external potential $V_\text{ext}$ is included), and we are interested in interaction potentials $V$ including the Coulomb potential.

In this section we also consider mixed states as initial data for the Bogoliubov-de
Gennes equation, i.\,e.\ generalized one-particle density matrices satisfying $0 \leq \Gamma_0 \leq \id$ (whereas before in the derivation we considered only pure states, $\Gamma_0^2 = \Gamma_0$).

Well-posedness for similar equations has been discussed before, e.\,g.\ in \cite{Hainzletal,LewLenz} for a relativistic system (which generally exhibits finite-time blow-up, and global well-posedness only for small initial data) and in \cite{Bachetal} for the bosonic Hartree--Fock--Bogoliubov equations. They are generalizations of earlier work on the Hartree-Fock equations without pairing ($\alpha=0$)  \cite{Chadam,ChadamGlassey,BovePratoFano,Isozaki}; see also \cite{Anapolitanos}. 
All these works are applications of the abstract formalism developed by Segal \cite{Segal}.

\subsection{Duhamel Formula and Integral Form}
The standard approach to show local well-posedness is through an application of the Banach fixed-point theorem to the integral equation obtained from the Duhamel formula 
(in the spirit of the Picard-Lindel\"off Theorem). This is the strategy we also follow here. There is a small complication as it does not seem possible\footnote{Let $V(x)$ be the Coulomb potential $\tfrac{1}{\lvert x\rvert}$ and $M$ the Fourier multiplier $\sqrt{1-\Delta}$. In energy space \eqref{eq:defenergyspace} the norm is given by $\norm{\gamma}_{\Ycal_1} = \norm{M\gamma M}_{\sone}$ and $\norm{\alpha}_{\Ycal_2} = \norm{\alpha(\cdot,\cdot)}_{H^1(\Rbb^3\times\Rbb^3)}$.
Consider the mild equation for $\gamma_t$: while we can control the term with $[\Vcal(\gamma_s),\gamma_s]$ with $\sup_{0\le s\le t}\norm{\gamma_s}_{\Ycal_1}^2$,
a homogeneity argument shows that we cannot control those with $-\Pi_V(\alpha_s)\cc{\alpha_s}$ and $\alpha_s\Pi_V(\overline{\alpha}_s)$ 
with $\sup_{0\le s\le t}\norm{\alpha_s}_{\Ycal_2}^2$. Indeed we have
\begin{align*}
\tr\,\partial_j\bigg[\!\int_0^t \! e^{i\Delta(t-s)} \Pi_V(\alpha_s)\cc{\alpha_s} e^{-i\Delta(t-s)} \di s\bigg]\partial_j&=\int_0^t \! \tr\!\big(e^{i\Delta(t-s)} \partial_j\Pi_V(\alpha_s)\cc{\alpha_s} \partial_je^{-i\Delta(t-s)} \big)\di s\\
									&=\int_0^t \! \tr\big(\partial_j\Pi_V(\alpha_s)\cc{\alpha_s}\partial_j\big)\di s,
\end{align*}
where we have used Fubini's theorem and the invariance of the trace under a unitary conjugation.
This expression has contributions scaling like $1/\text{[Length]}^3$ (the Coulomb potential, and one derivative contributing from each $\partial_j$) and is of degree 2 in $\alpha_s$. 
However, a bound in terms of $\norm{\alpha_s}_{\Ycal_2}^2$ can accommodate at most two derivatives, i.\,e.\ has contributions scaling at most like $1/\text{[Length]}^2$. The same holds for the term corresponding to $\alpha_s\Pi_V(\overline{\alpha}_s)$. This provides us with reasonable doubts about the validity of the fixed-point scheme in energy space. To be controllable the two terms $-\Pi_V(\alpha_s)\cc{\alpha_s} $ and $\alpha_s\Pi_V(\overline{\alpha}_s)$ cannot be separated, and
the smoothing effect from the conjugation by $e^{i\Delta(t-s)}$ does not seem to help to estimate trace-class norms.
} 
to apply the Banach fixed-point theorem in energy space (denoted $\Ycal$ below) when $V$ has a Coulomb singularity.  
Instead we introduce an additional Banach space $\Zcal$ tailored to the Banach fixed-point theorem. 
We find local solutions in $\Zcal$ and we show afterward that these give rise to global energy space solutions.

\medskip

\noindent Recall the Bogoliubov--de\,Gennes equations from \eqref{eq:TDBCS} (with $h = -\Delta$):
\begin{equation}
 \label{eq:initialvalueproblem}
 \begin{split}
 i \partial_t \gamma_t & = [h_\text{HF}(\gamma_t),\gamma_t] - \Pi_V(\alpha_t) \cc{\alpha_t} - \alpha_t \Pi_V(\alpha_t)^*, \\
 i\partial_t \alpha_t & = h_\text{HF}(\gamma_t) \alpha + \alpha \cc{h_\text{HF}(\gamma_t)} + \Pi_V(\alpha_t)(1-\cc{\gamma_t}) - \gamma_t \Pi_V(\alpha_t)\\
 h_\text{HF}(\gamma_t) & = -\Delta + V\ast \rho_{\gamma_t} - X_V(\gamma_t) =: -\Delta + \Vcal(\gamma_t).
 \end{split}
\end{equation}
Using Duhamel's formula, we obtain the integral form\footnote{A solution to the Duhamel integral equation is called a mild solution of the corresponding differential equation; a solution of the differential equation itself is also called strong solution for emphasis.} of the Bogoliubov--de\,Gennes equations:
\begin{equation}
  \label{eq:equation_integral_form}
 \begin{split}
    \gamma_t & = e^{i\Delta t} \gamma_0 e^{-i\Delta t} - i \int_0^t \di s\, e^{i\Delta(t-s)} \Big\{ [\Vcal(\gamma_s),\gamma_s] - \Pi_V(\alpha_s) \cc{\alpha_s} \\
    & \hspace{6.2cm} + \alpha_s \Pi_V(\cc{\alpha_s}) \Big\} e^{-i\Delta(t-s)},\\
    \alpha_t & = e^{-i \hds t} \alpha_0 - i \int_0^t \di s\, e^{- i \hds (t-s)} 
	  \Big\{ \Vcal(\gamma_s)_1 \alpha_s + \Vcal(\overline{\gamma}_s)_2 \alpha_s\\ & \hspace{4.9cm}- \left[ (\gamma_s)_1 + (\overline{\gamma}_s)_2 \right] V(x_1-x_2) \alpha_s\Big\},
 \end{split}
\end{equation}
where in the equation for $\alpha_t$, we are using the identification of the Hilbert--Schmidt operator $\alpha_t$ with a two-particle wave function in $L^2(\Rbb^3\times\Rbb^3)$, 
denoting action of an operator on the $i$-th variable ($i \in \{1,2\}$) by $(\cdot)_i$, and reading $\mathds{h} := -\Delta_1 - \Delta_2 + V(x_1-x_2)$ as a two-body Schr\"odinger operator.

We denote the nonlinearities involved in \eqref{eq:equation_integral_form}, now using the operator picture instead of the wave function picture, by 
\begin{equation}\label{eq:def_K_ij}
\begin{split}
   {K}_1(\omega) & :=[\Vcal(\gamma),\gamma] - \Pi_V(\alpha) \cc{\alpha} + \alpha \Pi_V(\cc{\alpha}),\\
   {K}_2(\omega) & := \Vcal(\gamma) \alpha + \alpha \Vcal(\cc{\gamma}) - \gamma \Pi_V(\alpha) -\Pi_V(\alpha)\cc{\gamma}.
  \end{split}
\end{equation}
where $\omega$ denotes the pair $\omega = (\gamma,\alpha)$. We denote by $\Gamma = \Gamma(\gamma,\alpha) = \Gamma(\omega)$ the corresponding generalized one-particle density matrix, see \eqref{eq:generalized1pdm}.

\subsection{Choice of Banach Spaces}
 For short we write $\sone = \sone(L^2(\Rbb^3))$ for the space of trace-class operators on $L^2(\Rbb^3)$, $\stwo = \stwo(L^2(\Rbb^3))$ for that of Hilbert--Schmidt operators, 
and $\sinf = \sinf(L^2(\Rbb^3))$ for that of linear bounded operators  (equipped with the operator norm $\lVert\cdot\rVert_{\mathcal{B}}$). For another Banach space $\mathcal{X}$, the space of linear bounded operators on $\mathcal{X}$ will be written $\mathcal{B}(\mathcal{X})$ with norm $\norm{\cdot}_{\mathcal{B}(\mathcal{X})}$.

Let us introduce the Fourier multiplier $M := (1-\Delta)^{1/2}$.
We define the Banach space $\Ycal=\Ycal_1\times\Ycal_2 \subset\Sfrak_1\times\Sfrak_2$ (with norm $\norm{\cdot}_{\Ycal}$), usually called the energy space, by
\begin{equation}\label{eq:defenergyspace}\begin{split}
\Ycal_1 & :=\{\gamma\in\Sfrak_1:\gamma^*=\gamma\ \&\ \norm{\gamma}_{\Ycal_1} := \norm{M\gamma M}_{\Sfrak_1}<\infty \},\\
\Ycal_2 & :=\{\alpha\in\Sfrak_2:\alpha^T=-\alpha\ \&\ \norm{\alpha(\cdot,\cdot)}_{H^1}<\infty \}.
\end{split}\end{equation}
Here $\alpha^T$ denotes the operator with integral kernel $\alpha^T(x,y) = \alpha(y,x)$. Here and below $\norm{\alpha(\cdot,\cdot)}_{H^1}$ refers to the norm of $\alpha$ in $H^1(\Rbb^3\times\Rbb^3)$. 

We define the Banach space $\mathcal{Z}:=\mathcal{Z}_1\times\mathcal{Z}_2$ for the purpose of constructing local solutions by the Banach fixed-point theorem by
\[\begin{split}
 \mathcal{Z}_1 & :=\{\gamma\in\sone:\gamma^*=\gamma \text{ and } \norm{\gamma}_{\Zcal_1} := \norm{M\gamma}_{\Sfrak_1}+\norm{\gamma M}_{\Sfrak_1}<\infty \},\\
 \Zcal_2 & :=\{\alpha\in\stwo:\alpha^T=-\alpha \text{ and } \norm{\alpha(\cdot,\cdot)}_{H^1}<\infty \}.
 \end{split}
\]
For any operator $A$ we have
\begin{equation}\label{eq:alphanorm}\norm{A(\cdot,\cdot)}_{H^1}^2 \leq \norm{MA}_{\stwo}^2 + \norm{AM}_{\stwo}^2 \leq 2 \norm{A(\cdot,\cdot)}_{H^1}^2.\end{equation}
By \eqref{eq:alphanorm}, since the Hilbert--Schmidt norm is smaller than the trace norm, we also have $\norm{\gamma(\cdot,\cdot)}_{H^1} \leq \norm{\gamma}_{\Zcal_1}$.
To shorten notations, we sometimes identify an operator with its integral kernel and write $\norm{\alpha}_{H^1}$,
$\hds\alpha$ and $e^{it\hds}\alpha$.

\begin{lem}[Invariance under the Linear Evolution]\label{lem:invariance_under_the_flow}
  Let $V$ satisfy \eqref{eq:assump_A}, and let the two-particle Schr\"odinger operator $\hds := -\Delta_x-\Delta_y+V(x-y)$ act on $L^2(\Rbb^3\times\Rbb^3,\di x\di y)$.
  
  Then $\Ycal$ and $\Zcal$ are invariant under the group action 
  \[
   (e^{is\Delta},e^{it\hds})\cdot(\gamma,\alpha):= \Big(e^{is\Delta}\gamma e^{-is\Delta}, e^{it\hds} \alpha\Big)
  \]
  where $e^{it\hds}$ acts on the integral kernel $\alpha(\cdot,\cdot) \in L^2(\Rbb^3\times\Rbb^3)$.
\end{lem}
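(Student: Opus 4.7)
The plan is to exploit the factorized structure of the group action and treat the $\gamma$- and $\alpha$-components separately, since $e^{is\Delta}$ acts by unitary conjugation on $L^2(\Rbb^3)$ and $e^{it\hds}$ is a unitary group on $L^2(\Rbb^3\times \Rbb^3)$.

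For the $\gamma$-component (in both $\Ycal_1$ and $\Zcal_1$), the decisive observation is that $M=(1-\Delta)^{1/2}$ is a Borel function of $-\Delta$ and therefore commutes strongly with $e^{is\Delta}$. Combined with unitary invariance of the trace norm under $A\mapsto UAU^*$ and under one-sided unitary multiplication, this yields
\[
\norm{M e^{is\Delta}\gamma e^{-is\Delta} M}_{\sone} = \norm{e^{is\Delta}(M\gamma M)e^{-is\Delta}}_{\sone} = \norm{M\gamma M}_{\sone},
\]
and analogously $\norm{M e^{is\Delta}\gamma e^{-is\Delta}}_{\sone}=\norm{M\gamma}_{\sone}$ and $\norm{e^{is\Delta}\gamma e^{-is\Delta} M}_{\sone}=\norm{\gamma M}_{\sone}$. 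Self-adjointness is preserved by unitary conjugation, and trace class is obvious.

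For the $\alpha$-component (noting $\Ycal_2=\Zcal_2$), three properties need to be checked. First, the Hilbert--Schmidt norm of $\alpha$ equals the $L^2(\Rbb^6)$ norm of its kernel, which is conserved by the $L^2$-unitary $e^{it\hds}$. Second, antisymmetry of the kernel under the swap $\tau:(x,y)\mapsto(y,x)$ is preserved because $\hds$ commutes with $\tau$: the kinetic part $-\Delta_x-\Delta_y$ is symmetric in its variables, and $V(x-y)=V(y-x)$ by the evenness assumed in \eqref{eq:assump_A}. Hence $e^{it\hds}$ preserves the antisymmetric subspace of $L^2(\Rbb^6)$.

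The main and only substantial obstacle is the $H^1$-invariance. The idea is to compare the conserved quadratic form $\langle \alpha,\hds\alpha\rangle$ to $\norm{\alpha}_{H^1(\Rbb^6)}^2$. Under \eqref{eq:assump_A}, which permits Coulomb singularities, Hardy's inequality in the relative coordinate $r=x-y$ (after the obvious change of variables) implies that $V(x-y)$ is an infinitesimal form-perturbation of $-\Delta_x-\Delta_y$: for every $\eps>0$ there is $C_\eps$ such that
\[
|\langle \alpha, V(x-y)\alpha\rangle|\;\leq\; \eps\,\langle\alpha,(-\Delta_x-\Delta_y)\alpha\rangle + C_\eps \norm{\alpha}_{L^2(\Rbb^6)}^2.
\]
Picking $\eps=\tfrac{1}{2}$ and some $C>0$ large enough we obtain the two-sided form bound
\[
\tfrac{1}{2}\norm{\alpha}_{H^1(\Rbb^6)}^2 \;\leq\; \langle \alpha, (\hds+C)\alpha\rangle \;\leq\; \tfrac{3}{2}\norm{\alpha}_{H^1(\Rbb^6)}^2 + C\norm{\alpha}_{L^2(\Rbb^6)}^2.
\]
In particular, the form domain of $\hds$ coincides with $H^1(\Rbb^6)$. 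Since by the functional calculus $\langle e^{it\hds}\alpha,(\hds+C) e^{it\hds}\alpha\rangle = \langle \alpha,(\hds+C)\alpha\rangle$ and $\norm{e^{it\hds}\alpha}_{L^2}=\norm{\alpha}_{L^2}$, the two-sided form bound transfers the $H^1$ finiteness from $\alpha$ to $e^{it\hds}\alpha$. This establishes invariance of $\Ycal_2=\Zcal_2$ and, together with the first step, of $\Ycal$ and $\Zcal$.
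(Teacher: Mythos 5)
Your proof is correct and follows essentially the same route as the paper's: unitary invariance of the trace-class norms combined with $[M,e^{is\Delta}]=0$ for the $\gamma$-component; unitarity and commutation of $\hds$ with the variable swap for the $\stwo$-norm and antisymmetry of $\alpha$; and for the $H^1$-invariance, infinitesimal form-boundedness of $V(x-y)$ with respect to $-\Delta_x-\Delta_y$, which identifies $H^1(\Rbb^6)$ as the form domain of $\hds$ and hence an invariant subspace of $e^{it\hds}$. The only cosmetic difference is that you make the form-domain step explicit via the two-sided form bound and conservation of $\langle\alpha,(\hds+C)\alpha\rangle$, whereas the paper simply invokes invariance of the form domain; also note that \eqref{eq:assump_A} already is the quantitative version of the Hardy-type inequality you invoke, so the detour through Hardy's inequality is not needed.
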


\begin{proof}
 The action on $\gamma$ is given by conjugation with the unitary Fourier multipliers $e^{is\Delta}$, conserving the self-adjointness and the $\sone$-, $\Zcal_1$- and $\Ycal_1$-norms. For $\alpha\in H^2(\Rbb^3\times\Rbb^3)=\dom(\hds)$, the equality $(\hds\alpha)^T=\hds (\alpha^T)$ is straightforward,
 thus the action of $e^{it\hds}$ preserves the transpose symmetry (here $\alpha^T=-\alpha$). 
The Hilbert--Schmidt norm is left invariant because $\hds$ is self-adjoint. As $V(x-y)$ is infinitesimally form-bounded w.\,r.\,t.\ $-\Delta_x - \Delta_y$, the Sobolev space $H^2(\Rbb^3\times\Rbb^3,\di x\di y)$ (being the domain of $\hds$) is invariant under $e^{it\hds}$ , and so is $H^1(\Rbb^3\times\Rbb^3,\di x\di y)$ as its form domain.
\end{proof}

\subsection{Results on Well-Posedness}\label{sec:wpres}
We now study the existence of solutions to the time-dependent Bogoliubov--de\,Gennes equations. 
As usual we expect them to conserve the number of particles $\tr(\gamma)$ and the energy of the system
\begin{equation}\label{eq:conserved}
\mathcal{E}(\Gamma):=\tr(-\Delta \gamma)+\frac{1}{2}\iint \left[\rho_{\gamma}(x)\rho_{\gamma}(y)-|\gamma(x,y)|^2+|\alpha(x,y)|^2\right]V(x-y)\di x\di y.
\end{equation}

We need the following assumptions on the potential $V$:
\begin{equation}\label{eq:assump_A}
	 V\in L^2_{\text{loc}},\quad V(-x)=V(x)\quad \text{and} \quad V^2\le C^2_V(1-\Delta),
\end{equation}
Observe that in fact the condition $V^2\le C_V^2(1-\Delta)$ \emph{imposes} $V\in L^2_{\text{loc}}$
with for all balls $B$ of size $R>1$: $\int_{B}|V|^2\le C_3C_V^2R^3$,
where $C_3>0$ only depends on the dimension. This notation of $C_V$ will be used throughout this paper.

\begin{lem}[Local Well-Posedness in $\Zcal$]\label{lem:localwp}
Assume that $V$ satisfies \eqref{eq:assump_A}. Consider a pair of initial data $(\gamma_0,\alpha_0) \in \Zcal$. Then there exists a unique $\Zcal$-continuous solution
to the mild equations \eqref{eq:equation_integral_form}.

If $[0,T)$ is the maximal interval of existence of the solution, we have the usual blow-up alternative: either $T=+\infty$ or $\lim_{t\to T^{-}}\norm{\omega_t}_{\Zcal}=\infty$.
\end{lem}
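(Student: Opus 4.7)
The plan is to apply the Banach fixed-point theorem to the integral equations \eqref{eq:equation_integral_form}. I would define the map $\Phi$ on $C([0,T];\Zcal)$ sending $\omega$ to the right-hand side of \eqref{eq:equation_integral_form}. By Lemma \ref{lem:invariance_under_the_flow} the linear flows act isometrically on $\Zcal$, so pulling $\Zcal$-norms under the integrals yields
\[\norm{\Phi(\omega)_t-\Phi(0)_t}_{\Zcal}\le \int_0^t\bigl(\norm{K_1(\omega_s)}_{\Zcal_1}+\norm{K_2(\omega_s)}_{\Zcal_2}\bigr)\di s,\]
where $\Phi(0)_t=(e^{i\Delta t}\gamma_0e^{-i\Delta t},e^{-i\hds t}\alpha_0)$ is the free evolution of the initial data. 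Thus, once one has quadratic bounds
\[\norm{K_1(\omega)}_{\Zcal_1}+\norm{K_2(\omega)}_{\Zcal_2}\le C\norm{\omega}_\Zcal^2\]
and their Lipschitz counterparts
\[\norm{K_i(\omega)-K_i(\omega')}_{\Zcal_i}\le C\bigl(\norm{\omega}_\Zcal+\norm{\omega'}_\Zcal\bigr)\norm{\omega-\omega'}_\Zcal,\quad i=1,2,\]
choosing $T$ small enough depending on $\norm{\omega_0}_\Zcal$ turns $\Phi$ into a strict contraction on the closed ball of radius $2\norm{\omega_0}_\Zcal$ in $C([0,T];\Zcal)$, and the Banach fixed-point theorem delivers the unique $\Zcal$-continuous local mild solution.

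The blow-up alternative follows from the standard continuation argument: if $[0,T^*)$ is the maximal interval of existence with $T^*<\infty$ and $\limsup_{t\to T^{*-}}\norm{\omega_t}_\Zcal=M<\infty$, then the local existence time furnished by the fixed-point argument with datum $\omega_{t_0}$ depends only on $M$ and so extends the solution past $T^*$ once $t_0$ is close enough to $T^*$, contradicting maximality. Uniqueness on the maximal interval is propagated by applying the contraction estimate on overlapping subintervals.

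The main obstacle will be establishing the quadratic estimates on $K_1,K_2$. This is precisely the difficulty that motivated the introduction of $\Zcal$ in place of the energy space $\Ycal$: the asymmetric norm $\norm{M\gamma}_\sone+\norm{\gamma M}_\sone$ never forces a derivative on \emph{both} sides of the singular operator $\Pi_V(\alpha)$, which for Coulomb-type potentials would require an impossible bound of the form $\norm{M\Pi_V(\alpha)M}$. The basic building blocks I would use are the Schatten Hölder inequality together with $\norm{\Pi_V(\alpha)}_\stwo\le C_V\norm{\alpha}_{\Zcal_2}$, obtained by applying $V^2\le C_V^2(1-\Delta)$ pointwise in one kernel variable, and the analogous estimate for $X_V(\gamma)$; the direct term $V\ast\rho_\gamma$ is controlled as a multiplication operator in $\sinf$ from $\norm{\gamma}_{\Zcal_1}$ by Sobolev and Lieb--Thirring-type inequalities applied to $\rho_\gamma$. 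The stray factors of $M$ appearing in the $\Zcal_1$-norm are then shuttled through the nonlinearities via commutator identities such as $M\Pi_V(\alpha)=\Pi_V(\alpha)M+[M,\Pi_V(\alpha)]$, so that the derivative ends up on a factor that absorbs it through the $H^1$ control built into $\Zcal_2$, rather than colliding with the singularity of $V$. The Lipschitz bounds follow by the same multilinear manipulations. The arguments are technical but essentially routine once the space $\Zcal$ has been identified.
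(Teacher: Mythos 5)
Your overall strategy is the same as the paper's: set up the Picard operator on $C([0,T];\Zcal)$, use Lemma~\ref{lem:invariance_under_the_flow} so that the linear flows are isometric and can be pulled out, establish quadratic and locally Lipschitz bounds on $K_1,K_2$ in $\Zcal$, and then invoke the Banach fixed-point theorem together with the standard continuation argument. That skeleton is correct, and the blow-up alternative as you argue it is routine (though you should run the continuation argument off a sequence realizing the $\liminf$, not the $\limsup$, to get the stated $\lim_{t\to T^-}\norm{\omega_t}_\Zcal=\infty$ rather than merely $\limsup=\infty$).

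Where the proposal falls short is precisely in the step you dismiss as ``technical but essentially routine'': shuttling the factor of $M$ through the interaction operators. The commutator identity you write, $M\Pi_V(\alpha)=\Pi_V(\alpha)M+[M,\Pi_V(\alpha)]$, does not by itself buy anything, because $M=(1-\Delta)^{1/2}$ is a nonlocal pseudodifferential operator and $[M,\Pi_V(\alpha)]$ has kernel $(M_x-M_y)\bigl(V(x-y)\alpha(x,y)\bigr)$, which is not obviously estimable --- the difference $M_x-M_y$ does \emph{not} pass through multiplication by $V(x-y)$. This is exactly the difficulty one needs a structural idea to overcome. The paper's pull-through trick is to write $M=(1-\Delta)M^{-1}$ so that only the \emph{local} derivatives $\partial_j$ have to be moved, and to exploit that the multiplication operator $V(x-y)$ commutes with $\nabla_x-\nabla_y$, giving the exact identities $[\nabla,X_V(\gamma)]=X_V([\nabla,\gamma])$, $[\nabla,\Pi_V(\alpha)]=\Pi_V([\nabla,\alpha])$, and $\nabla(V*\rho_\gamma)=V*\rho_{[\nabla,\gamma]}$. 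This produces the formula \eqref{eq:pullthrough} and reduces everything to the building-block bounds in Lemma~\ref{lem:estimate_2}. Relatedly, your invocation of ``Sobolev and Lieb--Thirring-type inequalities'' for the direct term is not what is needed here; the paper controls $\norm{V*\rho_\gamma}_\sinf$ by a decomposition $\gamma=a_+-a_-$ built from the spectral decomposition of $M^{1/2}\gamma M^{1/2}$, which is a different mechanism. So while the fixed-point scaffolding is right, the proposal does not identify the key commutation property that makes the nonlinear estimates go through, and without it the scheme does not close.
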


\begin{lem}[Regularity of the Solution]\label{lem:regularity}
If the pair of initial data $(\gamma_0,\alpha_0) \in \Zcal$ satisfies $[-\Delta,\gamma_0]\in\Sfrak_1$ and $\alpha_0(\cdot,\cdot)\in H^2$,
then the mild solution of the previous lemma is a strong solution
 $(\gamma_t,\alpha_t) \in C\big([0,T),\Zcal\big) \cap C^1\big([0,T),\sone\times\stwo\big)$.
\end{lem}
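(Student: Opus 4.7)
I would re-run the Banach fixed-point argument of Lemma \ref{lem:localwp} in a stronger Banach space $\tilde\Zcal\subset\Zcal$ that encodes the extra regularity of the initial data, and then differentiate the mild equation \eqref{eq:equation_integral_form} termwise to obtain the strong solution. Concretely, set
\[\tilde\Zcal_1 := \{\gamma\in\Zcal_1:[\Delta,\gamma]\in\sone\},\quad \norm{\gamma}_{\tilde\Zcal_1}:=\norm{\gamma}_{\Zcal_1}+\norm{[\Delta,\gamma]}_{\sone},\]
\[\tilde\Zcal_2 := \{\alpha\in\Zcal_2:\alpha(\cdot,\cdot)\in H^2(\Rbb^3\times\Rbb^3)\},\quad \norm{\alpha}_{\tilde\Zcal_2}:=\norm{\alpha(\cdot,\cdot)}_{H^2},\]
and $\tilde\Zcal:=\tilde\Zcal_1\times\tilde\Zcal_2$; the hypotheses of the lemma are precisely $(\gamma_0,\alpha_0)\in\tilde\Zcal$.

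\textbf{Fixed point in $\tilde\Zcal$.} The linear flow preserves $\tilde\Zcal$: conjugation commutes with the commutator, $[\Delta,e^{it\Delta}\gamma_0 e^{-it\Delta}]=e^{it\Delta}[\Delta,\gamma_0]e^{-it\Delta}$, so the $\sone$-norm of the commutator is conserved; and by \eqref{eq:assump_A} together with Kato--Rellich, $\dom(\hds)=H^2(\Rbb^3\times\Rbb^3)$ with equivalent norms, so $e^{-it\hds}$ preserves $\tilde\Zcal_2$ (extending Lemma \ref{lem:invariance_under_the_flow}). The substantial work is showing that $K_1,K_2$ from \eqref{eq:def_K_ij} are locally Lipschitz as maps $\tilde\Zcal\to\sone\times\stwo$ with the appropriate additional norm control. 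Expanding by Leibniz,
\[[\Delta,[\Vcal(\gamma),\gamma]]=[[\Delta,\Vcal(\gamma)],\gamma]+[\Vcal(\gamma),[\Delta,\gamma]],\qquad [\Delta,\Pi_V(\alpha)\cc{\alpha}]=[\Delta,\Pi_V(\alpha)]\cc{\alpha}+\Pi_V(\alpha)[\Delta,\cc{\alpha}],\]
reduces the matter to commutator bounds like $\norm{[\Delta,\Pi_V(\alpha)]}_{\sinf}$ and $\norm{[\Delta,V\ast\rho_\gamma]}_{\sinf}$, obtained via the identity $[-\Delta,V\ast f]=-(\Delta V)\ast f-2(\nabla V)\ast\nabla f$ and the assumption $V^2\le C_V^2(1-\Delta)$, combined with Kato--Seiler--Simon-type estimates as already used in Lemma \ref{lem:localwp}; analogously one shows $K_2:\tilde\Zcal\to\stwo\cap H^2$ with Lipschitz bounds on bounded sets. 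A standard contraction mapping then produces a unique local $\tilde\Zcal$-solution, which by uniqueness in $\Zcal$ coincides with the one from Lemma \ref{lem:localwp}.

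\textbf{Propagation and extraction of the strong solution.} To extend the $\tilde\Zcal$-solution up to the maximal $\Zcal$-time $T$ I use Gr\"onwall on the mild equation, schematically of the form
\[\norm{\omega_t}_{\tilde\Zcal}\le\norm{\omega_0}_{\tilde\Zcal}+\int_0^t C\big(\norm{\omega_s}_\Zcal\big)\norm{\omega_s}_{\tilde\Zcal}\,ds,\]
which gives a finite bound since $\sup_{s\in[0,T-\delta]}\norm{\omega_s}_\Zcal<\infty$ for every $\delta>0$ by Lemma \ref{lem:localwp}. With $\omega\in C([0,T),\tilde\Zcal)$ in hand, each term of the mild equation \eqref{eq:equation_integral_form} can be differentiated in $\sone\times\stwo$: the free flow contributes $e^{it\Delta}i[\Delta,\gamma_0]e^{-it\Delta}\in\sone$ (resp.\ $-ie^{-it\hds}\hds\alpha_0\in\stwo$); the Duhamel boundary term contributes $-iK_1(\omega_t)\in\sone$ (resp.\ $-iK_2(\omega_t)\in\stwo$), continuous in $t$ by Lipschitz continuity; and interchanging $[\Delta,\cdot]$ and $\hds$ with the integral is justified by the local uniform $\sone$-, resp.\ $L^2$-bounds on $[\Delta,K_1(\omega_s)]$ and $\hds K_2(\omega_s)$ just established. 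This recovers \eqref{eq:initialvalueproblem} pointwise in $\sone\times\stwo$.

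\textbf{Main obstacle.} The technical heart is the commutator bound $[\Delta,K_1(\omega)]\in\sone$ in step two. For a Coulomb singularity, $V$ is only form-bounded (not operator-bounded) by $1-\Delta$, so controlling $[\Delta,\Pi_V(\alpha)]$ and $[\Delta,\Vcal(\gamma)]\gamma'$ in trace ideals is delicate; this is exactly why the natural formulation uses the commutator $[\Delta,\gamma]$ rather than $\Delta\gamma$, which need not be trace-class even in very regular situations.
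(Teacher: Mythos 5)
Your strategy is genuinely different from the paper's, and it has a gap at exactly the point you yourself flag as the ``main obstacle.''

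The paper does \emph{not} propagate the extra regularity $\alpha_t\in H^2$, $[\Delta,\gamma_t]\in\Sfrak_1$ forward in time. Instead, it constructs a \emph{candidate} for the derivative directly: formally linearizing the nonlinearity, it writes a linear mild equation in $\Sfrak_1\times\Sfrak_2$ for $v_t=\dot\omega_t$, with initial value $\dot\omega_0 = -i(A\omega_0 + K(\omega_0))$ (the r.\,h.\,s.\ of the Bogoliubov--de\,Gennes equations at $t=0$, which is in $\Sfrak_1\times\Sfrak_2$ precisely because of the assumptions $[-\Delta,\gamma_0]\in\Sfrak_1$ and $\alpha_0\in H^2=\dom_{\Sfrak_2}(\hds)$). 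The decisive observation is that the Fr\'echet derivative $\tfrac{\partial}{\partial\omega}K_j(\omega_t)$ — for a \emph{fixed} $\omega_t\in\Zcal$ — extends by Lemma~\ref{lem:estimates_of_wtK} to a bounded operator on $\Sfrak_1\times\Sfrak_2$ with norm controlled by $\norm{\omega_t}_\Zcal$. So the Picard scheme for $v_t$ runs in $\Sfrak_1\times\Sfrak_2$ without ever needing $K_j$ to preserve a smaller space. One then shows by a Gr\"onwall argument (following Segal's Lemma 3.1) that the difference quotients $\eps^{-1}(\omega_{t+\eps}-\omega_t)$ converge to $v_t$ in $\Sfrak_1\times\Sfrak_2$. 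This is deliberately weaker in its conclusion about $\alpha_t$ (it does \emph{not} assert $\alpha_t\in H^2$ for $t>0$), which is why the paper inserts a remark after the lemma statement explaining why they do not run Segal's argument at the level of $\Zcal$: for a Coulomb singularity, $\hds\alpha$ generally fails to be in $H^1$ even for Schwartz $\alpha$, and the natural Segal-style hypothesis would be $\alpha_0\in\dom_{\Sfrak_2}(|\hds|^{3/2})$, which is more restrictive than $H^2$.

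Your route — Picard in a strengthened space $\tilde\Zcal$ and then termwise differentiation of the Duhamel formula — would, if it works, give the \emph{stronger} conclusion $\omega_t\in C([0,T),\tilde\Zcal)$, but the crucial step $K_j:\tilde\Zcal\to\tilde\Zcal_j$ locally Lipschitz is left unproved. Concretely, $[\Delta,\Pi_V(\alpha)]$ has integral kernel $2(\nabla V)(x-y)\cdot(\nabla_x+\nabla_y)\alpha(x,y)+V(x-y)\big((\Delta_x-\Delta_y)\alpha\big)(x,y)$, and for Coulomb $\nabla V\sim |x-y|^{-2}$; putting this in $\Sfrak_2$ uses the fact that $(\nabla_x+\nabla_y)\alpha$ vanishes on the diagonal (a consequence of $\alpha^T=-\alpha$) together with a Hardy inequality in the relative variable — none of which is written down, and the claim you state via ``$[-\Delta,V\ast f]=-(\Delta V)\ast f-2(\nabla V)\ast\nabla f$ and Kato--Seiler--Simon'' does not obviously close. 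Since the entire lemma hinges on this step, and you explicitly label it as unresolved, the proof as given has a genuine gap. If you want to salvage your route you would need to prove these commutator estimates under $V^2\leq C_V^2(1-\Delta)$; otherwise, the paper's route via the linearized mild equation in $\Sfrak_1\times\Sfrak_2$ is the one that avoids the obstacle altogether.
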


	The reader might wonder why we do not simply refer to \cite[Lemma~3.1]{Segal}. Transposed here, it states that if $[-\Delta,\gamma_0]\in\Zcal_1$
	and $\hds\alpha_0\in\Zcal_2$, that is $\alpha_0\in\dom_{\Sfrak_2}(|\hds|^{3/2})$, then the mild solution is a strong solution with time derivative in $\Zcal$. 
	When $V$ is the Coulomb potential, even if a kernel $\alpha(\cdot,\cdot)$ is of Schwartz class, 
	$\hds\alpha$ is generally not in $H^1(\Rbb^3\times\Rbb^3)$, unless the diagonal $\alpha(x,x)$ identically vanishes.
	Here, thanks to the transpose symmetry $\alpha_0^T=-\alpha_0$, the result still has an important value,
	but Lemma~\ref{lem:regularity} is in some sense optimal: it states regularity when the minimal requirements are satisfied. 
%
\begin{lem}[Conservation Laws]\label{lem:conserved_spectrum_and_quantities}
  Let $V$ satisfy \eqref{eq:assump_A} and let $T>0$. Let $(\omega_t)_{0\le t<T}\in C([0,T),\Zcal)$ be a solution to \eqref{eq:equation_integral_form}.
  Then the expected particle number $\tr(\gamma_t)$ is conserved, there is a unitary propagator $U(t,s)$ such that $\Gamma_t = U(t,0)\Gamma_0 U(t,0)^*$, and the spectrum of $\Gamma_t$ is conserved.

  In particular, if $0 \leq \Gamma_0 \leq \id$, then also $0 \leq \Gamma_t \leq \id$ (states remain states), 
  and if $\Gamma_0^2 = \Gamma_0$, then also $\Gamma_t^2 = \Gamma_t$ (pure quasifree states remain pure quasifree states).
  
  If the initial data $\omega_0\in\Ycal$ define a state (that is $0\le \Gamma_0\le \mathds{1}$),
  then the energy $\mathcal{E}(\Gamma_t)$ is also conserved.  
\end{lem}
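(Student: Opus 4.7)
\emph{Proof proposal.} The natural strategy is to establish all claims first for smooth strong solutions (provided by Lemma~\ref{lem:regularity}) and then to extend to general $\Zcal$-continuous solutions (resp.\ $\Ycal$-continuous ones for the energy) by density, using the continuous dependence on initial data built into the fixed-point argument of Lemma~\ref{lem:localwp}. The central observation is that the Bogoliubov--de\,Gennes system \eqref{eq:TDBCS} may be rewritten compactly as the Heisenberg-type equation
\[ i\partial_t \Gamma_t = [F_{\Gamma_t}, \Gamma_t] \]
on $L^2(\Rbb^3)\oplus L^2(\Rbb^3)$, with $F_{\Gamma_t}$ the self-adjoint generalized Hartree--Fock operator \eqref{eq:genhfop}.

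For strong solutions, the assumption \eqref{eq:assump_A} together with continuity $t\mapsto \omega_t \in \Zcal$ ensures that $F_{\Gamma_t}$ is self-adjoint on the time-independent form domain $H^1\oplus H^1$ (the interaction pieces $\Vcal(\gamma_t)$ and $\Pi_V(\alpha_t)$ are relatively form-bounded by $-\Delta$ uniformly on compact time intervals) and is sufficiently regular in $t$. Standard non-autonomous evolution theory (Kato's theorem; cf.\ Reed--Simon II, Thm.~X.70) then yields a strongly continuous two-parameter unitary propagator $U(t,s)$ on $L^2\oplus L^2$ with $U(s,s)=\id$ and $i\partial_t U(t,s) = F_{\Gamma_t} U(t,s)$. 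Defining $\tilde\Gamma_t := U(t,0)\Gamma_0 U(t,0)^*$, differentiation shows that $\tilde\Gamma_t$ obeys $i\partial_t \tilde\Gamma = [F_{\Gamma_t},\tilde\Gamma]$ with the same initial datum; uniqueness for this \emph{linear} ODE (with $F_{\Gamma_t}$ regarded as prescribed) gives $\Gamma_t = U(t,0)\Gamma_0 U(t,0)^*$. Unitary conjugation preserves the spectrum, so $0\le\Gamma_t\le\id$ and the projection property $\Gamma_t^2=\Gamma_t$ are both propagated. Conservation of $\tr(\gamma_t)$ follows by taking the trace of the first line of \eqref{eq:initialvalueproblem}: the commutator has vanishing trace by cyclicity, and the two pairing contributions $-\tr(\Pi_V(\alpha_t)\cc{\alpha_t})$ and $-\tr(\alpha_t \Pi_V(\alpha_t)^*)$ cancel after using $\Pi_V(\alpha_t)^* = -\Pi_V(\cc{\alpha_t})$, the evenness of $V$, and the antisymmetry $\alpha_t^T=-\alpha_t$ (both reduce to $\pm\iint V(x-y)|\alpha_t(x,y)|^2\di x\di y$ with opposite signs).

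For energy conservation with $\omega_0\in\Ycal$ defining a state, I would approximate by smooth data for which the strong solution makes the formal identity
\[ \frac{\di}{\di t}\mathcal{E}(\Gamma_t) = \tr\bigl(F_{\Gamma_t}\,\partial_t \vec\Gamma_t\bigr) = -i\,\tr\bigl(F_{\Gamma_t}[F_{\Gamma_t},\Gamma_t]\bigr) = 0 \]
rigorous. The first equality is the chain rule combined with the explicit computation (from \eqref{eq:conserved} and \eqref{eq:genhfop}) that $F_{\Gamma_t}$ is the $\Sfrak_2$-gradient of $\mathcal{E}$ at $\Gamma_t$; the last equality is cyclicity, which is justified by writing everything on $\vec\Gamma_t = \Gamma_t - \gvac \in \Sfrak_2$ and using the uniform bound on $F_{\Gamma_t}(M^2)^{-1}$ together with $M\gamma_t M\in\Sfrak_1$. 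Passage to the $\Ycal$-limit then relies on continuous dependence on data in $\Ycal$, which follows from the linear representation $\Gamma_t = U(t,0)\Gamma_0 U(t,0)^*$ applied to the approximating sequence.

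The principal obstacle is the construction of the unitary propagator $U(t,s)$ for the unbounded time-dependent generator $F_{\Gamma_t}$: one must verify a time-independent common form (or operator) domain and sufficient regularity of $t\mapsto F_{\Gamma_t}$ to invoke Kato's theorem, despite the Coulomb-type singularities permitted by \eqref{eq:assump_A}. A secondary technical nuisance is the rigorous justification of the cyclicity step in the energy derivative, since $\Gamma_t$ itself is not trace class; this is resolved by restricting the computation to the Hilbert--Schmidt part $\vec\Gamma_t$ and exploiting bounds of $F_{\Gamma_t}$ between the relevant Sobolev scales.
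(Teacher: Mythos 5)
Your outline for the propagator, conservation of spectrum, and conservation of $\tr\gamma_t$ is in essentials the same as the paper's: split $F_{\Gamma_t}$ into a time-independent unbounded part and a bounded, $t$-regular part, invoke Kato's non-autonomous theorem (the paper uses the reformulation in \cite{Schmid2014} of \cite[Theorem~X.70]{ReedSimon2}), identify the solution of the nonlinear equation with $U(t,0)\Gamma_0 U(t,0)^*$ by uniqueness of the associated linear problem, and use unitary equivalence to propagate the spectral properties; for $\tr\gamma_t$, kill each trace separately for regular data and then approximate in $\Zcal$. One minor imprecision: the version of Kato's theorem used here requires a time-independent \emph{operator} domain ($H^2\oplus H^2$, since $B(\Gamma_t)$ is bounded), not merely a time-independent form domain; and the cyclicity for $\tr[-\Delta,\gamma_t]$ needs the paper's care (writing it as $\partial_s\big|_{s=0}\tr(e^{is\Delta}\gamma_te^{-is\Delta})$, which is constant), since $-\Delta$ is unbounded.

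The genuine gap is in the energy conservation. You propose to prove $\frac{\di}{\di t}\Ecal(\Gamma_t)=0$ for regular data and then pass to the $\Ycal$-limit via ``continuous dependence on data in $\Ycal$, which follows from $\Gamma_t=U(t,0)\Gamma_0U(t,0)^*$.'' This does not go through: the contraction mapping of Lemma~\ref{lem:localwp} lives in $\Zcal$, not $\Ycal$ (indeed the whole reason for introducing $\Zcal$ is the footnote in Section~\ref{sec:wellposedness} explaining why the fixed-point scheme in $\Ycal$ fails for Coulomb-type $V$), and the kinetic part $\tr(-\Delta\gamma_t)$ of the energy is $\Ycal$-continuous but \emph{not} $\Zcal$-continuous. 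The unitary representation does not rescue this: $U(t,0)$ is unitary on $L^2\oplus L^2$ but there is no a priori control of how it acts on the weighted spaces $\Ycal_1$, so the representation does not give $\Ycal$-continuous dependence on data. The paper explicitly flags this obstruction and takes a different route: regularize the \emph{equation itself} by the spectral cutoff $\Pl$ (so that the generator $\Pl F_{S_t}\Pl$ is bounded and the formal energy derivative is rigorously zero for the regularized flow), verify consistency of the cutoff with the evolution, prove $\sone\times\stwo$-convergence of the regularized solutions and of the potential energy, deduce $\Ecal(\Gamma_t)\le\Ecal(\Gamma_0)$ by Fatou's lemma, and then obtain equality by time-reversal symmetry of the Bogoliubov--de\,Gennes system. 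That last step (semicontinuity plus time reversal instead of an actual limit) is the essential idea your proposal is missing.
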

\begin{rem}In the previous Lemma we have seen that the expected number of particles $\tr \gamma_t$ is always a conserved quantity. Now let us introduce the quantity
\[\norm{\Gamma_t - \Gamma_\text{vac}}_{\Sfrak_2}^2 = \tr\left(\Gamma_t-\Gamma_\text{vac}\right)^2 = 2 \tr (\gamma_t^2 -\alpha_t \cc{\alpha}) \leq 2 \tr\gamma_t.\]
It coincides with $2 \tr\gamma_t$ if and only if $\Gamma_t$ describes a pure quasifree state, and it is also a conserved quantity because $\Gamma_t$ is unitarily equivalent to $\Gamma_0$ and because there holds
\begin{equation}\label{eq:devi}\tr\left(\Gamma_t-\Gamma_\text{vac}\right)^2 - 2 \tr \gamma_t =\tr \left( \Gamma_t^2 - \Gamma_t \right).\end{equation}
We can interpret \eqref{eq:devi} as a measure of the deviation from $\Gamma_t$ being pure quasifree.

\medskip

\noindent In the bosonic case, the same role is played by \[\tr \left(P_t-P_\text{vac}\right)^2 +2 \tr \tilde \gamma_t= \tr \left(P_t^2 - P_t\right) = 2 \tr \left(\tilde\gamma_t(1+\tilde\gamma_t)-\tilde\alpha_t \cc{\tilde\alpha_t}\right),\]
where $P_t = -\tilde\Gamma_t \Scal$, $P_\text{vac} = - \Gamma_\text{vac} \Scal$, $\tilde\gamma_t$ is the truncated one-particle density matrix, and $\tilde\Gamma_t$ is the truncated generalized one-particle density matrix. To see that this is conserved, simply notice that $\tr \left( P_t^2 -P_t\right) = \tr \left( \tilde\Gamma_t \Scal \tilde\Gamma_t + \tilde\Gamma_t\right)\Scal$, and by \cite[Lemma~3.10]{Bachetal} $\tilde\Gamma_0 = \mathcal{U}_t \tilde\Gamma_t \mathcal{U}^*_t$ for some symplectomorphism $\mathcal{U}_t$ (meaning that $\mathcal{U}^*_t \Scal \mathcal{U}_t = \Scal = \mathcal{U}_t \Scal \mathcal{U}^*_t$). As a word of caution: for bosonic systems, $\tr \tilde\gamma_t$ is not conserved by itself, only the particle number $\tr \gamma_t = \tr \tilde\gamma_t +\norm{\varphi_t}^2$ is conserved.
\end{rem}

The following lemma is used, together with conservation of the energy and the particle number, 
to globally control the $\Ycal$-norm of a solution by the $\Ycal$-norm of the initial data 
and thus to ensure that it does not blow-up. 
\begin{lem}[Controlling the $\Ycal$-Norm]\label{lem:energybounds}
Let $V^2 \leq C_V^2(1-\Delta)$. Consider $\omega \in \Ycal$ satisfying $0\le \gamma^2-\alpha\overline{\alpha}\le \gamma$ (or equivalently $0 \leq \Gamma \leq \id$ for the associated $\Gamma$). Then the following (crude) estimates hold:
\begin{itemize}
\item for any $\delta > 0$ there exists a $C_\delta > 0$, depending only on $C_V$ and $\delta$, such that \[\norm{\gamma}_{\Ycal_1} \leq \frac{\Ecal(\Gamma) + (1+4\norm{\gamma}_{\Sfrak_1})C_\delta \norm{\gamma}_{\Sfrak_1}}{1-\delta (1+4 \norm{\gamma}_{\Sfrak_1})} \quad \text{and} \quad \norm{\alpha}_{\Ycal_2} \leq \sqrt{ 2 \norm{\gamma}_{\Ycal_1} }\, ;\]
\item furthermore $\lvert \Ecal(\Gamma) \rvert \leq (1+\norm{\omega}_\Ycal)^2$.
\end{itemize}
\end{lem}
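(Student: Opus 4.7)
The proof splits into three inequalities, all of which rest on a single operator inequality on $L^2(\Rbb^3 \times \Rbb^3)$: from the assumption $V^2 \leq C_V^2(1-\Delta)$ on $L^2(\Rbb^3)$, translation invariance and operator monotonicity of the square root give $|V(x-y)| \leq C_V M_x$, and by symmetry also $|V(x-y)| \leq C_V M_y$. I would establish and freely use this at the beginning.

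I would start with the subsidiary bound on $\alpha$. From $0 \leq \Gamma \leq \id$ and $\alpha^T = -\alpha$ (so that $\bar{\alpha} = -\alpha^*$) one reads off $\alpha\alpha^* \leq \gamma - \gamma^2 \leq \gamma$. The antisymmetry of $\alpha$ also yields $\norm{M\alpha}_{\Sfrak_2} = \norm{\alpha M}_{\Sfrak_2}$, and combining with \eqref{eq:alphanorm} and cyclicity of the trace gives
\[\norm{\alpha(\cdot,\cdot)}_{H^1}^2 \leq 2\norm{M\alpha}_{\Sfrak_2}^2 = 2\tr M\alpha\alpha^* M \leq 2\tr M\gamma M = 2\norm{\gamma}_{\Ycal_1}.\]

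For the main estimate on $\norm{\gamma}_{\Ycal_1}$, I would start from $\norm{\gamma}_{\Ycal_1} = \tr\gamma + \tr(-\Delta\gamma)$ (valid because $\gamma \geq 0$) and substitute $\tr(-\Delta\gamma)$ from the definition \eqref{eq:conserved} of $\Ecal$. Of the three resulting potential terms, the exchange and pairing ones, rewritten as $L^2(\Rbb^6)$-inner products with the multiplication operator $V(x-y)$, are jointly bounded by $C_V\tr M(\gamma^2 + \alpha\alpha^*) \leq C_V \tr M\gamma$ thanks to the state condition $\gamma^2 + \alpha\alpha^* \leq \gamma$. The direct term, rewritten as $\tr_{L^2(\Rbb^6)}\!\bigl[(\gamma\otimes\gamma)\,V(x-y)\bigr]$ with $\gamma\otimes\gamma \geq 0$, is bounded by $C_V\norm{\gamma}_{\Sfrak_1}\tr M\gamma$. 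The final ingredient is the AM--GM interpolation $\tr M\gamma \leq \tfrac{\eta}{2}\norm{\gamma}_{\Ycal_1} + \tfrac{1}{2\eta}\norm{\gamma}_{\Sfrak_1}$; choosing $\eta$ of order $\delta/(1 + \norm{\gamma}_{\Sfrak_1})$ absorbs a fraction of the form $\delta(1+4\norm{\gamma}_{\Sfrak_1})\norm{\gamma}_{\Ycal_1}$ back to the left-hand side, while the elementary inequalities $(1+N)^2 \leq (1+N)(1+4N)$ and $N \leq (1+4N)N$ cast the remaining terms as $C_\delta(1+4\norm{\gamma}_{\Sfrak_1})\norm{\gamma}_{\Sfrak_1}$. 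Solving for $\norm{\gamma}_{\Ycal_1}$ and dividing by the positive factor $1-\delta(1+4\norm{\gamma}_{\Sfrak_1})$ yields the announced bound.

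For the converse $|\Ecal(\Gamma)| \leq (1 + \norm{\omega}_\Ycal)^2$, I would apply exactly the same potential estimates but at the crude scale $\tr M\gamma \leq \norm{\gamma}_{\Ycal_1}$ (which holds because $M \leq M^2$, since $M \geq 1$), combined with $\norm{\gamma}_{\Sfrak_1} \leq \norm{\gamma}_{\Ycal_1}$ (from $M^{-2} \leq 1$ and cyclicity); the kinetic term is controlled directly by $\tr(-\Delta\gamma) \leq \norm{\gamma}_{\Ycal_1}$. The main obstacle throughout is purely bookkeeping: recovering the precise coefficient $(1+4\norm{\gamma}_{\Sfrak_1})$ requires tuning the interpolation parameter $\eta$ as a function of $\norm{\gamma}_{\Sfrak_1}$ so that the direct term (which carries an extra factor $\norm{\gamma}_{\Sfrak_1}$) combines cleanly with the exchange and pairing contributions.
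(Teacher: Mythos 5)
Your proof is correct and structurally parallel to the paper's, but it takes two small and arguably cleaner shortcuts in estimating the potential terms that are worth comparing. The paper bounds the three potential terms separately: the direct and exchange terms each get a factor $\norm{\gamma}_{\Sfrak_1}$ (for the exchange term this comes via $\norm{\gamma}_{\mathcal{B}} \leq \norm{\gamma}_{\Sfrak_1}$), while the pairing term is estimated via $\norm{\alpha}_{\Sfrak_2}^2 \leq \tr\gamma$ and $\norm{M\alpha}_{\Sfrak_2}^2 \leq \tr M\gamma M$ (both consequences of $\gamma^2 + \alpha\alpha^* \leq \gamma$). You instead combine the exchange and pairing terms into $C_V\tr M(\gamma^2 + \alpha\alpha^*) \leq C_V\tr M\gamma$, using the state condition once and avoiding the extra $\norm{\gamma}_{\Sfrak_1}$ in the exchange bound. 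You also estimate the direct term by viewing it as $\tr_{L^2(\Rbb^6)}[(\gamma\otimes\gamma)V(x-y)]$ and using $|V(x-y)| \leq C_V M_x$ directly, instead of invoking the separate estimate $\norm{V*\rho_\gamma}_{\sinf} \leq C_V\norm{M^{1/2}\gamma M^{1/2}}_{\Sfrak_1}$ from Lemma~\ref{lem:estimate_2}. Both of these are valid and give the same overall structure $\norm{\gamma}_{\Ycal_1} \leq \Ecal(\Gamma) + \norm{\gamma}_{\Sfrak_1} + C_V(1+\norm{\gamma}_{\Sfrak_1})\tr M\gamma$, after which the interpolation and absorption step is identical to the paper's. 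One small slip in your prose: to produce the absorption term $\delta(1+4\norm{\gamma}_{\Sfrak_1})\norm{\gamma}_{\Ycal_1}$, the interpolation parameter $\eta$ must satisfy $C_V\,\eta\,(1+\norm{\gamma}_{\Sfrak_1}) \sim \delta(1+4\norm{\gamma}_{\Sfrak_1})$, so $\eta$ is of order $\delta$ (uniformly in $\norm{\gamma}_{\Sfrak_1}$), not of order $\delta/(1+\norm{\gamma}_{\Sfrak_1})$; that said, the resulting $C_\delta$ still depends only on $C_V$ and $\delta$, so the final inequality is unaffected. The second bound and the $\alpha$ estimate you give are exactly the paper's.
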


The proofs of these lemmas are postponed to the following subsections. We now state the main result of this section.

\begin{thm}[Global Well-Posedness in $\Ycal$]\label{thm:mainwp} Assume that $V$ satisfies \eqref{eq:assump_A}. 
Consider a pair of initial data $(\gamma_0,\alpha_0)\in\Ycal$ satisfying $0 \leq \Gamma_0 \leq \id$.
Then there is a global mild solution $\omega \in C(\Rbb,\Ycal)$ to the Bogoliubov--de\,Gennes equations.

If additionally, $(\gamma_0,\alpha_0)$ satisfy $[\gamma_0,-\Delta] \in \Sfrak_1$, $\alpha_0(\cdot,\cdot) \in H^2$, 
then the solution is a strong solution in $C\big(\Rbb,\Ycal\big) \cap C^1\big(\Rbb,\sone\times\stwo\big)$.
\end{thm}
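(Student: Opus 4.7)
The plan is to combine Lemma~\ref{lem:localwp} (local $\Zcal$-well-posedness) with the conservation laws of Lemma~\ref{lem:conserved_spectrum_and_quantities} and the a priori bound of Lemma~\ref{lem:energybounds} to rule out blow-up. The key observation is a continuous embedding $\Ycal \hookrightarrow \Zcal$ valid on the cone $\{\gamma \geq 0\}$: for $\gamma \geq 0$, Schatten--H\"older gives $\norm{M\gamma}_{\Sfrak_1} \leq \norm{M\gamma^{1/2}}_{\Sfrak_2}\norm{\gamma^{1/2}}_{\Sfrak_2}$, while cyclicity of the trace yields $\norm{M\gamma^{1/2}}_{\Sfrak_2}^2 = \tr(M^2\gamma) = \norm{\gamma}_{\Ycal_1}$, and $M \geq \id$ gives $\tr\gamma \leq \norm{\gamma}_{\Ycal_1}$, so that $\norm{\gamma}_{\Zcal_1} \leq 2\norm{\gamma}_{\Ycal_1}$. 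The $\alpha$-components of $\Ycal$ and $\Zcal$ carry the same norm, so $\omega_0 \in \Ycal$ with $0 \leq \Gamma_0 \leq \id$ automatically lies in $\Zcal$, and Lemma~\ref{lem:localwp} delivers a unique mild solution $\omega \in C([0, T^*), \Zcal)$ on a maximal interval with the usual blow-up alternative.

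By Lemma~\ref{lem:conserved_spectrum_and_quantities} the spectrum of $\Gamma_t$ is conserved, so $0 \leq \Gamma_t \leq \id$ and hence $\gamma_t \geq 0$ throughout $[0, T^*)$; the particle number $\tr\gamma_t = \tr\gamma_0$ and, since $\omega_0 \in \Ycal$, the energy $\Ecal(\Gamma_t) = \Ecal(\Gamma_0)$ are conserved. Choosing $\delta := 1/(2(1+4\tr\gamma_0))$ in Lemma~\ref{lem:energybounds} yields
\[\norm{\gamma_t}_{\Ycal_1} \leq 2\Ecal(\Gamma_0) + C(C_V,\tr\gamma_0), \qquad \norm{\alpha_t}_{\Ycal_2}^2 \leq 2\norm{\gamma_t}_{\Ycal_1}\]
uniformly in $t \in [0, T^*)$. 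The embedding above (applicable because $\gamma_t \geq 0$) converts this into a uniform $\Zcal$-bound, contradicting the blow-up alternative unless $T^* = +\infty$; the analogous fixed-point construction based on the backward Duhamel equation extends the solution to all of $\Rbb$.

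To promote $\Zcal$-continuity to $\Ycal$-continuity I would approximate $\omega_0$ in the $\Ycal$-topology by regularized data $\omega_0^{(n)}$ respecting both the constraint $0 \leq \Gamma_0^{(n)} \leq \id$ and the block structure $\Gamma + \Jcal\Gamma\Jcal = \id$ (spectrally truncating $\Gamma_0$ in its own eigenbasis preserves both simultaneously), and additionally satisfying $[-\Delta,\gamma_0^{(n)}] \in \Sfrak_1$ and $\alpha_0^{(n)}(\cdot,\cdot) \in H^2$; Lemma~\ref{lem:regularity} then yields strong, hence $\Ycal$-continuous, solutions $\omega^{(n)}$, which are global by the argument above, and passing to the limit in the Duhamel equation using continuous dependence of the $\Zcal$-flow on initial data together with the uniform $\Ycal$-bound gives $\omega \in C(\Rbb, \Ycal)$. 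The second assertion of the theorem follows because for $\omega_0$ already satisfying the extra regularity hypotheses, Lemma~\ref{lem:regularity} furnishes a strong solution on the local interval, which is global by the same $\Zcal$-bound. The main obstacle is precisely this last step: the homogeneity mismatch flagged in the footnote after \eqref{eq:equation_integral_form} rules out a direct $\Ycal$-fixed-point argument at Coulomb regularity, so either the approximation route sketched above, or an explicit cancellation between the two pairing terms $-\Pi_V(\alpha_s)\cc{\alpha_s}$ and $\alpha_s\Pi_V(\cc{\alpha_s})$ when estimating the integrated nonlinearity in $\Ycal$-norm, must be used to close the argument.
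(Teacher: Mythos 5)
Your argument for global existence is sound and agrees with the paper's route: the embedding $\norm{\gamma}_{\Zcal_1}\le 2\norm{\gamma}_{\Ycal_1}$ for $\gamma\ge 0$ (via Schatten--H\"older and cyclicity), the conservation of spectrum, particle number and energy from Lemma~\ref{lem:conserved_spectrum_and_quantities}, and the a priori $\Ycal$-bound of Lemma~\ref{lem:energybounds} combine correctly to rule out finite-time blow-up via the alternative in Lemma~\ref{lem:localwp}, and time-reversal handles negative times.

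The gap is in the passage from $\Zcal$-continuity plus pointwise-in-time $\Ycal$-boundedness to $\Ycal$-continuity. You claim that Lemma~\ref{lem:regularity} ``yields strong, hence $\Ycal$-continuous, solutions,'' but that lemma only produces membership in $C\big([0,T),\Zcal\big)\cap C^1\big([0,T),\sone\times\stwo\big)$; nothing there delivers continuity of $t\mapsto \norm{M\gamma_t M}_{\Sfrak_1}$, and having both $M\gamma_t$ and $\gamma_t M$ continuous in $\Sfrak_1$ does not imply continuity of $M\gamma_t M$ in $\Sfrak_1$. Likewise, ``continuous dependence of the $\Zcal$-flow on initial data together with the uniform $\Ycal$-bound'' only gives weak-$\ast$ convergence in $\Ycal_1$ along subsequences; upgrading this to norm convergence requires excluding loss of mass. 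This is precisely the step the paper's proof supplies and which your sketch (as you yourself note in the final sentence) does not: for $t_n\to t_0$, the weak-$\ast$ limit is identified via $\Zcal_1$-continuity, and then the Radon--Riesz property of $\Sfrak_1$ is used with the identity
\[
\tr\big((1-\Delta)\gamma_{t_n}\big)=\tr\gamma_{t_n}+\Ecal(\Gamma_{t_n})-\Ecal_{\mathrm{pot}}(\Gamma_{t_n}),
\]
in which $\tr\gamma_{t_n}$ and $\Ecal(\Gamma_{t_n})$ are conserved and $\Ecal_{\mathrm{pot}}$ is $\Zcal$-continuous, forcing convergence of the $\Ycal_1$-norms and hence strong $\Ycal_1$-convergence. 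Your approximation-by-regular-data route does not replace this mechanism: even for the approximants $\omega^{(n)}_t$, the $\Ycal$-continuity would still have to be established by some such compactness-plus-conservation argument, and passing to the limit $n\to\infty$ in $\Ycal$ again needs a no-loss-of-mass device. Incorporating the Radon--Riesz step (which requires no regularization at all) both fixes the gap and shortens the proof.
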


\subsection{Estimates on the Non-Linearities}
We state here results needed to control the operators $\Vcal(\gamma) = V\ast \rho_{\gamma} - X_V(\gamma)$ and $\Pi_V(\alpha)$ in the nonlinearities. 
	\begin{lem}\label{lem:estimate_2}
		Let $V^2\le C_V^2(1-\Delta)$. 
		For $\gamma\in\sone$ and $\alpha\in\stwo$ we have 
		\[\begin{split}\norm{X_V(\gamma)}_{\stwo}\le C_V\norm{\gamma(\cdot,\cdot)}_{H^1}, \quad \norm{X_V(\gamma)M^{-1}}_{\stwo}\le C_V\norm{\gamma}_{\stwo},\\
		   \norm{\Pi_V(\alpha)}_{\stwo}\le C_V\norm{\alpha(\cdot,\cdot)}_{H^1}, \quad \norm{\Pi_V(\alpha)M^{-1}}_{\stwo}\le C_V\norm{\alpha}_{\stwo},
		  \end{split}\]
		and for the multiplication operator $V\ast\rho_\gamma$ we have
		\begin{align*}
		&\norm{V*\rho_{\gamma}}_{\sinf}\leq C_V \norm{M^{1/2}\gamma M^{1/2}}_{\Sfrak_1}\le C_V\norm{\gamma}_{\mathcal{Z}_1},\\&  \norm{(V*\rho_{\gamma})M^{-1}}_{\sinf}\le C_V\norm{{\gamma}}_{\sone}.
		\end{align*}
	\end{lem}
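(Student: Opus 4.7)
The plan is to reduce every inequality to the single master estimate $\|VM^{-1}\|_{\sinf(L^2(\Rbb^3))} \leq C_V$, which is merely a reformulation of the hypothesis $V^2 \leq C_V^2 M^2$, i.e.\ $\|Vf\|_2 \leq C_V\|Mf\|_2$. Taking adjoints gives $\|M^{-1}V\|_{\sinf} \leq C_V$; and applying operator-monotonicity of the square root to $V^2 \leq C_V^2 M^2$ yields $|V| \leq C_V M$, whence by splitting $V = V_+ - V_-$ into positive and negative parts one obtains $\|M^{-1/2}V M^{-1/2}\|_{\sinf} \leq C_V$, which will be needed for the direct-potential estimates.

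For the Hilbert--Schmidt bounds on $X_V(\gamma)$ and $\Pi_V(\alpha)$, I identify each kernel with a function in $L^2(\Rbb^3\times\Rbb^3)$. By translation invariance of $V$, the hypothesis gives fiberwise-in-$y$ the quadratic-form bound $|V(x-y)|^2 \leq C_V^2 (1-\Delta_x)$; applied slice by slice and integrated in $y$ this immediately yields $\|X_V(\gamma)\|_{\stwo}^2 \leq C_V^2\|\gamma(\cdot,\cdot)\|_{H^1}^2$, and analogously for $\Pi_V(\alpha)$. For the mixed bound $\|X_V(\gamma)M^{-1}\|_{\stwo} \leq C_V\|\gamma\|_{\stwo}$ I unfold the kernel as $(X_V(\gamma)M^{-1})(x,y) = (M^{-1}[V(x-\cdot)\gamma(x,\cdot)])(y)$ using self-adjointness of $M^{-1}$; with $x$ fixed, the $L^2_y$-norm is bounded by $\|M^{-1}V\|_{\sinf}\|\gamma(x,\cdot)\|_{L^2} \leq C_V\|\gamma(x,\cdot)\|_{L^2}$ by translation invariance, and squaring and integrating in $x$ closes the estimate; the same argument handles $\Pi_V(\alpha)M^{-1}$.

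For the direct term I write $(V*\rho_\gamma)(x) = \tr(V(x-\cdot)\gamma)$ and insert $M^{-1/2}M^{1/2}$ symmetrically via cyclicity of the trace:
\begin{equation*}
|(V*\rho_\gamma)(x)| \leq \|M^{-1/2}V(x-\cdot) M^{-1/2}\|_{\sinf}\, \|M^{1/2}\gamma M^{1/2}\|_{\sone} \leq C_V \|M^{1/2}\gamma M^{1/2}\|_{\sone},
\end{equation*}
using translation invariance in the second step. The secondary inequality $\|M^{1/2}\gamma M^{1/2}\|_{\sone} \leq \|\gamma\|_{\Zcal_1}$ I would prove by splitting $\gamma = \gamma_+ - \gamma_-$ and then using the polar form $|\gamma| = \sgn(\gamma)\gamma$ together with cyclicity: $\|M^{1/2}\gamma M^{1/2}\|_{\sone} \leq \tr(M|\gamma|) = \tr(\sgn(\gamma) M\gamma) \leq \|M\gamma\|_{\sone} \leq \|\gamma\|_{\Zcal_1}$. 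Finally, for $\|(V*\rho_\gamma)M^{-1}\|_{\sinf}$ I pull $\rho_\gamma$ outside by Fubini, writing $\langle g, (V*\rho_\gamma)M^{-1}h\rangle = \int \rho_\gamma(y)\,\langle g, V(\cdot-y) M^{-1}h\rangle\, dy$, bound the inner product pointwise by $\|V(\cdot-y)M^{-1}\|_{\sinf} \|g\|\|h\| \leq C_V\|g\|\|h\|$, and finish with $\int|\rho_\gamma(y)|\,dy \leq \|\gamma\|_{\sone}$, which is visible from the spectral decomposition of $\gamma$. None of these steps is individually deep; the main thing to get right is to choose each factorization so that the single bound $\|VM^{-1}\|_{\sinf} \leq C_V$ absorbs the Coulomb singularity. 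I expect the most bookkeeping will arise in the auxiliary inequality $\|M^{1/2}\gamma M^{1/2}\|_{\sone} \leq \|\gamma\|_{\Zcal_1}$, where the sign-indefiniteness of general self-adjoint $\gamma$ forces the polar-decomposition step above.
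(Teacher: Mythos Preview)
Your proof is correct and shares the paper's overarching strategy of reducing everything to the single operator bound $\|VM^{-1}\|_{\sinf}\le C_V$ (equivalently $|V|\le C_V M$), but your execution differs from the paper's in two places worth noting.

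For $\norm{X_V(\gamma)M^{-1}}_{\stwo}$ (and the analogous $\Pi_V$ estimate) the paper does \emph{not} argue slice-by-slice as you do; instead it writes $\norm{X_V(\gamma)M^{-1}}_{\stwo}^2=\tr\big(X_V(\gamma)M^{-2}X_V(\gamma)^*\big)$ and then invokes operator monotonicity of the inverse, $M^{-2}\le C_V^2\,V(x-\cdot)^{-2}$, to cancel the potential against itself inside the trace. Your argument---fix $x$, recognise the $y$-slice of the kernel as $M^{-1}$ applied to $V(x-\cdot)\gamma(x,\cdot)$, and use $\|M^{-1}V(x-\cdot)\|_{\sinf}\le C_V$ by translation invariance---is more elementary and avoids the monotone-inverse step entirely. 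Both give the same constant.

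For $\norm{V*\rho_\gamma}_{\sinf}\le C_V\norm{M^{1/2}\gamma M^{1/2}}_{\sone}$ the paper splits $\gamma=a_+-a_-$ with $a_\pm:=M^{-1/2}(M^{1/2}\gamma M^{1/2})_\pm M^{-1/2}\ge 0$ and then uses the spectral decomposition of each $a_\pm$ to bound $|V|*\rho_{a_\pm}$ pointwise; your trace-duality argument $|(V*\rho_\gamma)(x)|=|\tr(V(x-\cdot)\gamma)|\le\|M^{-1/2}V(x-\cdot)M^{-1/2}\|_{\sinf}\|M^{1/2}\gamma M^{1/2}\|_{\sone}$ is a legitimate shortcut, though it tacitly uses that the trace can be rewritten this way (justified, e.g., by the spectral decomposition of $\gamma$ once $M^{1/2}\gamma M^{1/2}\in\sone$). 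Your handling of the secondary inequality $\norm{M^{1/2}\gamma M^{1/2}}_{\sone}\le\norm{\gamma}_{\Zcal_1}$ via $\tr(M|\gamma|)=\tr(\sgn(\gamma)M\gamma)\le\norm{M\gamma}_{\sone}$ is essentially the paper's argument condensed; the paper spells out the intermediate step $\norm{M^{1/2}\gamma M^{1/2}}_{\sone}\le\tr(M^{1/2}|\gamma|M^{1/2})$ via the operator inequality $-M^{1/2}|\gamma|M^{1/2}\le M^{1/2}\gamma M^{1/2}\le M^{1/2}|\gamma|M^{1/2}$, which is what your ``split $\gamma=\gamma_+-\gamma_-$'' accomplishes. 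The remaining parts (the $H^1$ bounds on $X_V,\Pi_V$ and the bound on $(V*\rho_\gamma)M^{-1}$) match the paper's approach up to cosmetic differences.
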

	\begin{proof}[Proof of Lemma~\ref{lem:estimate_2}]
	\noindent 1. The estimate on $\norm{\Pi_V(\alpha)}_{\Sfrak_2}$ is a simple application of the operator inequality $V^2\le C_V^2(1-\Delta)$. The estimate on $\norm{X_V(\gamma)}_{\Sfrak_2}$ is the same.
	
	\medskip
	
	\noindent 2. For $\norm{V*\rho_{\gamma}}_{L^\infty}\ge \norm{V*\rho_{\gamma}}_{\mathcal{B}}$, we use the following trick explained in \cite[Section~6]{Bove2}.
	  We decompose $\gamma=a_+-a_-$, where $a_{\pm}:=M^{-1/2}(M^{1/2}\gamma M^{1/2})_{\pm}M^{-1/2}$, and $(M^{1/2}\gamma M^{1/2})_{\pm} \geq 0$ are, respectively, the positive and negative
	  part of $M^{1/2}\gamma M^{1/2}$ in its spectral decomposition. By monotonicity of the square root, and writing down the spectral decomposition of $a_{\pm}$, we obtain
	  \begin{align*}
	   \lvert V*\rho_{\gamma}\rvert & \le |V|*(\rho_{a_{+}}+\rho_{a_-})\le C_V\tr(M^{1/2}(a_{+}+a_-)M^{1/2}) \\ & =C_V\norm{M^{1/2}\gamma M^{1/2}}_{\Sfrak_1}.
	  \end{align*}
	  For $\gamma\in \mathcal{Z}_1$ self-adjoint, by splitting $\gamma = \gamma_{+} - \gamma_{-}$ (positive and negative part), we find
	    \[
	     -M^{1/2}|\gamma|M^{1/2}\le M^{1/2}\gamma M^{1/2}\le M^{1/2}|\gamma|M^{1/2}.
	    \]
	    Now with $(\lambda_i)_i$ denoting the eigenvalues of $M^{1/2}\gamma M^{1/2}$ and $(\varphi_i)_i$ a corresponding orthonormal basis of eigenvectors, we thus get
	    \begin{align*}
	       \tr \lvert M^{1/2} \gamma M^{1/2} \rvert & = \sum_{i\geq 0} \lvert \lambda_i \rvert = \sum_{i\geq 0} \max_{\pm}\,\langle \varphi_i, (\pm M^{1/2}\gamma M^{1/2})\varphi_i\rangle\\
	       & \leq \sum_{i \geq 0} \langle \varphi_i, M^{1/2} \lvert \gamma \rvert M^{1/2} \varphi_i \rangle = \tr M^{1/2} \lvert \gamma\rvert M^{1/2}.
	    \end{align*}
	    Using cyclicity of the trace in the last estimate, and afterward
	    \[\tr \lvert\gamma\rvert M \leq \norm{\lvert\gamma\rvert M}_{\Sfrak_1} = \tr (M\lvert \gamma\rvert^2 M)^{1/2} = \tr (M \gamma^2 M)^{1/2} = \norm{\gamma M}_{\Sfrak_1},\]
	    we obtain $\norm{M^{1/2}\gamma M^{1/2}}_{\Sfrak_1} \leq \norm{\gamma}_{\Zcal_1}$.
	  
	 \medskip 
	  
	 \noindent 3.  The estimate on $\norm{(V\ast \rho_\gamma)M^{-1}}_\Bcal$ is an application of the Cauchy--Schwarz inequality and Fubini--Tonelli theorem: for $\psi\in L^2(\Rbb^3)$, let $\phi:=M^{-1}\psi$. Then
	  \begin{align*}
	  & \int\bigg|\int V(x-y)\rho_{\gamma}(y)\phi(x)\di y\bigg|^2\di x \\
	  &\le \int  \Big(\int|V(x-y)|^2|\rho_{\gamma}(y)|\di y\Big)|\phi(x)|^2\di x\int|\rho_{\gamma}(y')|\di y'\\
									 &\le \norm{\rho_{\gamma}}_{L^1}\int |\rho_{\gamma}(y)|\di y\int |V(x-y)|^2|\phi(x)|^2\di x\\
									 &\le C_V^2\norm{\rho_{\gamma}}_{L^1}^2\norm{\phi}_{H^1}^2 \le C_V^2 \norm{{\gamma}}_{\Sfrak_1}^2\norm{\psi}_{L^2}^2.
	   \end{align*}
	   
	  \medskip 
	   
	 \noindent 4. Let us now prove the estimate on $\norm{X_V(\gamma)M^{-1}}_{\Sfrak_2}$ ($\norm{\Pi_V(\alpha)M^{-1}}_{\Sfrak_2}$  is estimated in the same way). We use the idea of \cite[Lemma~E.1]{Bachetal}. 
	  By cyclicity of the trace, for $A\in\Sfrak_2$, we have $\norm{A}_{\Sfrak_2}=\norm{A^*}_{\Sfrak_2}$, and it suffices to show the boundedness of
	  \[
	   \tr\Big(X_V(\gamma)M^{-2}[X_V(\gamma)]^*\Big).
	  \]
	  As $V(x)=V(-x)$, the trace is equal to
	  \begin{align*}
	    & \tr\Big(X_V(\gamma)M^{-2}[X_V(\gamma)]^*\Big) \\ &= \int \di x \iint \di y\di z\, \gamma(x,y)V(x-y)M^{-2}(y-z)\overline{\gamma(x,z)}V(x-z),
	  \end{align*}
	  where $M^{-2}(y-z)$ denotes the Yukawa potential at point $y-z$. For almost all $x$, we consider the $L^2$-function $g_x(y):=\gamma(x,y)V(x-y)$. 
	  By operator monotonicity of the inverse, we have $M^{-2}_y\le C_V^2 V(x-y)^{-2}$. We obtain the upper bound:
	  \begin{align*}
	   & \tr\Big(X_V(\gamma)M^{-2}[X_V(\gamma)]^*\Big)=\int \di x \langle M^{-2}g_x,\,g_x\rangle_{L^2(\mathbb{R}^3,\di y)}\\					
	   &\le C_V^2\int \di x \langle V(x-\cdot)^{-2}g_x,\,g_x\rangle_{L^2(\mathbb{R}^3,\di y)} \le C_V^2\iint\di x \di y |\gamma(x,y)|^2.		
	   \qedhere
	  \end{align*}
	\end{proof}    
	
	We introduce the ``polarization'' of $K_1$ and $K_2$ as a bilinear form (it is not necessary to give it a symmetrized form):
	\begin{align}
	    & K_1(\omega_{1},\omega_2) := [\Vcal(\gamma_1),\gamma_2] - \Pi_V(\alpha_1)\cc{\alpha_2} + \alpha_2\Pi_V(\cc{\alpha_1}), \label{eq:k1}\\
	    & K_2(\omega_1,\omega_2) := \Vcal(\gamma_1)\alpha_2 + \alpha_2 \Vcal(\cc{\gamma_1}) - \gamma_1 \Pi_V(\alpha_2) - \Pi_V(\alpha_2)\cc{\gamma_1}. \label{eq:k2}
	\end{align}
	
	\begin{lem}\label{lem:estimates_of_wtK}
	  Let $V$ satisfy \eqref{eq:assump_A}. Then for the nonlinearities $K_1$ and $K_2$ seen as bilinear maps (as defined below in \eqref{eq:k1} and \eqref{eq:k2}) we have the estimates
	\[\begin{split}
	  \norm{\wtK_2(\omega_1,\omega_2)}_{\Sfrak_2} & \le C\inf_{(a,b): \{a,b\} = \{1,2\}}\norm{\omega_a}_{\Sfrak_1\times\Sfrak_2} \norm{\omega_b}_{\Zcal}\, ,\\
	   \norm{K_1(\omega_1,\omega_2)}_{\Sfrak_1} & \le C\inf_{(a,b): \{a,b\} = \{1,2\}}\norm{\omega_a}_{\Sfrak_1\times\Sfrak_2}  \norm{\omega_b}_{\Zcal},
	  \end{split}\]
	  where the constant $C$ depends only on $C_V$.
	\end{lem}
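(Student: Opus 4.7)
My plan is to expand $K_1(\omega_1,\omega_2)$ and $K_2(\omega_1,\omega_2)$ into their individual bilinear monomials in the building blocks $V\ast\rho_{\gamma_i}$, $X_V(\gamma_i)$, $\Pi_V(\alpha_i)$ and the bare $\gamma_i,\alpha_i$, and estimate each monomial twice: once by $\norm{\omega_1}_{\sone\times\stwo}\norm{\omega_2}_{\Zcal}$ and once by $\norm{\omega_2}_{\sone\times\stwo}\norm{\omega_1}_{\Zcal}$. Having both inequalities is what legalizes passing to the infimum. For each such monomial, which is a product of two operator factors, I will use one of the standard Schatten--H\"older inequalities $\norm{AB}_{\sone}\le\norm{A}_{\stwo}\norm{B}_{\stwo}$ or $\norm{AB}_{\stwo}\le\norm{A}_{\sinf}\norm{B}_{\stwo}$ (and its transpose), combined with the appropriate bound from Lemma \ref{lem:estimate_2}.

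The central trick for obtaining both directions of the infimum is that in a product $AB$ one can shift the regularity weight $M=(1-\Delta)^{1/2}$ between the two factors via $AB=(AM^{-1})(MB)=(AM)(M^{-1}B)$. Lemma \ref{lem:estimate_2} is tailored for exactly this purpose, supplying both the ``unregularized'' bound $\norm{T(\eta)}\le C\norm{\eta}_{\Zcal}$ and the ``regularized'' bound $\norm{T(\eta)M^{-1}}\le C\norm{\eta}_{\sone\times\stwo}$ for each of the three building blocks $T\in\{V\ast\rho_{\,\cdot},X_V,\Pi_V\}$. On the remaining factor the surplus $M$ is absorbed using the definition of $\Zcal$, namely $\norm{M\gamma}_{\sone}+\norm{\gamma M}_{\sone}=\norm{\gamma}_{\Zcal_1}$ and $\norm{M\alpha}_{\stwo}\le\sqrt{2}\,\norm{\alpha(\cdot,\cdot)}_{H^1}$ from \eqref{eq:alphanorm}; where only $\sone\times\stwo$-norms are needed on that factor I use $\norm{\gamma}_{\stwo}\le\norm{\gamma}_{\sone}$ freely.

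The only subtle point I anticipate is in the $K_2$ monomials $\gamma_1\Pi_V(\alpha_2)$ and $\Pi_V(\alpha_2)\overline{\gamma_1}$, where the natural placement of the weight forces a bound on $\norm{M^{-1}\Pi_V(\alpha_2)}_{\stwo}$, which is not literally stated in Lemma \ref{lem:estimate_2}. I will recover it from the relation $\Pi_V(\alpha)^{\ast}=-\Pi_V(\overline{\alpha})$ noted after \eqref{eq:TDBCS}, which together with the invariance of the Hilbert--Schmidt norm under taking adjoints reduces the question to $\norm{\Pi_V(\overline{\alpha_2})M^{-1}}_{\stwo}\le C_V\norm{\alpha_2}_{\stwo}$, already supplied by Lemma \ref{lem:estimate_2}. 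Beyond this one accounting step, I expect no genuine obstacle: the proof is a disciplined enumeration over the eight monomials, with the choice of weight placement dictated entirely by which side of the infimum one is aiming at. The real work is bookkeeping, not any individual inequality.
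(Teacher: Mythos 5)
Your proposal is correct, and it is essentially the argument the paper intends: the paper's own proof of this lemma is the single sentence ``The estimates follow directly from Lemma~\ref{lem:estimate_2},'' and your plan is precisely what ``directly'' means once unpacked. The weight-shifting identity $AB = (AM^{-1})(MB) = (AM)(M^{-1}B)$ is the right mechanism for producing both factorizations required by the infimum, and it is exactly what the two parallel families of bounds in Lemma~\ref{lem:estimate_2} (with and without $M^{-1}$) are designed to feed.

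Your flagged ``subtle point'' is genuine and your resolution is correct: Lemma~\ref{lem:estimate_2} only supplies $\norm{\Pi_V(\cdot)M^{-1}}_{\stwo}$ and $\norm{X_V(\cdot)M^{-1}}_{\stwo}$, whereas the monomials $\gamma_1 \Pi_V(\alpha_2)$, $\alpha_2 X_V(\cc{\gamma_1})$, and $\alpha_2\Pi_V(\cc{\alpha_1})$ force $M^{-1}$ onto the left. Passing to adjoints in the Hilbert--Schmidt norm and using $\Pi_V(\alpha)^* = -\Pi_V(\cc{\alpha})$ (and analogously $X_V(\gamma)^* = X_V(\gamma^*) = X_V(\gamma)$ for self-adjoint $\gamma$) closes this, as you say. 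The only further bookkeeping worth recording is that complex conjugation commutes with $M$ (a real Fourier multiplier), so $\norm{M\cc{\alpha}}_{\stwo} = \norm{M\alpha}_{\stwo}$ and $\norm{\cc{\gamma}}_{\Zcal_1} = \norm{\gamma}_{\Zcal_1}$, which you implicitly use on the $\cc{\gamma_1}$ and $\cc{\alpha_2}$ factors. With that, the disciplined enumeration you describe goes through for all monomials in both directions.
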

	\begin{proof}
	 The estimates follow directly from Lemma~\ref{lem:estimate_2}.
	\end{proof}	

\subsection{Local Well-Posedness in $\Zcal$ (Proof of Lemma \ref{lem:localwp})}
\label{sec:localwp}
The norm on $C(I,\Zcal)$ is given by $\norm{(\gamma,\alpha)} := \sup_{t\in I} \norm{(\gamma_t,\alpha_t)}_{\Zcal}$.
    Denoting the initial values by $\gamma_0 = \ull{\gamma}$ and $\alpha_0 = \ull{\alpha}$, we define the Picard operator $\clubsuit: C(I,\Zcal) \to C(I,\Zcal)$ 
    on an interval $I$  containing $t=0$ by setting (with $K_1$ and $K_2$ as before)
\begin{equation}\label{eq:map}
 \begin{split}
    (\clubsuit \gamma)_t & := e^{i\Delta t} \ull{\gamma} e^{-i\Delta t} - i \int_0^t  e^{i\Delta(t-s)} {K}_1(\gamma_s,\alpha_s) e^{-i\Delta(t-s)} \di s,\\
    (\clubsuit \alpha)_t & := e^{-i \hds t} \ull{\alpha} - i \int_0^t  e^{- i \hds (t-s)} {K}_2(\gamma_s,\alpha_s) \di s.
 \end{split}
\end{equation}
To establish local existence we show that the nonlinearities (when simply taking the norm inside the integral and neglecting the unitaries---for the second equation 
this is possible because $\hds$ is infinitesimally operator bounded w.\,r.\,t.\ the Laplacian) are locally Lipschitz; 
then $\clubsuit$ on a sufficiently short time interval $I$ is a contraction, and a solution is found by the Banach fixed-point theorem. 
We refer to \cite[Theorem~1]{Segal} for details. 

To prove local Lipschitz continuity of the quadratic terms $K_1$ and $K_2$, it suffices to show continuity of the corresponding bilinear maps
$K_1(\cdot,\cdot)$ and $K_2(\cdot,\cdot)$ introduced in \eqref{eq:k1}-\eqref{eq:k2}. Indeed, we can write down the difference of the quadratic terms
in terms of the polarization according to the following formula:
\[
 K_j(\omega_1)-K_j(\omega_0)=K_j(\omega_1-\omega_0,\omega_1)+K_j(\omega_0,\omega_1-\omega_0),\ j\in\{1,2\}.
\]

\begin{lem}[Continuity of the Polarized Non-Linearities]
The bilinear forms $K_1(\cdot,\cdot)$ and $K_2(\cdot,\cdot)$ \eqref{eq:k1}-\eqref{eq:k2} are continuous from $\mathcal{Z}^2$ to $\Zcal_1$ and from $\mathcal{Z}^2$ to $\Zcal_2$
respectively. Their norms are bounded by a constant depending only on $C_V$.
\end{lem}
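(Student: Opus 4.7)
First, I exploit symmetry to cut my work in half. A short computation using $\Pi_V(\alpha)^\ast=-\Pi_V(\bar\alpha)$, $(\bar\alpha)^\ast=-\alpha$ (both consequences of $\alpha^T=-\alpha$), and the self-adjointness of $\Vcal(\gamma)$ shows $K_1(\omega_1,\omega_2)^\ast=-K_1(\omega_1,\omega_2)$; an analogous computation using $\Pi_V(\alpha)^T=-\Pi_V(\alpha)$ and $\Vcal(\gamma)^T=\Vcal(\bar\gamma)$ shows $K_2(\omega_1,\omega_2)^T=-K_2(\omega_1,\omega_2)$. These identities give
\[\norm{MK_1}_{\Sfrak_1}=\norm{K_1 M}_{\Sfrak_1}\qquad\text{and}\qquad \norm{MK_2}_{\Sfrak_2}=\norm{K_2 M}_{\Sfrak_2},\]
so only a single weighted Schatten norm has to be bounded for each $K_i$, and I am free to choose per summand the side on which $M$ is placed.

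Next, I expand each $K_i$ into its four constituent products and apply H\"older in the Schatten scale. The $\Zcal$-regularity of the inputs directly controls all weighted density/pairing factors: $\norm{M\gamma_j}_{\Sfrak_1}, \norm{\gamma_j M}_{\Sfrak_1}\le\norm{\gamma_j}_{\Zcal_1}$ by definition of $\Zcal_1$, and $\norm{M\alpha_j}_{\Sfrak_2}, \norm{\alpha_j M}_{\Sfrak_2}, \norm{M\bar\alpha_j}_{\Sfrak_2}, \norm{\bar\alpha_j M}_{\Sfrak_2}\lesssim\norm{\alpha_j}_{\Zcal_2}$ via inequality \eqref{eq:alphanorm}. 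The remaining $V$-dependent factors $\Vcal,X_V,\Pi_V$ are bounded in $\sinf$ or $\Sfrak_2$ by Lemma~\ref{lem:estimate_2}. When $M$ sits adjacent to a $\gamma_j$ or $\alpha_j$ factor the bound is immediate; for instance
\[\norm{\Vcal(\gamma_1)\gamma_2 M}_{\Sfrak_1}\le\norm{\Vcal(\gamma_1)}_{\sinf}\norm{\gamma_2 M}_{\Sfrak_1}\lesssim C_V\norm{\omega_1}_{\Zcal}\norm{\omega_2}_{\Zcal},\]
and similarly $\norm{\Pi_V(\alpha_1)\bar\alpha_2 M}_{\Sfrak_1}\le\norm{\Pi_V(\alpha_1)}_{\Sfrak_2}\norm{\bar\alpha_2 M}_{\Sfrak_2}\lesssim C_V\norm{\omega_1}_{\Zcal}\norm{\omega_2}_{\Zcal}$.

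The main obstacle lies in the ``hard'' summands, such as $\gamma_2\Vcal(\gamma_1)M$ and $\alpha_2\Pi_V(\bar\alpha_1)M$, where $M$ is forced adjacent to $\Vcal$ or $\Pi_V$ on the side for which Lemma~\ref{lem:estimate_2} provides no direct $M$-weighted estimate (it controls only $\cdot M^{-1}$). My plan for these is to insert $M^{-1}M=\id$ and to use a smoothing inequality of the form $\norm{M^{-1}TM^{-1}}_{\sinf}\le\norm{T}_{\sinf}$ for any $L^2$-bounded $T$ (a consequence of the continuous embeddings $H^1\hookrightarrow L^2\hookrightarrow H^{-1}$), possibly in combination with the antisymmetries of the first paragraph to reorganize the hard summands into combinations that admit more favorable factorizations. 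Once each product is estimated by $C\,C_V\norm{\omega_1}_{\Zcal}\norm{\omega_2}_{\Zcal}$ (or $C\,C_V^2$ when two $V$-factors occur), bilinearity and the triangle inequality assemble these into the continuity bounds $\norm{K_i(\omega_1,\omega_2)}_{\Zcal_i}\le C\norm{\omega_1}_{\Zcal}\norm{\omega_2}_{\Zcal}$, with constants depending only on $C_V$.
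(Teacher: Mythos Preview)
Your symmetry observation is correct and mildly useful: $K_1^\ast=-K_1$ and $K_2^T=-K_2$ do give $\norm{MK_i}=\norm{K_iM}$, so you only need one of the two weighted norms. However, the claim that you are ``free to choose per summand the side on which $M$ is placed'' is where the argument breaks. Taking adjoints on an individual summand $S$ gives $\norm{SM}=\norm{MS^\ast}$, and the summands of $K_1$ pair up under the adjoint (e.g.\ $(\Vcal(\gamma_1)\gamma_2)^\ast=\gamma_2\Vcal(\gamma_1)$). So the ``hard'' quantity $\norm{\gamma_2\Vcal(\gamma_1)M}_{\sone}$ equals $\norm{M\Vcal(\gamma_1)\gamma_2}_{\sone}$: no adjoint trick moves $M$ away from $\Vcal$. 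The same happens with $\norm{\alpha_2\Pi_V(\bar\alpha_1)M}_{\sone}=\norm{M\Pi_V(\alpha_1)\bar\alpha_2}_{\sone}$. These terms are unavoidable.

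For those terms your proposed tool, the smoothing bound $\norm{M^{-1}TM^{-1}}_\sinf\le\norm{T}_\sinf$, is not enough: after inserting $M^{-1}M$ you need $\norm{M\Vcal(\gamma_1)M^{-1}}_\sinf$ (or $\norm{M\Pi_V(\alpha_1)M^{-1}}_{\stwo}$), i.e.\ a \emph{one-sided} transfer of $M$ across the potential operator, not a two-sided smoothing. This is the missing idea. The paper obtains it from the commutation of $V(x-y)$ with $\nabla_x-\nabla_y$, namely $[\nabla,X_V(\gamma)]=X_V([\nabla,\gamma])$, $[\nabla,\Pi_V(\alpha)]=\Pi_V([\nabla,\alpha])$ and $[\nabla,V\!\ast\!\rho_\gamma]=V\!\ast\!\rho_{[\nabla,\gamma]}$, which yields the pull-through identity
\[
M\Vcal(\gamma)M^{-1}=M^{-1}\Vcal(\gamma)M^{-1}-\sum_{j=1}^3\tfrac{\partial_j}{M}\Big(\Vcal([\partial_j,\gamma])M^{-1}+\Vcal(\gamma)\tfrac{\partial_j}{M}\Big),
\]
and then Lemma~\ref{lem:estimate_2} (the $\cdot M^{-1}$ estimates) closes the bound. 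Without this commutation structure, the hard summands are not controlled; your ``reorganize into more favorable factorizations'' remains a hope rather than an argument.
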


\begin{proof}
We have to estimate $\norm{K_i(\omega_1,\omega_2)}_{\Zcal_i}$ ($i=1,2$) in terms of $\norm{\omega_1}_{\Zcal}$ and $\norm{\omega_2}_{\Zcal}$.

\medskip

	 \noindent First, let us establish a formula allowing us to pull derivatives through the interaction operators. The crucial fact is that the multiplication operator $V(x-y)$ commutes with $[\nabla,\cdot]=\nabla_x-\nabla_y$ on $\Sfrak_2$, that is: $[\nabla,X_V(\gamma)]=X_V([\nabla,\gamma])$ and the analogous for  $\Pi_V(\alpha)$. Similarly we have  $\nabla (V*\rho_{\gamma})=V*\rho_{[\nabla,\gamma]}$ (with $V\ast\rho_\gamma$ read as a function; to prove this identity we use the spectral decomposition of $\gamma$) and $[\nabla,V\ast\rho_{\gamma}] = V*\rho_{[\nabla,\gamma]}$ (with $V\ast\rho_\gamma$ read as a multiplication operator on $L^2(\Rbb^3)$). We obtain the pull-through identity
	\begin{equation}\label{eq:pullthrough}
	\begin{split}
	 M\Vcal(\gamma)M^{-1} & = \frac{1-\Delta }{M}\Vcal(\gamma)M^{-1}\\
	 & =M^{-1}\Vcal(\gamma)M^{-1}-\sum_{j=1}^3\frac{\partial_j}{M}\left[\Vcal([\partial_j,\gamma])M^{-1}+\Vcal(\gamma)\frac{\partial_j}{M}\right],
	\end{split}\end{equation}
	which enables us to transfer an $M$ from the left side of $\Vcal(\gamma)$ to its right side. The same calculation holds for $\Pi_V(\alpha)$.

	\medskip
	
\noindent We can now prove continuity of the nonlinearity $K_1$. Recall that by definition \[\norm{K_1(\omega_1,\omega_2)}_{\Zcal_1} = \norm{MK_1(\omega_1,\omega_2)}_{\sone} + \norm{K_1(\omega_1,\omega_2)M}_{\sone}.\]
From \eqref{eq:k1} we get
\[\begin{split}
   & \norm{MK_1(\omega_1,\omega_2)}_{\sone}\\ & \leq \norm{M\Vcal(\gamma_1)\gamma_2}_\sone + \norm{M\gamma_2 \Vcal(\gamma_1)}_\sone + \norm{M\Pi_V(\alpha_1)\cc{\alpha_2}}_\sone + \norm{M \alpha_2 \Pi_V(\cc{\alpha_1})}_\sone\\
   & \leq \norm{M \Vcal(\gamma_1)M^{-1} M\gamma_2}_{\sone} + \norm{M\gamma_2}_{\sone} \norm{\Vcal(\gamma_1)}_\sinf\\
   & \quad + \norm{M\Pi_V(\alpha_1)M^{-1} M\cc{\alpha_2}}_\sone + \norm{M\alpha_2}_\stwo \norm{\Pi_V(\cc{\alpha_1})}_{\stwo}\\
   & \leq \left( \norm{M\Vcal(\gamma_1)M^{-1}}_\sinf + \norm{\Vcal(\gamma_1)}_\sinf \right) \norm{\gamma_2}_{\Zcal_1} \\
   & \quad + \left( \norm{M\Pi_V(\alpha_1)M^{-1}}_{\stwo} + \norm{\Pi_V(\cc{\alpha_1})}_{\stwo} \right) \norm{\alpha_2}_{H^1} .
  \end{split}
\]
Now we employ the pull-through formula \eqref{eq:pullthrough} and afterward Lemma \ref{lem:estimate_2}, as well as the fact that $M^{-1}$ and $\partial_j M^{-1}$ are bounded operators: 
\[\begin{split}
   &\norm{M\Vcal(\gamma_1)M^{-1}}_\sinf\\
   & \leq \norm{M^{-1}\Vcal(\gamma_1)M^{-1} - \sum_{j=1}^3 \frac{\partial_j}{M} \left(  \Vcal([\partial_j,\gamma_1])M^{-1} + \Vcal(\gamma_1)\frac{\partial_j}{M}\right)}_\sinf\\
   & \leq \norm{M^{-1}}_\sinf^2 \norm{\Vcal(\gamma_1)}_\sinf + \sum_{j=1}^3 \norm{\frac{\partial_j}{M}}_\sinf \norm{\Vcal([\partial_j,\gamma_1])M^{-1}}_\sinf + \sum_{j=1}^3 \norm{\frac{\partial_j}{M}}_\sinf^2 \norm{\Vcal(\gamma_1)}_\sinf\\
   & \leq C \left( \norm{\gamma_1}_{\Zcal_1} + \norm{[\partial_j,\gamma_1]}_{\sone} \right) \leq C \norm{\gamma_1}_{\Zcal_1}.
  \end{split}
\]
Similarly, using the pull-through formula for $\Pi_V(\alpha)$, we obtain
\[\norm{M\Pi_V(\alpha_1)M^{-1}}_{\stwo} \leq C \norm{\alpha}_{H^1}.\]
Combining everything we get $\norm{MK_1(\omega_1,\omega_2)}_{\sone} \leq C \norm{\omega_1}_\Zcal \norm{\omega_2}_\Zcal$.
We can estimate the term $\norm{K_1(\omega_1,\omega_2)M}_{\sone}$ in the same way. We conclude that
\[\norm{K_1(\omega_1,\omega_2)}_{\Zcal_1} \leq C \norm{\omega_1}_\Zcal \norm{\omega_2}_\Zcal.\]

\medskip

\noindent To prove continuity of the nonlinearity $K_2$, we use \eqref{eq:alphanorm} as an upper bound for $\norm{K_2}_{H^1}^2$, and then proceed by the same method as for $K_1$. We obtain
\[\norm{K_2(\omega_1,\omega_2)}_{H^1} \leq C \norm{\omega_1}_\Zcal \norm{\omega_2}_{\Zcal}. \qedhere\]
\end{proof}

\subsection{Regularity of the Solution (Proof of Lemma~\ref{lem:regularity})}

We now assume that the initial data satisfy also the additional regularity conditions 
  $[\gamma_0,-\Delta] \in \Sfrak_1$ and $\alpha_0(\cdot,\cdot)\in H^2$ (or equivalently $\alpha_0(\cdot,\cdot)\in\dom_{\Sfrak_2}(\hds)$).
  It suffices to adapt Segal's result \cite[Lemma~3.1]{Segal} to ensure that the solution is a strong solution: instead of the $\Zcal$-norm, we apply the same argument to the $\sone\times\stwo$-norm.
  
  Identifying $\gamma_t$ and $\alpha_t$ with their integral kernel, the derivatives $\dot{\gamma}_t,\dot{\alpha}_t$ are well-defined 
  as bounded linear operators from $H^2(\Rbb^3)$ to its dual $H^{-2}(\Rbb^3)$,
  and they are equal to $-i$ times the r.\,h.\,s. of \eqref{eq:initialvalueproblem}. This establishes the equations, and it just remains to prove that $\omega_t$ is 
  Fr\'echet-differentiable in $\Sfrak_1\times\Sfrak_2$.
  
  We construct the putative derivatives by the Banach fixed-point theorem. 
  We start by formally differentiating the Bogoliubov--de\,Gennes equations to see that the derivatives should satisfy the equations
  \[
  	i\dot{\gamma}_t=[-\Delta,\dot{\gamma}_t]+\tfrac{\partial}{\partial\omega}K_1(\omega_t)\dot{\omega}_t\quad\&\quad
	i\dot{\alpha}_t=\{-\Delta,\dot{\alpha}_t\}+\Pi_V(\dot{\alpha}_t)+\tfrac{\partial}{\partial\omega}K_2(\omega_t)\dot{\omega}_t,
  \]
  where $\tfrac{\partial}{\partial\omega}K_1(\omega_t)$ and $\tfrac{\partial}{\partial\omega}K_2(\omega_t)$ denotes the Fr\'echet derivative of the nonlinearities.
  As initial data for the fixed-point problem of the derivatives we have (as given by the Bogoliubov--de\,Gennes equations)
  \[
  	i\dot{\gamma}_0:=[-\Delta,\gamma_0]+K_1(\omega_0)\quad\&\quad i\dot{\alpha}_0:=\{-\Delta,\alpha_0\}+\Pi_V(\alpha_0)+K_2(\omega_0).
  \]
  We now write the equations for $\dot\gamma_t$ and $\dot\alpha_t$ in mild form:
  \begin{equation}\label{eq:mild_form_derivative}\begin{split}
  	\dot{\gamma}_t& =e^{it\Delta }\dot{\gamma}_0e^{-it\Delta}-i\int_0^te^{i(t-s)\Delta}\tfrac{\partial}{\partial\omega}K_1(\omega_s)\dot{\omega}_se^{-i(t-s)\Delta}\di s,\\
	\dot{\alpha}_t& =e^{-it\hds}\dot{\alpha}_0-i\int_0^te^{-i(t-s)\hds}\tfrac{\partial}{\partial\omega}K_2(\omega_s)\dot{\omega}_s\di s.
  \end{split}\end{equation}
  As $K_1$ and $K_2$ are quadratic functions of $\omega$, we have for all $\delta\omega \in \Zcal$
  \[
  	\tfrac{\partial}{\partial\omega}K_j(\omega_t)\delta\omega=K_j(\omega_t,\delta\omega)+K_j(\delta\omega,\omega_t), \qquad (j=1,2)
  \]
  where the $K_j(\cdot,\cdot)$'s are the (un-symmetrized) polarizations as defined in \eqref{eq:k1} and \eqref{eq:k2}.
  By Lemma~\ref{lem:estimates_of_wtK}, their extensions $\tfrac{\partial}{\partial\omega}K_j(\omega_t):\Sfrak_1\times\Sfrak_2\to \Sfrak_j$ to $\Sfrak_1\times\Sfrak_2\supset\Zcal$ 
  are continuous with norm controlled by $C_V\norm{\omega_t}_{\Zcal}$. So we can apply the Banach fixed-point theorem to \eqref{eq:mild_form_derivative} in the Banach space $\Sfrak_1\times\Sfrak_2$. We obtain a unique local solution $v_t=(g_t,a_t)_{t\in[0,T)}$.
  
  \medskip
  
  \noindent It is easy to see through a simple Gr\"onwall argument that $\norm{v_t}_{\Sfrak_1\times\Sfrak_2}$ growths at most like 
  $\exp\big(Ct\sup_{s\in[0,t]}\norm{\omega_s}_{\Zcal}\big)$, hence the maximal interval of existence of $v_t$ is the same as that of $\omega_t$.

  \medskip
  
  \noindent Following the proof of \cite[Lemma~3.1]{Segal}, we show that $w_\eps(t):=\eps^{-1}(\omega_{t+\eps}-\omega_t)-v_t$ 
  converges to $0$ in $\Sfrak_1\times\Sfrak_2$ as $\eps\to 0$ by another Gr\"onwall argument. 
  We fix $0<T_1<T$ and work on $[0,T_1]$. It is convenient to write the mild equations in terms of $\omega_s$ and $v_t$:
  \[
  \omega_t=e^{tA}\omega_0+\int_0^te^{(t-s)A}K(\omega_s)\di s\quad\&\quad v_t=e^{tA}\dot{\omega}_0+\int_0^te^{(t-s)A}\tfrac{\partial}{\partial\omega}K(\omega_s)v_s\di s,
  \]
  where $K:=(-iK_1)\times (-iK_2):\Zcal\to\Zcal$, and $A(\gamma,\alpha):=(-i[-\Delta,\gamma],-i\hds \alpha)$. A computation (including a change of variables) yields
  \begin{align*}
  	w_{\eps}(t)&=e^{tA}\big[\eps^{-1}\big(e^{\eps A}-\mathrm{id}_{\Sfrak_1\times\Sfrak_2}\big)-A \big]\omega_0\\
		&\quad +\eps^{-1}e^{tA}\bigg[\int_0^\eps e^{(\eps-s)A}K(\omega_{s})\di s-\int_0^\eps K(\omega_0)\di s\bigg]\\
		&\quad+\int_0^te^{(t-s)A} \big[\eps^{-1}(K(\omega_{s+\eps})-K(\omega_s))-\tfrac{\partial}{\partial\omega}K(\omega_s)v_s\big]\di s.
  \end{align*}
  	By the Hille-Yosida theorem the first line tends to $0$ uniformly in $t\in[0,T)$ (the operators $e^{tA}$ are unitary in $\Sfrak_1\times\Sfrak_2$). 
 	By the change of variable $\eps u=s$ the second line is
	\[
	e^{tA} \int_0^1\big[e^{\eps(1-u)A}K(\omega_{\eps u})-K(\omega_0)\big]\di u.
	\]
	By dominated convergence it converges to $0$ uniformly in $t\in[0,T_1]$. 
	Let us now deal with the third line. As $K:\Zcal\to\Zcal$ is $\norm{\cdot}_{\Zcal}$-differentiable and that $(\omega_t)_{0\le t<T}$ is $\Zcal$-valued,
	we have
	\begin{align*}
	K(\omega_{s+\eps})-K(\omega_s) & =\int_0^1\tfrac{\partial}{\partial\omega}K\big[u\,\omega_{s+\eps}+(1-u)\omega_s\big](\omega_{s+\eps}-\omega_s)\di u \\ &=: T_{s,\eps}(\omega_{s+\eps}-\omega_s),
	\end{align*}
	where $T_{s,\eps}:\Zcal\to \Zcal$ is a convex combination of the Fr\'echet derivatives in the integrand. 
	Here $K$ is quadratic, hence $T_{s,\eps}=\tfrac{1}{2}(\tfrac{\partial}{\partial\omega}K(\omega_{s+\eps})+\tfrac{\partial}{\partial\omega}K(\omega_s))$. 
	By Lemma~\ref{lem:estimates_of_wtK},
	$\norm{T_{s,\eps}}_{\mathcal{B}(\Sfrak_1\times\Sfrak_2)}$ is uniformly bounded on $[0,T_1]$, and since the convergence $\lim_{\eps\to 0}\norm{\omega_{s+\eps}-\omega_s}_{\Zcal}=0$ holds point-wise, 
	so does $\lim_{\eps\to 0}\norm{T_{s+\eps}-\tfrac{\partial}{\partial\omega}K(\omega_s)}_{\mathcal{B}(\Sfrak_1\times\Sfrak_2)}=0$. 
 	We decompose the third line:
	\begin{multline*}
		\int_0^te^{(t-s)A} \big[\eps^{-1}(K(\omega_{s+\eps})-K(\omega_s))-\tfrac{\partial}{\partial\omega}K(\omega_s)v_s\big]\di s\\
			= \int_0^te^{(t-s)A}T_{s,\eps}w_\eps(s) \di s+\int_0^t e^{(t-s)A}(T_{s,\eps}-\tfrac{\partial}{\partial\omega}K(\omega_s))v_s\di s.
	\end{multline*}
	By dominated convergence, the second integral converges to $0$ as $\eps\to 0$, uniformly in $t\in[0,T_1]$. 
	By Lemma~\ref{lem:estimates_of_wtK}, the norm of the first integral is bounded by 
	\[
	C\sup_{s\in [0,T_1]}\norm{\omega_s}_{\Zcal}\int_0^t \norm{w_\eps(s)}_{\Sfrak_1\times\Sfrak_2}\di s.
	\]
	Putting everything together, we obtain the integral inequality
	for $t\in[0,T_1]$:
	\[
	\norm{w_\eps(t)}_{\Sfrak_1\times\Sfrak_2}\le C(T_1,\eps)+C\sup_{s\in [0,T_1]}\norm{\omega_s}_{\Zcal}\int_0^t \norm{w_\eps(s)}_{\Sfrak_1\times\Sfrak_2}\di s,
	\]
	with $\lim_{\eps\to 0}C(T_1,\eps)=0$. Hence $\lim_{\eps\to 0}w_\eps(t)=0$ uniformly in $t\in[0,T_1]$, and the solution $\omega_t$ is differentiable in 
	$\Sfrak_1\times\Sfrak_2$ on $[0,T_1]$. As $0<T_1<T$ was arbitrary, this shows that the same holds on the whole interval $[0,T)$.
    

\subsection{Conservation Laws (Proof of Lemma~\ref{lem:conserved_spectrum_and_quantities})}
	\subsubsection{Existence of Unitary Propagator and Conservation of Spectrum}
	
	In this section we prove that the solution $\Gamma_t$ at time $t$ of the Bogoliubov--de\,Gennes equation is related to $\Gamma_0$ by conjugation with a unitary propagator. 
	This implies that the spectrum of $\Gamma_t$ is time-independent.
	
	\paragraph{For regular initial data} We start with the case where $[-\Delta,\gamma_t] \in \sone$ and $\alpha(\cdot,\cdot) \in H^2(\Rbb^6)$. We split $F_{\Gamma_t}$ into unbounded time-independent and bounded time-dependent part as
	\begin{align}\label{eq:splitting_of_F}
		F_{\Gamma_t}&=\begin{pmatrix}-\Delta & 0\\0 & \Delta \end{pmatrix}+\begin{pmatrix} \Vcal(\gamma_t) & \Pi_{V}(\alpha_t)\\
														-\Pi_V(\overline{\alpha}_t)& -\Vcal(\overline{\gamma_t})\end{pmatrix}=:A+B(\Gamma_t).
	\end{align}
	According to a recent reformulation \cite{Schmid2014} of the classic Kato-Yosida result \cite[Theorem~X.70]{ReedSimon2}, there exists a continuously differentiable solution to the following linear non-autonomous initial value problem (with $B_t := B(\Gamma_t)$ prescribed by the solution $\Gamma_t$ of the nonlinear Bogoliubov--de\,Gennes equations \eqref{eq:initialvalueproblem})
	\[
	\left\{
		\begin{array}{rcl}
			i\partial_t U(t,s)&=&\big(A+B_t\big)U(t,s),\\
			U(s,s)& = & \id,
		\end{array}
	\right.
	\]
	provided that the domain $D(A+B_t)$ is independent of $t$ and the function $t \mapsto (A+B_t)\varphi$ is continuously differentiable for every $\varphi \in D(A+B_t)$. The solution $U(t,s)$ then is a propagator (i.\,e.\ $U(t,s)U(s,r) = U(t,r)$ for all $r,s,t \in \Rbb$), and in particular unitary (see the proof of \cite[Theorem~X.71]{ReedSimon2}).
	
	\medskip
	
	\noindent Let us now verify the $C^1$-condition. Since we have assumed regular initial data, the mild solution $\Gamma_t$ is continuously differentiable. Since $\varphi \in H^2(\Rbb^3)$, we can insert $\id = M^{-1}M$ and obtain $\frac{\di}{\di t} (A+B_t)\varphi = B(\partial_t \Gamma_t)M^{-1} M\varphi$. Using the estimates of Lemma \ref{lem:estimate_2}, we find that $t \mapsto B(\partial_t \Gamma_t)M^{-1}$ is continuous (w.\,r.\,t.\  the operator norm).
		
	\medskip	
		
	\noindent Consider the evolution $S_t:=U(t,0)\Gamma_0U(0,t)$.
	If we prove that it satisfies the integral form of 
	the equation
	\begin{equation}\label{eq:jellyfish}
	\left\{
		\begin{array}{rcl}
			i\partial_t S_t&=&[F_{\Gamma_t},S_t],\\
			S_{t=0}&=&\Gamma_0.
		\end{array}
	\right.
	\end{equation}
	in $\mathcal{B}(L^2(\Rbb^3)^2)$, then it follows by Gr\"onwall's uniqueness argument applied in the space $\Bcal(L^2(\Rbb^3)^2)$ that $S_t$ coincides with $\Gamma_t$.
	
	To verify that $S_t$ satisfies \eqref{eq:jellyfish}, it suffices (by density) to show that the expectation value $\langle \phi,S_t\psi\rangle$ tested with functions $\psi,\phi\in H^2(\Rbb^3)^2$ satisfies the integral equation (i.\,e. a weak formulation of the integral equation). Since $H^2(\Rbb^3)^2 = D(A+B_t)$, the expectation value $\langle \phi,S_t\psi\rangle$ is differentiable with derivative $-i\langle \phi,[F_{\Gamma_t},S_t]\psi\rangle$, and by the standard Duhamel trick we find that it satisfies the integral version of \eqref{eq:jellyfish},
	\[
	 \langle \phi,S_t\psi\rangle=\langle \phi,e^{-itA}S_0e^{itA}\psi\rangle-i\int_0^t\langle \phi,e^{-i(t-s)A}[B_s,S_s]e^{i(t-s)A}\psi\rangle\di s.
	\]
	
	\paragraph{Extension to non-regular initial data}
	It remains to extend to the case when the initial data have less regularity.
	Given arbitrary $(\gamma,\alpha) \in \Zcal$, we regularize them by setting $\gamma_n := P_{-\Delta \leq n} \gamma P_{-\Delta \leq n}$ and $\alpha_n := P_{-\Delta \leq n} \alpha P_{-\Delta \leq n}$. 
	
	Consider the $\Zcal$-solutions $\Gamma_t^{(n)}$ resp.\ $\Gamma_t$ of the Bogoliubov--de\,Gennes equation with initial data $(\gamma_n,\alpha_n)$ resp.\ $(\gamma,\alpha)$.
	As $\norm{(\gamma_n,\alpha_n)}_{\Zcal}\le \norm{(\gamma,\alpha)}_{\Zcal}$, they are all defined at least on a common interval $[0,T_1]$ (the interval used for the Banach fixed-point scheme depends only on the norm of the initial data and on $C_V$).
	By a simple Gr\"{o}nwall argument they
	converge to $\Gamma_t$ in $C([0,T_1],\Zcal)$.
	
	By the argument we gave above, for the solution $\Gamma^{(n)}_t$ we have unitary propagators $U^{(n)}(t,0)$ such that $\Gamma^{(n)}_t = U^{(n)}(t,0) \Gamma^{n} U^{(n)}(0,t)$. 

	Consider the mild equations
	\[\begin{split}U^{(n)}_\text{BFP}(t) &= e^{-iAt} - i \int_0^t e^{-iA(t-s)} B(\Gamma^{(n)}_t) U^{(n)}_\text{BFP}(s) \di s,\\U_\text{BFP}(t) & = e^{-iAt} - i \int_0^t e^{-iA(t-s)} B(\Gamma_t) U_\text{BFP}(s) \di s.\end{split}\]
	Both equations are solvable on $[0,T_1]$ by applying the Banach fixed-point theorem (hence the subscript ``BFP'') in the Banach space of bounded operators, 
	and the obtained solutions are as usual unique. However, a priori we do not know that these solutions are unitaries. 
	But the solution $U^{(n)}_\text{BFP}(t)$, by local uniqueness, agrees with the unitary propagator $U^{(n)}(t,0)$ we obtained before---and thus now $U^{(n)}_\text{BFP}(t)$ is known to be unitary. Our last step is to show that $U^{(n)}_\text{BFP}(t) \to U_\text{BFP}(t)$ ($n \to \infty$) for every fixed $t$, in operator norm; this will imply the unitarity of $U_\text{BFP}(t)$. 
	
	The convergence $U^{(n)}_\text{BFP}(t) \to U_\text{BFP}(t)$ in operator norm is of course shown by a Gr\"onwall argument: writing the difference of the mild equations we find
	\[\begin{split}U^{(n)}_\text{BFP}(t) - U_\text{BFP}(t) & = -i \int_0^t e^{-iA(t-s)} B(\Gamma^{(n)}_s - \Gamma_s) U^{(n)}_\text{BFP}(s) \di s\\
	& \quad -i \int_0^t e^{-iA(t-s)} B(\Gamma_s) \left( U^{(n)}_\text{BFP}(s) - U_\text{BFP}(s)\right)\di s.\end{split}\]
	Taking the operator norm and using Lemma \ref{lem:estimates_of_wtK}, we obtain
	\[\begin{split}\norm{U^{(n)}_\text{BFP}(t) - U_\text{BFP}(t)}_{\mathcal{B}} & \leq  T_1\! \sup_{s \in [0,T_1]}\! \norm{\omega^{(n)}_s - \omega_s}_\Zcal\\ &\quad  + \!\sup_{s \in [0,T_1]}\!\norm{\omega_s}_\Zcal \!\int_0^t \norm{ U^{(n)}_\text{BFP}(s) - U_\text{BFP}(s)}_{\mathcal{B}}\di s.
	\end{split}\]
	The first summand converges to zero as $n\to\infty$. By Gr\"onwall's method, we now have $U^{(n)}(t,0) = U^{(n)}_\text{BFP}(t) \to U_\text{BFP}(t)$ in operator norm on $[0,T_1]$. We extend this to the whole maximal interval of existence $[0,T)$ of $\Gamma_t$ by repeating the same argument for each point $t\in [0,T)$ taken as initial time.
	
	Finally, this implies that, as $n\to \infty$, $U^{(n)}(t,s) = U^{(n)}_\text{BFP}(t) U^{(n)}_\text{BFP}(s)^*$ converges in operator norm to $U_\text{BFP}(t) U_\text{BFP}(s)^*$, which constitutes the intended unitary propagator $U(t,s)$.
	
	 \subsubsection{Conservation of the Particle Number $\tr(\gamma)$}	
    The conservation is easy to establish for strong solutions by differentiating the particle number $\tr \gamma_t$. In fact, consider regular initial data $(\gamma_0,\alpha_0)$, i.\,e.\ $[-\Delta,\gamma_0]\in\sone$ and $\alpha_0(\cdot,\cdot) \in H^2$. Then by Lemma \ref{lem:regularity}, we can freely differentiate, and find
    \begin{equation}\label{eq:gammaderiv}
     i\tr(\dot{\gamma}_t)=\tr[-\Delta,\gamma_t]+\tr [\Vcal(\gamma_t),\gamma_t]+\tr\left(\Pi_V(\alpha_t)\alpha_t^*-\alpha_t\Pi_V(\alpha_t^*)\right).
     \end{equation}
The first trace vanishes since it can be written as the derivative of a function which is constant due to cyclicity of the trace, i.\,e.\ 
\[\tr[-\Delta,\gamma_t] = i \frac{\di}{\di s} \tr \left( e^{is\Delta}\gamma_t e^{-is\Delta}\right)  \Big\rvert_{s=0}.\]
	The second trace vanishes by cyclicity (note that $\Vcal(\gamma_t)$ is bounded). The third trace vanishes since we can write it out as an integral and use $V(x)=V(-x)$.

We now turn to arbitrary initial data in $\Zcal$.
Since we have existence of solutions due to a Banach fixed-point argument in $\Zcal$, the solutions are continuous in $\Zcal$-norm, w.\,r.\,t.\  initial data in $\Zcal$. The number of particles $\tr \gamma_t$ is obviously $\Zcal$-continuous, and so by approximating $\Zcal$-initial data by regular initial data, $\tr \gamma_t$ is constant again.

    
    \subsubsection{Conservation of the Energy $\Ecal(\Gamma_t)$}\label{sec:conservationofenergy}
   We emphasize that 
   $\tr(-\Delta\gamma)$ is seen as the $\Ycal$-continuous functional 
   \[
   \tr(M\gamma M)-\tr(\gamma)=\int_{p\in\Rbb^3} \widehat{\gamma}(p,p)|p|^2\di p. 
   \]
   
   \paragraph{Regularization}
   Since the kinetic part of the energy functional is not $\Zcal$-continuous, we cannot use the same strategy as for particle number conservation. Instead, we introduce a regularization for which the conservation of energy holds. As before $\Pl$ denotes $\id_{(-\Delta<\Lambda)}$, and to shorten notations, we also denote $\Pl\otimes\id_{\mathbb{C}^2}$
   (acting on $L^2(\Rbb^3)^2$) by $\Pl$. We regularize both the equation and the functions:
   for any $\Lambda>0$, we consider the solution $(\Gaml_t)_{t}$ to:
   \begin{equation}\label{eq:reg_equation}
   \left\{
   	\begin{array}{rcl}
   		i\partial_t S_t&=&\big[ \Pl F_{S_t}  \Pl,\,S_t\big],\\
		S_0&=&\Pl \Gamma_0\Pl.
   	\end{array}
   \right.
   \end{equation}
  Above, $\Pl F_{S_t}  \Pl$ denotes the \emph{bounded} operator:
  \[
  \Pl F_{S_t}  \Pl=
  	\begin{pmatrix}-\Delta\Pl+\Pl\Vcal(\gamma(S_t))\Pl & \Pl \Pi_V(\alpha(S_t))\Pl \\ 
	-\Pl \Pi_V(\overline{\alpha}(S_t))\Pl & \Delta \Pl-\Pl\Vcal(\overline{\gamma}(S_t))\Pl\end{pmatrix}.
  \]
  The Duhamel form of the equation is similar to \eqref{eq:equation_integral_form}. The infinitesimal generators of the
  free evolutions of the one-body and the pairing densities are $[-\Delta\Pl,\cdot]$ and $\{-\Delta\Pl,\cdot \}+\Pl \Pi_V(\cdot)\Pl$ respectively;
  the nonlinearities $K_1^{(\Lambda)}$, $K_2^{(\Lambda)}$ are obtained from the original ones $K_1,K_2$ by replacing $\Vcal(\gamma)$
  and $\Pi_V(\alpha)$ by  $\Pl \Vcal(\gamma)\Pl$ and $\Pl \Pi_V(\alpha)\Pl$ respectively.
  
  We can apply the Banach fixed-point theorem to the regularized equations with the $\norm{\cdot}_{\Zcal}$-norm:
  the estimates are the same, and the interval of existence $[0,T)$ only depends on the initial data $\Gamma_0$, it does not 
  depend on the cutoff $\Lambda>0$. For any $\Lambda>0$, we thus obtain a solution $(\Gaml_t)_{0\le t< T}$ 
  to \eqref{eq:reg_equation} where $\Gaml_t$ is a compact perturbation of $\Pl \Gamma_{\mathrm{vac}}\Pl$.
  We write $\oml_t,\gaml_t,\alphl_t$ for the corresponding objects of the regularized solution.

  As the operator $\Pl F_{\Gaml_t}\Pl$ is bounded by $C=C(\Lambda,\norm{\oml_t}_{\Zcal})$, 
  mild solutions to \eqref{eq:reg_equation} are also strong solutions as we can differentiate in the Duhamel formulas for $\gamma$ and $\alpha$. Indeed, by the Hille-Yosida theorem the integrand is point-wise differentiable and by dominated convergence \cite[Theorem~III.6.16]{Dunford} we can differentiate inside the integral. Furthermore, by \cite[Theorem~III.6.20]{Dunford} we can pull the bounded operators $[-\Delta P_\Lambda,\cdot]$ and $\hds_\Lambda$ outside the integral, where $\hds_{\Lambda}$ denotes the operator
	\begin{equation}\label{eq:hdsdef}
	\hds_{\Lambda}\alpha:=\{-\Pl\Delta,\alpha \}+\Pl\Pi_V(\alpha)\Pl,\quad \alpha(\cdot,\cdot)\in H^2.
	\end{equation}
  
  Since the regularized equation has the same structure as the original one, conservation of the spectrum still holds (by the argument we gave before). In particular
  we have: $0\le \Gaml_t\le \id$. We show consecutively the following four points.
  \begin{enumerate}
  	\item \label{item:i} For any $\Lambda>0$, we have $\Pl \Gaml_t\Pl=\Gaml_t$ (the regularization is consistent with the evolution).
  	\item \label{item:ii} The energy $\mathcal{E}(\Gaml_t)$ and the number of particles $\tr(\gaml_t)$ are conserved.
	\item \label{item:iii} For any $T_1\in(0,T)$, $\omega_t^{(\Lambda)}$ converges to $\omega_t$ in $C([0,T_1],\sone\times\stwo)$.
	\item \label{item:iv} Let us define the potential energy as the functional
		\begin{equation}\label{eq:potenergy}
		\mathcal{E}_{\mathrm{pot}}(\Gamma) =\frac{1}{2}\Big(\tr((V*\rho_{\gamma})\gamma)-\tr(\gamma^*X_V(\gamma))+\tr(\alpha^*\Pi_V(\alpha))\Big).
	\end{equation}
	The potential energy of $\Gaml_t$ converges to that of $\Gamma_t$ on $[0,T_1]$.
  \end{enumerate}
  Points \ref{item:i} and \ref{item:iii} follow from a Gr\"onwall argument and point \ref{item:ii} follows from straightforward differentiation. 
  The conservation laws together with Lemma \ref{lem:energybounds} for the regularized solutions ensure that $(\oml_t)_{0\le t<T}$ is uniformly $\norm{\cdot}_{\Ycal}$-bounded. 
  This result and point \ref{item:iii} imply point \ref{item:iv}: the potential part of the energy converges to that of $(\omega_t)_{0\le t<T}$.

  \medskip
  
  \noindent Let us show how we can then establish the conservation of the energy. 
  The conservation of the spectrum gives $0\le g_t\le 1$ for $g_t=\gaml_t$ and $g_t=\gamma_t$. 
  Together with point \ref{item:iv} it ensures that $(\omega_t)_{0\le t<T}$ is $\Ycal$-valued. 
  Indeed for any $\Lambda_0>0$, we have by Fatou's lemma:
  \[
 	0\le  \tr(-\Delta P_{\Lambda_0}\gamma_t)\le \liminf_{\Lambda\to \infty} \tr(-\Delta\gaml_t)<\infty.
  \]
  Taking the limit $\Lambda_0\to\infty$ yields $\tr(-\Delta \gamma_t)<\infty$ by monotone convergence,
  and we obtain for all $0\le t\le T_1<T$ the inequality
  \begin{equation}\label{eq:ineq_mass}
  	\mathcal{E}(\Gamma_{t})\le\mathcal{E}(\Gamma_{0}).
  \end{equation}
  We have to show equality in \eqref{eq:ineq_mass}, i.\,e.\ that there is no loss of mass as $\Lambda\to\infty$
  (more precisely no loss of $H^1$-mass of the eigenfunctions of $\gaml_t$). 
  Equality in \eqref{eq:ineq_mass} is ensured by the time-reversal symmetry of the equation. 
  Indeed for any $0<T_1<T$, the path $t\in [0,T_1]\mapsto \overline{\Gamma}_{T_1-t}$ satisfies the same equation as $(\Gamma_t)_{0\le t<T}$, 
  hence the same arguments as above give the reverse inequality of \eqref{eq:ineq_mass}. We emphasize that the argument uses the obvious equality
  $\mathcal{E}(\Gamma)=\mathcal{E}(\overline{\Gamma})$ and the local uniqueness (due to the Banach fixed-point argument) of the solution.
  
  \paragraph{\ref{item:i} Consistency of the regularization}
	By Gr\"onwall's method we show that 
	\[f_t:=\norm{(1-\Pl)\Gaml_t}_{\Bcal}+\norm{\Gaml_t(1-\Pl)}_{\Bcal}
	\] 
	is identically zero. Rewriting \eqref{eq:reg_equation}, $\Gaml_t$ satisfies
	\begin{equation*}
		\Gaml_t=e^{-it\Pl A }\Gamma_0e^{it\Pl A }-i\int_0^t e^{-i(t-s)\Pl A } [\Pl B_s^{(\Lambda)}\Pl,\Gaml_s]e^{i(t-s)\Pl A }\di s,
	\end{equation*}
	where $A$ was defined in \eqref{eq:splitting_of_F} and $B_s^{(\Lambda)}$ denotes
	\[
	B_s^{(\Lambda)}:=\begin{pmatrix} \Vcal(\gaml_t)& \Pi_V(\alphl_t)\\ 
		-\Pi_V(\overline{\alpha}_t^{(\Lambda)})&-\Vcal(\overline{\gamma}_t^{(\Lambda)})\end{pmatrix}.
	\]
	We thus get the inequality
	\begin{align*}
		\norm{(1-\Pl)\Gaml_t}_{\Bcal}\le 0+\int_0^t \sup_{s\in [0,t]}\norm{\Pl B_s^{(\Lambda)}\Pl}_{\Bcal}f_s\di s.
	\end{align*}
	Similarly $\norm{\Gaml_t(1-\Pl)}_{\Bcal}$ satisfies the same inequality and we obtain
	\[
	f_t\le 2\sup_{s\in [0,t]}\norm{\Pl B_s^{(\Lambda)}\Pl}_{\Bcal}\int_0^t f_s\di s.
	\]
	For any $0<T_1<T$, we have 
	\[
	\sup_{s\in [0,T_1]}\norm{\Pl B_s^{(\Lambda)}\Pl}_{\Bcal}\le C\sup_{s\in [0,T_1]}\norm{\oml_s}_{\Zcal}<\infty.
	\]
	Thus $f_t=0$ on $[0,T_1]$ for all $0<T_1<T$, that is $f_t$ identically vanishes on $[0,T)$.

\paragraph{\ref{item:ii} Conservation laws for the regularized problem}
	Point \ref{item:i} ensures us that $\oml_t$ is $\Ycal$-differentiable
	on $[0,T)$, and that
	\[
	\Pl \gaml_t\Pl=\gaml_t\quad\&\quad\Pl \alphl_t\Pl=\alphl_t,\ 0\le t<T.
	\]
	We then observe that the energy is invariant under complex conjugation on $\Ycal$, 
	and can write the energy functional as
\[\Ecal(\Gamma) = \frac{1}{2}\big(\Ecal_\text{HF}(\gamma) + \Ecal_\text{HF}(\cc{\gamma})-\tr \Pi_V(\cc{\alpha})\alpha \big),\]
where $\Ecal_\text{HF}(\gamma) = \tr (-\Delta \gamma) + \frac{1}{2}\tr \left( \Vcal(\gamma)\gamma \right)$.
	Notice furthermore that due to the assumption $V(x)=V(-x)$ we have the identity
	\[
	  \tr \Pi_V(\alpha)\beta = \tr \alpha \Pi_V(\beta).
	\]
	Taking explicitly the time derivative of $\Ecal(\Gaml_t)$ and using the last identity, it is a straightforward calculation to see that
	\[\begin{split}
 	i\frac{\di}{\di t} \Ecal(\Gaml_t) & = \frac{1}{2}
  	\Big( \tr h_\text{HF}(\gaml_t) i \dot\gamma_t^{(\Lambda)} + \tr h_\text{HF}({\cc{\gamma}^{(\Lambda)}_t}) i{\dot{\cc{\gamma}}_t^{(\Lambda)}}\\
  	& \qquad\quad - \tr \Pi_V(\cc{\alpha}_t^{(\Lambda)}) i\dot{\alpha}_t^{(\Lambda)} -\tr \Pi_V(\alphl_t)i\dot{\cc{\alpha}}_t^{(\Lambda)} \Big) =0.
	\end{split}\]
	
	Conservation of the number of particles is proven similarly (and with less calculations).

\paragraph{\ref{item:iii} Convergence of the regularization }
	Let $0<T_1<T$.
	We consider the mild form of the equation on $\alphl_t$ and $\gaml_t$. 
	In the interval $[0,T_1]$ we use Gr\"onwall's method 
	and show that $g_t:=\norm{\oml_t-\omega_t}_{\Sfrak_1\times\Sfrak_2}$
	satisfies an integral inequality of the form
	\begin{equation}\label{eq:gronwall_reg}
		g_t\le C(\Lambda,T_1)+C\int_0^t g_s\di s,
	\end{equation}
	where $\lim_{\Lambda\to\infty}C(\Lambda,T_1)=0$. Point \ref{item:iii} follows from \eqref{eq:gronwall_reg}.

	Recall the definition of $\hds_\Lambda$ from \eqref{eq:hdsdef}.	From the Duhamel formula we have
	\begin{align}\alphl_t-\alpha_t&=-i\int_0^t e^{-i(t-s)\hds_{\Lambda}}\big[K_2^{(\Lambda)}(\oml_s)-K_2^{(\Lambda)}(\omega_s)\big]\di s \label{eq:split_diff1}\\
			&\quad-i\int_0^t e^{-i(t-s)\hds_{\Lambda}}\big[K_2^{(\Lambda)}(\omega_s)-K_2(\omega_s)\big]\di s \label{eq:split_diff2}\\
			&\quad -i\int_0^t \big[e^{-i(t-s)\hds_{\Lambda}}-e^{-i(t-s)\hds}\big]K_2(\omega_s)\di s \label{eq:split_diff3}\\
			&\quad+e^{-it\hds_{\Lambda}}(\Pl\alpha_0\Pl-\alpha_0)+(e^{-it\hds_{\Lambda}}-e^{-it\hds})\alpha_0.\label{eq:split_diff4}
	\end{align}
	A similar decomposition holds for $\gaml_t-\gamma_t$.
	
	\medskip
	
	\noindent Consider now the first line \eqref{eq:split_diff1}: we take its $\Sfrak_2$-norm; then by Lemma~\ref{lem:estimates_of_wtK},
	an upper bound of \eqref{eq:split_diff1} is
	\[
		C(C_V)\sup_{t\in[0,T_1]}\sup_{\omega\in\{\oml_t,\omega_t\}}\norm{\omega}_{\Zcal}\int_0^t g_s\di s,
	\]
	which gives the integral part in the integral inequality \eqref{eq:gronwall_reg}. 
	All we have to show is that the lines \eqref{eq:split_diff2} through \eqref{eq:split_diff4} also converge to $0$
	as $\Lambda\to 0$. By a similar approach we can deal with the terms of the decomposition of $\gaml_t-\gamma_t$, and both estimates will give 
	\eqref{eq:gronwall_reg}. We only estimate $\norm{\alphl_t-\alpha_t}_\stwo$ and leave $\norm{\gaml_t-\gamma_t}_\sone$ to the reader.
	
	\bigskip
	
	\noindent We first describe two technical results, which will then be useful in dealing with the remaining lines. We emphasize that $e^{-is\hds}$ and $e^{-is\hds_\Lambda}$, $s\in\Rbb$ are unitary operators which leave the Hilbert--Schmidt norm invariant. 
	For $\gaml_t-\gamma_t$, the conjugation by the unitary operators $e^{is\Delta}$ leaves the trace-norm invariant.
	
	\begin{itemize}
	\item The first technical issue is to deal with the convergence of $e^{-is\hds_{\Lambda}}$, for $s\in\Rbb$. 
	The key observation is that $\hds_{\Lambda}$ converges to
	$\hds$ in the \emph{strong-resolvent sense}. Indeed the resolvent identity gives
	\begin{align*}
	& (\hds_{\Lambda}+i)^{-1}-(\hds+i)^{-1}\\ & =(\hds_{\Lambda}+i)^{-1}\big(\{-(1-\Pl)\Delta,\cdot\}+\Pi_{V}(\cdot)-\Pl\Pi_V(\cdot)\Pl\big)(\hds+i)^{-1}.
	\end{align*}
	For $\alpha\in\Sfrak_2$, the integral kernel of $K:=(\hds+i)^{-1}\alpha$ is in $H^2(\Rbb^3\times\Rbb^3)$. By compactness of $K$, we have
	\[
	\norm{\{-(1-\Pl)\Delta,K\}}_{\Sfrak_2}\underset{\Lambda\to\infty}{\longrightarrow}0,
	\]
	Similarly, as the operator $\Pi_V(K)$ is Hilbert--Schmidt we have
	\[
	\norm{\Pi_{V}(K)-\Pl\Pi_V(K)\Pl}_{\Sfrak_2}\underset{\Lambda\to\infty}{\longrightarrow}0.
	\]
	Then \cite[Theorem~VIII.20]{ReedSimon} ensures that for any bounded Borelian function $f:\Rbb\to \mathbb{C}$, the operator
	$f(\hds_{\Lambda})$ converges to $f(\hds)$ in the strong operator topology. In particular for all $s\in\Rbb$, $e^{-is\hds_{\Lambda}}$
	converges to $e^{-is\hds}$ in the strong operator topology.
			
	\item For the second technical issue, we introduce a second level of cutoff $\Lambda'>0$. For $0\le s\le T_1$, the operator $K_2(\omega_s)$ is compact (and its integral kernel
	is in $H^1$), hence we have point-wise in $s$:
	\[
	\norm{(1-P_{\Lambda'})K_2(\omega_s)}_{\Sfrak_2}+\norm{K_2(\omega_s)(1-P_{\Lambda'})}_{\Sfrak_2}\underset{\Lambda'\to\infty}{\longrightarrow}0.
	\]
	On $[0,T_1]$, the norm $\norm{K_2(\omega_s)}_{\Sfrak_2}$ is uniformly bounded by $\sup_{t\in[0,T_1]}\norm{K_2(\omega_t)}<\infty$. 
	Hence by dominated convergence, we obtain
	\[
		\int_0^{T_1}\left(\norm{(1-P_{\Lambda'})K_2(\omega_s)}_{\Sfrak_2}+\norm{K_2(\omega_s)(1-P_{\Lambda'})}_{\Sfrak_2}\right)\di s
			\underset{\Lambda'\to\infty}{\longrightarrow}0.
	\]
	\end{itemize}
	
	\bigskip
	
	\noindent Consider now the second line \eqref{eq:split_diff2}. Taking $\Lambda'=\Lambda$, we get that it converges to $0$ as $\Lambda \to \infty$.
	
	\medskip
	
	\noindent Consider now the fourth line \eqref{eq:split_diff4}. The first term converges to $0$ since we have 
	$\lim_{\Lambda\to 0}\norm{\Pl\alpha_0\Pl-\alpha_0}_{\Sfrak_2}=0$. For its second term, splitting $\alpha_0$ into two:
	\[
	\alpha_0=P_{\Lambda'}\alpha_0P_{\Lambda'}+\big(\alpha_0-P_{\Lambda'}\alpha_0P_{\Lambda'}\big).
	\]
	The second summand vanishes as $\Lambda' \to \infty$. 
	At fixed $\Lambda'$, we then have
	\[
	\sup_{t\in[0,T]}\norm{(e^{-it\hds_{\Lambda}}-e^{-it\hds})P_{\Lambda'}\alpha_0P_{\Lambda'}}_{\Sfrak_2}\underset{\Lambda\to\infty}{\longrightarrow}0.
	\]
	This follows from the convergence in the strong operator topology of $e^{-it\hds_{\Lambda}}$ and the fact that 
	$P_{\Lambda'}\alpha_0P_{\Lambda'}\in\dom_{\Sfrak_2}(\hds)=H^2$ which gives by functional calculus the crude estimate
	\begin{equation}\label{eq:crude_est_for_el_in_dom}
	\norm{(e^{-it_2h}-e^{-it_1h})P_{\Lambda'}\alpha_0P_{\Lambda'}}_{\Sfrak_2}\le C|t_1-t_2|(\Lambda')^2\norm{\alpha_0}_{\Sfrak_2},
	\end{equation}
	where $t_1,t_2\in\Rbb$ and where $h$ denotes $\hds$ or $\hds_{\Lambda}$. 
	Hence by an $\eps/2$-argument, the fourth line converges to $0$
	as $\Lambda \to \infty$.
	
	\medskip
	
	\noindent Consider now the third line \eqref{eq:split_diff4}. As above we write
	\[
		K_2(\omega_s)=P_{\Lambda'}K_2(\omega_s)P_{\Lambda'}+(K_2(\omega_s)-P_{\Lambda'}K_2(\omega_s)P_{\Lambda'}).
	\]
	As for the second line \eqref{eq:split_diff2}, dominated convergence gives
	\[
		\int_0^{T_1}\norm{e^{-i(t-s)\hds_{\Lambda}}(K_2(\omega_s)-P_{\Lambda'}K_2(\omega_s)P_{\Lambda'})}_{\Sfrak_2}\di s
			\underset{\Lambda'\to\infty}{\longrightarrow}0.
	\]
	As $[0,T_1]$ is compact, and the bilinear map $K_2(\cdot,\cdot):\Zcal^2\to \Sfrak_2$ is continuous, 
	the map $t\in[0,T_1]\mapsto K_2(\omega_t)$ is $\norm{\cdot}_{\Sfrak_2}$-equicontinuous. 
	At fixed $s\in[0,T_1]$ and $\Lambda'>0$, the integral kernel of $P_{\Lambda'}K_2(\omega_s)P_{\Lambda'}$
	is in $H^2$, hence a similar estimate to \eqref{eq:crude_est_for_el_in_dom} holds for this operator. 
	By an $\eps/3$-argument, we get the uniform estimate
	\[
	\sup_{s,t\in[0,T_1]}\big\lVert \big[e^{-i(t-s)\hds_{\Lambda}}-e^{-i(t-s)\hds}\big]P_{\Lambda'}K_2(\omega_s)P_{\Lambda'}\big\rVert_{\Sfrak_2}
		\underset{\Lambda\to\infty}{\longrightarrow}0.
	\]
	Therefore the third line \eqref{eq:split_diff3} tends to $0$ as $\Lambda\to\infty$ 
	(by an additional $\eps/2$-argument used to choose the auxiliary cutoff level $\Lambda'>0$ at the very beginning).
	
	\paragraph{\ref{item:iv} Convergence of the potential energy}
	Recall the potential energy \eqref{eq:potenergy}. It is straightforward that it is continuous w.\,r.\,t.\ $\norm{\cdot}_{\Zcal}$. We need a little bit more.
	For $\omega_1$ and $\omega_2$ in $\Zcal$, we have:
	  \[
 	 	\begin{split}
			&\tr(V*\rho_{\gamma_2}\gamma_2)-\tr(V*\rho_{\gamma_2}\gamma_2)
				\\ & \hspace{1.5cm} = \tr(V*\rho_{\gamma_2-\gamma_1}M^{-1}M\gamma_2)+\tr(V*\rho_{\gamma_1}(\gamma_2-\gamma_1)),\\
			& \tr(\gamma_2^*X_V(\gamma_2))-\tr(\gamma_1^*X_V(\gamma_1))\\ & \hspace{1.5cm} =\tr((\gamma_2-\gamma_1)^*X_V(\gamma_2))
			+\tr(\gamma_1^*MM^{-1}X_V(\gamma_2-\gamma_1)).
		\end{split}
 	 \]
	 As in Lemma~\ref{lem:estimates_of_wtK}, by using Lemma~\ref{lem:estimate_2} we obtain the following estimate:
	 \[
	 	\big|\mathcal{E}_{\mathrm{pot}}(\Gamma_2)-\mathcal{E}_{\mathrm{pot}}(\Gamma_1)\big|
			\le C(C_V)\sup_{\omega\in\{\omega_1,\omega_2\}}\norm{\omega}_{\Zcal}\norm{\omega_2-\omega_1}_{\Sfrak_1\times\Sfrak_2},
	 \]
	 where $\Gamma_i$ denotes the generalized density matrix corresponding to $\omega_i$.
	 Point \ref{item:ii} and Point \ref{item:iii}, namely the energy conservation of $\Gaml_t$ and the convergence of $\oml_t$ to $\omega_t$ in $\Sfrak_1\times\Sfrak_2$
	 imply the convergence of $\mathcal{E}_{\mathrm{pot}}(\Gaml_t)$ to $\mathcal{E}_{\mathrm{pot}}(\Gamma_t)$.

 \subsection{Controlling the $\Ycal$-Norm (Proof of Lemma \ref{lem:energybounds})}     
 Recall the definition of the energy,
       \[\mathcal{E}(\Gamma)=\tr(-\Delta \gamma)+\frac{1}{2}\iint \big[\rho_{\gamma}(x)\rho_{\gamma}(y)-|\gamma(x,y)|^2+|\alpha(x,y)|^2\big]V(x-y)\di x\di y.\]
       Notice also that by assumption $\gamma \geq 0$, so $\norm{M^{1/2}\gamma M^{1/2}}_{\Sfrak_1} = \tr M^{1/2}\gamma M^{1/2}$.
       
       The first term in the integral (the direct term) can be estimated using Lemma \ref{lem:estimate_2}:
       \[\big| \iint \rho_{\gamma}(x)\rho_{\gamma}(y)V(x-y)\di x\di y \rvert=\lvert \tr \gamma V\ast\rho_\gamma \big| \leq C_V \norm{\gamma}_{\Sfrak_1} \norm{M^{1/2}\gamma M^{1/2}}_{\Sfrak_1}.\]
       The second term (the exchange term) can be estimated using $\lvert V(x-y) \rvert \leq C_V M$ (from $V^2 \leq C_V^2 M^2$ by operator monotonicity of the square root):
       \[\begin{split}\lvert \iint \lvert\gamma(x,y)\rvert^2 V(x-y)\di x\di y \rvert & \leq \tr \gamma \lvert V(x-y)\rvert \gamma \leq C_V \tr \gamma M \gamma \\ & = C_V \tr M^{1/2} \gamma^{1/2} \gamma \gamma^{1/2} M^{1/2}\\
       & \leq C_V \norm{\gamma}_\Bcal \tr M^{1/2} \gamma M^{1/2} \\ &\leq C_V \norm{\gamma}_{\Sfrak_1}  \norm{M^{1/2}\gamma M^{1/2}}_{\Sfrak_1}.
       \end{split}\]
         To estimate $\norm{M^{1/2}\gamma M^{1/2}}_{\Sfrak_1}$ we employ the spectral decomposition $\gamma = \sum_{i} \lambda_i \lvert \psi_i\rangle \langle \psi_i\rvert$ (where all $\lambda_i \geq 0$) to get
       \[\begin{split}\tr M^{1/2}\gamma M^{1/2} & = \sum_{i} \lambda_i \norm{M^{1/2}\psi_i}^2 \leq \sum_i \lambda_i \left[ \delta \norm{M\psi_i}^2 + \frac{\norm{\psi_i}^2}{\delta} \right] \\ & = \delta \norm{M\gamma M}_{\Sfrak_1} + \frac{\norm{\gamma}_{\Sfrak_1}}{\delta} .\end{split}\]
       
       The third term term is of the same type as the previous one, so we easily find
       \begin{align*}\big| \iint \lvert\alpha(x,y)\rvert^2 V(x-y)\di x\di y \big|  & \leq C_V \norm{\alpha}_{\Sfrak_2}  \norm{M \alpha}_{\Sfrak_2} \\ & \le\frac{C_V}{2}\big(\delta\norm{M \alpha}_{\Sfrak_2}^2+\frac{1}{\delta}\norm{\alpha}_{\Sfrak_2}^2\big) .\end{align*}
       By assumption $\alpha \alpha^*+\gamma^2 \leq \gamma $, and thus $\norm{\alpha}_{\Sfrak_2}^2 \leq \tr \gamma = \norm{\gamma}_{\Sfrak_1} $ and
       \[
       \norm{M\alpha}_{\Sfrak_2}^2 = \norm{(M\alpha)^*}_{\Sfrak_2}^2 = \tr M\alpha\alpha^* M \leq \tr M\gamma M.
       \]
       From the estimates above (and adjusting the choice of $\delta$) we conclude that
       \[\frac{\Ecal(\Gamma) + (1+4\norm{\gamma}_{\Sfrak_1})C_\delta \norm{\gamma}_{\Sfrak_1}}{1-\delta (1+4 \norm{\gamma}_{\Sfrak_1})} \geq \norm{M\gamma M}_{\Sfrak_1}.\]
    Finally, by using the symmetry $\alpha^T = -\alpha$ and going to Fourier space, we get
       \[\norm{\alpha}^2_{H^1} = \iint \di p \di q (1+p^2+q^2) \lvert \hat\alpha(p,q)\rvert^2 \leq \norm{M\alpha}_{\Sfrak_2}^2 + \norm{\alpha M}_{\Sfrak_2}^2 = 2\norm{M\alpha}_{\Sfrak_2}^2,\]
       which is estimated as above. 
       This concludes the proof of the first bound.
       
       \medskip
       
       \noindent For the second bound, by the above estimates, choosing $\delta=1$, we obtain
       \[\lvert \Ecal(\Gamma) \rvert \leq \norm{M\gamma M}_{\Sfrak_1}(1+\norm{M\gamma M}_{\Sfrak_1}) \leq (1+\norm{\omega}_\Ycal)^2.\qedhere\]
       

\subsection{Global Well-Posedness (Proof of Theorem \ref{thm:mainwp})}
  Observe that the conservation laws (Lemma \ref{lem:conserved_spectrum_and_quantities}) together with Lemma~\ref{lem:energybounds} 
  imply that the maximal interval of existence for $\Gamma_t$ is $[0,\infty)$.
  
  By energy conservation, the solution lies in $\Ycal$. Since we have conservation of the spectrum of $\Gamma_t$ we have in particular $\gamma_t \geq 0$ for all times $t$. Thus $\norm{\gamma_t}_{\Ycal_1}=\tr((1-\Delta)\gamma_t)$.
  
 We now show that $t\mapsto \gamma_t$ is $\Ycal$-continuous by checking sequential continuity. 
 Consider a sequence of times $t_n \to t_0$ ($n \to \infty$). Knowing that $t\mapsto \gamma_t$ is $\Zcal_1$-continuous 
 we conclude by the Banach--Alaoglu theorem that $\gamma_{t_n} \rightharpoonup \gamma_{t_0}$ in weak-$\ast$ topology of $\Ycal_1$. 
 Recall the Radon--Riesz property of $\sone$ \cite{Grumm}: the only thing that could go wrong is loss of mass, 
 $\tr((1-\Delta)\gamma_{t_0}) < \lim_{n \to \infty} \tr((1-\Delta)\gamma_{t_n})$. 
 To exclude such loss of mass, we write
  \[\tr((1-\Delta)\gamma_{t_n}) = \tr\gamma_{t_n} + \Ecal(\Gamma_{t_n}) - \Ecal_\text{pot}(\Gamma_{t_n}).\]
  The conservation of the energy $\Ecal(\Gamma_t)$ and of the particle number $\tr\gamma_t$ together with the $\Zcal$-continuity of $\Ecal_\text{pot}$ give
  \[\tr((1-\Delta)\gamma_{t_0}) = \lim_{n \to \infty} \tr((1-\Delta)\gamma_{t_n}).\]
  
  This concludes existence and continuity in $\Ycal$ for positive times. 
  By time-reversal symmetry (defined as in Sect.\ \ref{sec:conservationofenergy}), we also obtain the solution for negative times. 


\section*{Acknowledgements}
The authors acknowledge support by ERC Advanced grant 321029 and by VILLUM FONDEN via the QMATH Centre of Excellence (Grant No.\ 10059). The authors would like to thank S\'ebastien Breteaux, Enno Lenzmann, Mathieu Lewin and Jochen Schmid for comments and discussions about well-posedness of the Bogoliubov--de\,Gennes equations.

\bibliography{diracfrenkel}{}
\bibliographystyle{plain}

\end{document}